\documentclass[format=acmsmall, review=false, nonacm]{acmart}
\usepackage{booktabs} %
\usepackage[noend,ruled,linesnumbered]{algorithm2e} %

\SetAlFnt{\small}
\SetAlCapFnt{\small}
\SetAlCapNameFnt{\small}
\SetAlCapHSkip{0pt}
\IncMargin{-\parindent}

\SetKwInput{KwI}{Input}
\SetKwInput{KwO}{Output}
\SetKw{KwR}{return}
\SetKw{KwNo}{no}
\SetKw{KwOr}{or}
\SetKw{KwAnd}{and}

\SetCommentSty{mycommfont}
\SetKwComment{Comment}{$\triangleright$\ }{}

\usepackage{amsmath,graphicx,amsthm,dsfont,mathtools}
\usepackage{color,soul}
\usepackage{xspace}
\usepackage{enumerate}
\usepackage{multicol}
\usepackage{booktabs}
\usepackage{paralist}
\usepackage{graphicx}
\usepackage{latexsym}
\usepackage{subcaption}
\usepackage{xifthen}
\usepackage{multirow}

\usepackage[textsize=tiny,textwidth=1.5cm,linecolor=green!70!black, backgroundcolor=green!10, bordercolor=black]{todonotes}

\usepackage{cleveref}
\crefname{table}{Table}{Tables}
\crefname{figure}{Figure}{Figures}
\crefname{theorem}{Theorem}{Theorems}
\crefname{definition}{Definition}{Definitions}
\crefname{corollary}{Corollary}{Corollaries}
\crefname{observation}{Observation}{Observations}
\crefname{lemma}{Lemma}{Lemmas}
\crefname{example}{Example}{Examples}
\crefname{reduction}{Reduction}{Reductions}
\crefname{construction}{Construction}{Constructions}
\crefname{subsection}{Section}{Sections}
\crefname{section}{Section}{Sections}
\crefname{proposition}{Proposition}{Propositions}
\crefname{algorithm}{Algorithm}{Algorithms}
\crefname{drule}{Rule}{Rules}
\crefname{claim}{Claim}{Claims}
\crefname{clm}{Claim}{Claims}
\crefname{appendix}{Appendix}{Appendix}
\crefname{remark}{Remark}{Remark}

\newtheorem{theorem}{Theorem}
\newtheorem{definition}{Definition}
\newtheorem*{definition*}{Definition}

\newtheorem{corollary}{Corollary}
\newtheorem{clm}{Claim}

\newtheorem{proposition}{Proposition}
\newtheorem{observation}{Observation}
\newtheorem{example}{Example}

\newtheorem{remark}{Remark}
\newtheorem{claim}{Claim}[theorem]
\newenvironment{claimproof}[1]
{\begin{proof}
    \renewcommand{\qedsymbol}{\hfill~(end of the proof of~\cref{#1}~$\diamond$)}}
  {\end{proof}}

\newcommand{\wred}[1]{{\color{red!50!black}{#1}}}
\newcommand{\dg}[1]{{\color{green!50!black}{#1}}}

\newcommand{\RR}{\ensuremath{\mathcal{R}}}
\newcommand{\aaa}{\ensuremath{\mathcal{C}}}
\newcommand{\alt}{\ensuremath{c}}
\newcommand{\vvv}{\ensuremath{\mathcal{V}}}
\newcommand{\ppp}{\ensuremath{\mathcal{P}}}
\newcommand{\app}{\ensuremath{\mathsf{A}}}
\newcommand{\VV}{\ensuremath{\mathsf{V}}}
\newcommand{\kk}{\ensuremath{k}}
\newcommand{\vr}{\ensuremath{\mathsf{\rho}}}
\newcommand{\AV}{\text{\normalfont{AV}}}
\newcommand{\SAV}{\text{\normalfont{SAV}}}
\newcommand{\CC}{\text{\normalfont{CC}}}
\newcommand{\PAV}{\text{\normalfont{PAV}}}
\newcommand{\W}{\ensuremath{W}}

\newcommand{\nn}{\ensuremath{\hat{n}}}
\newcommand{\mm}{\ensuremath{\hat{m}}}

\newcommand{\TUcs}[1]{\ensuremath{\TU_{#1}\text{-core-stable}}}
\newcommand{\NTUcs}[1]{\ensuremath{\NTU_{#1}\text{-core-stable}}}

\newcommand{\TUcore}[1]{\ensuremath{\TU_{#1}\text{-core}}}
\newcommand{\NTUcore}[1]{\ensuremath{\NTU_{#1}\text{-core}}}

\newcommand{\NTUblocks}[1]{\ensuremath{\NTU_{#1}\text{-blocks}}}

\newcommand{\TUblocking}[1]{\ensuremath{\TU_{#1}\text{-blocking}}}
\newcommand{\NTUblocking}[1]{\ensuremath{\NTU_{#1}\text{-blocking}}}

\newcommand{\TUutil}[1]{\ensuremath{\TU_{#1}\text{-}\util}}
\newcommand{\NTUutil}[1]{\ensuremath{\NTU_{#1}\text{-}\util}}
\newcommand{\votone}{\ensuremath{v^1}}
\newcommand{\setaltone}{\ensuremath{\mathsf{s}^1}}
\newcommand{\vottwo}{\ensuremath{v^2}}
\newcommand{\setalttwo}{\ensuremath{\mathsf{s}^2}}

\newcommand{\dummyalt}{\ensuremath{\alt_d}}

\newcommand{\true}{\ensuremath{\mathsf{T}}}
\newcommand{\false}{\ensuremath{\mathsf{F}}}

\usepackage{pifont}
\newcommand{\yes}{\dg{\ding{51}}}
\newcommand{\no}{\wred{\ding{55}}}
\newcommand{\open}{open}
\newcommand{\votsep}{voter-separable}
\newcommand{\altsep}{alternative-separable}
\newcommand{\totsep}{totally separable}
\newcommand{\totsepity}{total separability}
\newcommand{\cval}[1]{\ensuremath{\TU_{#1}\text{-}\chi}}          %
\newcommand{\cvalvr}{\cval{\vr}}%

\DeclareMathOperator*{\argmax}{arg\,max}

\newcommand{\coalsize}{\ensuremath{\frac{|\vvv|}{\kk}}}
\newcommand{\share}{\ensuremath{\lfloor\frac{|\vvv'|\kk}{|\vvv|}\rfloor}}
\newcommand{\shareS}{\ensuremath{\lfloor\frac{|S|\kk}{|\vvv|}\rfloor}}

\newcommand{\myemph}[1]{{\color{green!40!black}\emph{#1}}}
\renewcommand{\myemph}[1]{{\emph{#1}}}
\newcommand{\mypara}[1]{\noindent\textbf{#1}}
\newcommand{\myparait}[1]{\noindent\emph{#1}}

\newcommand{\NP}{\textsf{NP}}
\newcommand{\NPh}{\wred{\NP-h}}
\newcommand{\NPc}{\wred{\NP-c}}
\newcommand{\NPhh}{\NP-hard}
\newcommand{\NPcc}{\NP-complete}

\newcommand{\coNP}{\textsf{coNP}}
\newcommand{\CoNP}{\textsf{CoNP}}

\newcommand{\coNPc}{\wred{\coNP-{c}}}
\newcommand{\coNPhh}{\coNP-hard}
\newcommand{\coNPcc}{\coNP-complete}

\newcommand{\DP}{\textsf{DP}}

\newcommand{\DPc}{\wred{\DP-c}}
\newcommand{\DPhh}{\DP-hard}
\newcommand{\DPcc}{\DP-complete}

\newcommand{\PP}{\textsf{P}}
\newcommand{\poly}{\dg{Poly time}}
\newcommand{\hard}{{Hard:}}

\newcommand{\ThetaTwoP}{\ensuremath{\Theta_2^{\text{P}}}}
\newcommand{\SigmaTwoP}{\ensuremath{\Sigma_2^{\text{P}}}}

\newcommand{\mw}{MW}
\newcommand{\MWGame}{\mw\ game}

\newcommand{\TU}{{\textsf{TU}}}
\newcommand{\NTU}{\textsf{NTU}}
\newcommand{\alloc}{\ensuremath{\boldsymbol{\alpha}}}

\newcommand{\util}{\ensuremath{\mu}}

\newcommand{\wvector}{\ensuremath{\boldsymbol{\lambda}}}
\newcommand{\xx}{\ensuremath{\boldsymbol{x}}}

\newcommand{\weight}{\ensuremath{\lambda}}

\newcommand{\score}[1]{\ensuremath{\mathsf{sc}_{#1}}}

\newcommand{\scoreTU}[1]{\score{#1}}
\newcommand{\scoreNTU}[1]{\score{#1}}
\newcommand{\decprob}[3]{%

  \smallskip   
  \begin{minipage}{0.94\linewidth}%
    \textsc{#1}\\
    \textbf{Input:} #2\\
    \textbf{Question:} #3
  \end{minipage}%
  \smallskip

  \par
}

\newcommand{\BicliqueL}{\textsc{Balanced Complete Bipartite Subgraph}}
\newcommand{\Biclique}{\textsc{Biclique}}

\newcommand{\MWGCore}[3]{\textsc{#2\text{-}#1\text{-}#3}}

\newcommand{\CE}{\textsc{Core-NonEmpty}}
\newcommand{\CV}{\textsc{Core-Membership}}

\newcommand{\MWGTUCexist}[1]{\MWGCore{\TU}{#1}{\CE}}

\newcommand{\MWGNTUCexist}[1]{\MWGCore{\NTU}{#1}{\CE}}

\newcommand{\MWGTUCverif}[1]{\MWGCore{\TU}{#1}{\CV}}

\newcommand{\MWGNTUCverif}[1]{\MWGCore{\NTU}{#1}{\CV}}

\newcommand{\XTCDP}{\textsc{RX3C-UNRX3C}}

\usepackage{thm-restate}

\usepackage{etoolbox} %

\newcommand{\mytitle}{Multi-Winner Voting Games in \TU\ and \NTU: When is the Core Always Non-Empty?}

\setcitestyle{acmnumeric}

\title[Multi-Winner Voting Games]{\mytitle}

\author{Jiehua Chen}
\author{Christian Hatschka}
\authorsaddresses{}
\begin{abstract}
  Multi-winner approval voting selects a size-$k$ committee that aggregates voters' approval preferences over a set of alternatives.
  A central question is coalitional stability: No coalition should be able to pick a committee---of size at most its proportional share---under which every coalition member has strictly more approved alternatives. 
  This notion, introduced by Aziz et al.\ (2017) as \emph{core-stable committees}, is naturally interpreted as a core notion with non-transferable utility. 
  Existence in full generality remains open. %

  We introduce \emph{multi-winner voting games}, a cooperative-game framework that unifies prior work and supports a systematic study
  of two utility-transfer models across different voting rules. 
  Players are voters.
  Each coalition has a proportional seat cap and may only propose \emph{admissible} committees up to that size.
  Fixing a multi-winner rule, each admissible committee induces a utility vector for the members of the coalition.

  In the \emph{transferable utility} (\TU) model, a coalition may redistribute the total utility of an admissible committee among its members.
  In the \emph{non-transferable utility} (\NTU) model, a coalition may only use utility vectors that are realized directly by some admissible committee.
  The \emph{core} consists of utility vectors feasible for the grand coalition that are not \emph{blocked} by any coalition.
  A coalition is \emph{blocking} if it can propose an admissible committee that makes all its members strictly better off, directly in \NTU\ and after redistribution in \TU.
  When instantiated with the standard \PAV/approval utility, the \NTU-core is equivalent to the core-stable committee concept studied in prior work.
  To our knowledge, the \TU-core for multi-winner voting has not been previously studied.

  We analyze core existence and computation for four prominent rules: Approval Voting (\AV), Satisfaction Approval Voting (\SAV), Chamberlin--Courant (\CC), Proportional Approval Voting (\PAV). 
  For \AV\ and \SAV, we show that the \TU-core is always non-empty. We also give a polynomial-time algorithm that exploits the linearity of these rules to compute a utility vector in the core. We show that the algorithm applies to a larger class of voting rules that even includes some voting rules for linear preferences.
  For \CC\ and \PAV, the \TU-core can be empty and finding a core utility vector is computationally hard.
  Under \NTU, the games induced by \AV, \SAV, and \PAV\ are core-equivalent via a monotone transformation of utilities, so they capture the same stability object.
  For \CC, we prove that the \NTU-core is always non-empty and give a simple greedy algorithm that computes a utility vector in the \NTU-core.
  Finally, we show that \emph{core membership}, deciding whether a given utility vector lies in the core, is computationally hard in most settings.

\end{abstract}

\begin{document}

\begin{titlepage}

  \maketitle

  \vspace{1cm}
  \setcounter{tocdepth}{1} %
  \tableofcontents

\end{titlepage}

\allowdisplaybreaks
\section{Introduction}\label{sec:intro}

In multi-winner approval voting, we are given a finite set~$\aaa$ of alternatives and a finite set~$\vvv$ of voters.
Each voter~$v_i \in \vvv$ is associated with an approval set~$\app_i \subseteq \aaa$; the alternatives in~$\app_i$ are approved by $v_i$ and the remaining alternatives are not approved.
Given a target size~$\kk$, the goal is to select a committee of exactly~$\kk$ alternatives that aggregates voters' approval preferences according to a specified voting rule.

A widely studied class consists of \emph{score-based} multi-winner rules.
Under such a rule, each voter assigns a score to each committee, and a size-$\kk$ \emph{winning} committee maximizes the total score (summed over voters).
Four prominent examples are Approval Voting~(\AV), Satisfaction Approval Voting~(\SAV), Chamberlin--Courant~(\CC), and Proportional Approval Voting~(\PAV).
Under \AV, the score of~$v_i$ for a committee is the number of alternatives in the committee that $v_i$ approves of. 
Under \SAV, the score of~$v_i$ for a committee is the number of alternatives in the committee that $v_i$ approves of divided by the total number of alternatives $v_i$ approves of (the score is $0$ if no alternatives are approved). 
Under \CC, the score of $v_i$~is binary: it is one if the committee contains at least one alternative in~$\app_i$, and zero otherwise.
Under \PAV, the score of $v_i$~depends on the number of approved alternatives in the committee via the harmonic sum. 

Much of the multi-winner literature evaluates rules via \emph{representativeness} and \emph{proportionality} axioms.
For example, \CC\ and \PAV\ satisfy \myemph{justified representation} (JR)~\cite{AzizBCEFW17}.
Informally, JR says that any group large enough to ``deserve'' one seat and agreeing on an alternative must contain at least one voter
who approves a selected alternative.
\PAV\ even satisfies the stronger axiom \myemph{extended justified representation} (EJR). %
Representation axioms such as JR and EJR specify \emph{what} proportionality guarantees a committee should satisfy.
In this work, we focus instead on \emph{stability}, which is strictly stronger than JR/EJR:
an entitled coalition should not be able to switch to a proportional committee that gives \emph{every} member strictly higher utility, even when the coalition's gains come from covering different approved alternatives rather than a single common one.

A standard way to formalize such coalitional stability under proportional entitlements is via the \myemph{core} for committee selection, introduced by \citet{AzizBCEFW17}.
Intuitively, a coalition~$\vvv' \subseteq \vvv$ is entitled to propose a committee of size proportional to its share. 
A size-$\kk$ committee~$\W$ is \myemph{core-stable} if there is no coalition~$\vvv'$ and no committee~$\W'$ with $|\W'|\le \share$ such that every voter in~$\vvv'$ strictly improves her \AV-score\footnote{\citet{AzizBCEFW17} defined the core using \PAV-scores; as discussed in \cref{obs:AVPAVequiv}, the resulting core coincides with the \AV-score (and \SAV-score) core.
  We use \AV-scores because they are standard in subsequent work and simpler to compute.\label{fn:core-AV-PAV}} when moving from~$\W$ to~$\W'$.
This yields a cooperative-game interpretation with non-transferable utilities (\NTU), where proportional entitlements restrict feasible deviations.
A basic open question posed by \citet{AzizBCEFW17} is whether this core is always non-empty.

In this work, we generalize this committee-core viewpoint by defining \myemph{multi-winner approval voting games} under both non-transferable~(\NTU) and transferable utilities~(\TU).
The voters are the players, and a coalition~$\vvv' \subseteq \vvv$ may claim any \myemph{admissible} committee~$\W' \subseteq \aaa$ with $|\W'|\le \share$.
Fixing a voting rule, each admissible committee induces voter utilities according to the rule, and we study the resulting cooperative-game core as a set of \emph{core-stable} utility vectors.

Under the \TU\ model, only a coalition's \emph{total} induced utility matters, and it may be redistributed arbitrarily among its members.
A utility vector is \TU-core-stable if it is induced by some size-$\kk$ committee and no coalition~$\vvv'$ can block it,
i.e., $\vvv'$ should not be able to claim an admissible committee~$\W'$ whose induced \emph{total} score for~$\vvv'$ exceeds the sum of utilities assigned to the voters in~$\vvv'$.

Under the \NTU\ model, utilities are not transferable: A coalition can only realize per-voter utilities that arise directly from some admissible committee.
A utility vector is \myemph{feasible} if it equals the per-voter score vector induced by some size-$\kk$ committee.
The \NTU-core consists of all feasible utility vectors that are not blocked by any coalition~$\vvv'$, i.e., there is no admissible committee~$\W'$ such that every voter in~$\vvv'$ obtains a strictly larger induced score from~$\W'$ than her utility.
Since every size-$\kk$ committee induces a feasible utility vector, the \NTU-core under \AV\ and \SAV\ coincides with the committee core of \citet{AzizBCEFW17}.

In many environments, proportional `entitlements' are not rigid per-voter guarantees but coalition-level resources that can be reallocated internally after the outcome is chosen.
Interpreting utility as a proxy for surplus (e.g., influence, access, funding, or agenda benefits) makes \TU\ natural: A coalition that can coordinate on a deviation can also write side agreements that compensate members who would otherwise lose from the deviation.
In this interpretation, \TU-core stability is a no-profitable-renegotiation condition subject to proportional seat caps.

To illustrate this distinction, we give a small example where winning committees need not be core-stable, and that allowing transfers (\TU) can restore stability.
\begin{example}\label{ex1}
  Let $\aaa=\{\alt_1,\alt_2,\alt_3,\alt_4,\alt_5\}$ and $\vvv=\{v_1,\dots,v_6\}$ with
  $\app_1=\app_2=\app_3=\app_4=\{\alt_1,\alt_2,\alt_3\}$ and $\app_5=\app_6=\{\alt_4,\alt_5\}$.
  Let $\kk=3$.
  Then $\W=\{\alt_1,\alt_2,\alt_3\}$ is the unique \AV-winning committee, with total \AV-score~$12$,
  inducing utilities $(3,3,3,3,0,0)$.
  However, it is not core-stable (in the \NTU\ sense): the coalition~$\{v_5,v_6\}$ can propose~$\{\alt_4\}$ (or $\{\alt_5\}$) and strictly improve both members.
  Thus every core-stable committee must contain~$\alt_4$ or $\alt_5$ and two of $\{\alt_1,\alt_2,\alt_3\}$, inducing $(2,2,2,2,1,1)$.
  In this instance, no \AV-winning committee is core-stable (under \NTU). The same reasoning leads to no \SAV-winning committee being core-stable (under \NTU), as $\W=\{\alt_1,\alt_2,\alt_3\}$ is the unique \SAV-winning committee.
  This phenomenon also occurs for \PAV; see, e.g., \citet[Section~5.2]{AzizBCEFW17}.
  For \CC, any size-$3$ committee containing at least one of $\{\alt_4,\alt_5\}$ (and hence at least one of~$\{\alt_1, \alt_2, \alt_3\}$) is winning and induces an all-$1$ vector, which is certainly in the core. 
  In fact, we will show that every JR-satisfying committee induces a core-stable utility vector; see \cref{prop:CCJR}.

  In the \TU\ model, an equal split of the overall \AV-score $12$ yields $(2,2,2,2,2,2)$, which lies in the \TU-core for \AV. Similarly, an equal split of the overall \SAV-score $4$ yields $(\frac{2}{3},\frac{2}{3},\frac{2}{3},\frac{2}{3},\frac{2}{3},\frac{2}{3})$, which lies in the \TU-core for \SAV.
  For \PAV, a \TU-core utility vector assigns $1.5$ to the first four voters and $1$ to the remaining two.
  For \CC, assigning $1$ to every voter yields a \TU-core utility function.
\end{example}

\paragraph*{Our contributions.}
We study the core of multi-winner approval voting games under both \NTU\ and \TU\ for four prominent score-based rules: \AV, \SAV, \CC, and \PAV.
Our results compare core \emph{existence}, \emph{structure}, and \emph{computability} across the two models and four rules.
\mypara{Structural and constructive results.}
\begin{inparaenum}[(1)]
  \item \textbf{\TU-\AV/\SAV:} We prove that the \TU-core is always non-empty and give a polynomial-time algorithm (cf.\ \cref{alg:TU-core-totsep}) to compute a core utility vector. 
  The algorithm exploits the additivity of these rules by decomposing the total score into a coalition-independent baseline tied to the weakest winning alternative and an alternative-specific surplus that is payable only by voters that gain utility from these alternatives. 
  We also provide an alternative non-emptiness proof via the Bondareva--Shapley theorem~\cite{bondareva1963,shapley1967}. Both results hold for a broad class of voting rules in which each alternative contributes independently to each voter's score, and the approach generalizes even to rules over linear preferences such as the multi-winner Borda rule.
  \item \textbf{\TU-\CC/\PAV:} We show that the \TU-core can be empty for \CC\ and \PAV.
  Thus, transferability alone does not guarantee stability.
  Additionally, for each $\vr \in\{\AV,\SAV,\PAV,\CC\}$, we show that the Shapley value need not yield an element in the \TU-core, even when the \TU-core is non-empty.
  \item \textbf{\NTU-\CC:} We show that \NTU-core under \CC\ is ``equivalent'' to justified representation~(JR).
  Combining this equivalence with a known greedy JR procedure (\cref{alg:NTU-core-CC}) yields a polynomial-time method to compute an \NTU-core utility vector.
  \item \textbf{\NTU-\AV/\SAV/\PAV:} We make explicit a simple correspondence between the \NTU-cores induced by \AV, \SAV,and \PAV\ (cf.\ \cref{obs:AVPAVequiv}); core non-emptiness for these games remains open though. 
\end{inparaenum}

Taken together, these results show that the two axes of our framework---\TU\ versus \NTU\ and the choice of scoring rule---interact non-trivially:
Neither axis uniformly determines core non-emptiness. \AV\ and \SAV\ are well-behaved under \TU\ but open under \NTU; \CC\ is the reverse.
\PAV\ offers no immediate guarantees in either model.
The full $2\times 4$ case distinction is thus necessary to map the stability landscape.
Moreover, the pattern is not arbitrary:
Under \TU, feasibility ties core elements to winning committees, so rule-specific optimization structure directly governs stability---\AV\ and \SAV's linearity yields universal existence, while the nonlinearities of \CC\ and \PAV\ permit emptiness. The same mechanism as for \AV\ and \SAV\ extends to any totally separable rule (cf. \cref{def:totsep}) — including rules over linear preferences such as the multi-winner Borda.
Under \NTU, core-stable outcomes need not arise from winning committees (cf.\ \cref{ex1}),
decoupling stability from the rule.
The one positive \NTU\ result---for \CC---relies  on an equivalence with JR, a property defined independently of the scoring rules. 
This suggests that the \TU\ model is the natural setting in which the choice of voting rule substantively shapes coalitional stability.

\mypara{Computational results.}
We study \CE\ and \CV\ for both \TU\ and \NTU-games, and establish tight complexity classifications in most cases (summarized in \cref{table:summary}).

\begin{table}
  \centering
  \caption{Overview of structural and computational results.
    Each cell reports whether the core is guaranteed to be non-empty and the complexity of verifying whether a given utility vector lies in the core.
    Constructive polynomial-time algorithms exist only for \TU–\AV, \TU-\SAV\ (\cref{alg:TU-core-totsep}), and \NTU–\CC\ (\cref{alg:NTU-core-CC}, via the equivalence with JR). Entries marked ``open'' correspond to the central open question of Aziz et al.~\cite{AzizBCEFW17}; by \cref{obs:AVPAVequiv}, the \NTU\ cases for \AV, \SAV, and \PAV\ are equivalent.
    [TN] refers to Theorem N, [PN] to Proposition N.
    Upper bounds marked ``\poly'' have constructive algorithms. 
    The complexity upper bounds (\DP\ and \ThetaTwoP) follow from \cref{thm:complexity-upperbounds}; see \cref{def:BeyondNP}.}

  {
    \begin{tabular}{@{}c c@{\;\;}c c c@{\,}c c c@{\;}c c c@{\;}c@{}} %
      \toprule
& \multicolumn{2}{c}{\TU-Core} & & \multicolumn{2}{c}{\TU-Core} & &  \multicolumn{2}{c}{\NTU-Core} & &  \multicolumn{2}{c}{\NTU-Core} \\
    Rule &  \multicolumn{2}{c}{Non-Emptiness} & &  \multicolumn{2}{c}{Membership} && \multicolumn{2}{c}{Non-Emptiness} & &  \multicolumn{2}{c}{Membership}\\ \midrule
      \multirow{2}{*}{\AV} & Always (\yes) & [T\ref{thm:AV-TU-Core-Non-Empty}] && \hard\  &  & & \multirow{2}{*}{\open} & \multirow{2}{*}{\open} & & \hard &   \\
 & \poly   &  [T\ref{thm:TU-AV-poly}] && \coNPc  & [T\ref{thm:TU-CV-AV}] & & & & & \DPc &   [T\ref{thm:NTU-CV-AV-PAV}]  \\\midrule
 \multirow{2}{*}{\SAV} & Always (\yes) & [T\ref{thm:AV-TU-Core-Non-Empty}] && \hard\  &  & & \multirow{2}{*}{\open} & \multirow{2}{*}{\open} & & \hard &   \\
 & \poly   &  [T\ref{thm:TU-AV-poly}] && \coNPc  & [T\ref{thm:TU-CV-SAV}] & & & & & \DPc &   [T\ref{thm:NTU-CV-AV-PAV}]  \\\midrule
       \multirow{2}{*}{\CC} & Can fail (\no)   & [P\ref{prop:PAVCCTUCNE}] && \hard  &  & & Always~(\yes) &  [T\ref{thm:CCNTUCE}] & & {\hard} &  \\
     &  \NPh, in $\ThetaTwoP$   &  [T\ref{thm:TU-CE}] && \DPc  &  [T\ref{thm:TU-CV-CC}] & &  \poly &  [T\ref{thm:CCNTUCE}] & &\NPc &  [T\ref{thm:NTU-CV-CC}]  \\\midrule
      \multirow{2}{*}{\PAV}    & Can fail (\no)  & [P\ref{prop:PAVCCTUCNE}] && \hard &  & & \multirow{2}{*}{\open} & \multirow{2}{*}{\open} & & \hard &   \\
       & \NPh, in $\ThetaTwoP$   & [T\ref{thm:TU-CE}] && \NPh, in \DP\  &  [P\ref{prop:TU-CV-PAV}] & &  & & & \DPc &  [T\ref{thm:NTU-CV-AV-PAV}]  \\
    \bottomrule
    \end{tabular}}\label{table:summary}
\end{table}

\paragraph*{Related work.}
General background on cooperative games and the core can be found in the monograph of Peleg and Sudh{\"o}lter~\cite{Peleg1983Introduction} (covering both \TU\ and \NTU) and in the book of Chalkiadakis et al.~\cite{ChalkiadakisCompAspects} (emphasizing computational aspects, primarily for \TU-games).
Our setting is an algorithmic cooperative game in which feasible deviations are constrained by coalition-specific proportional entitlements, and utilities are induced by a multi-winner scoring rule.

A broad line of work on algorithmic cooperative games studies \emph{combinatorial optimization games}~\cite{BHH2001ORgames,Deng2009COG}, where coalition values are defined by optimizing over a combinatorial structure induced by the coalition.
Deng et al.~\cite{deng1999algorithmic} and subsequent work establish core existence and complexity results for such TU games.
Our \TU\ model is similar in spirit in that coalition power is captured by an optimization objective, but differs in the feasibility constraint (committees bounded by~$\share$ rather than substructures) and in the utility source (scores induced by multi-winner rules rather than a fixed combinatorial objective).
Weighted voting games~\cite{ChalkiadakisCompAspects} constitute another classical family; Elkind et al.~\cite{Elkindwvg09} study the core there and show that non-emptiness is decidable in polynomial time, in contrast to the hardness phenomena we obtain for several of our settings.

The concept of core-stable committees under the \PAV\ rule was introduced by \citet{AzizBCEFW17}, who interpret proportional entitlements as constraints on deviations and pose non-emptiness as a basic open problem.
This stability concept corresponds to our \NTU-core under \PAV, equivalently, under \AV\ or \SAV; see \cref{obs:AVPAVequiv}. 
Subsequent work~\cite{JiangMW20,PierczynskiS22,LacknerSkownron23ABC,BrillGPSW24,Peters25} define core with respect to \AV\ since the two resulting core concepts coincide. %
This line of work has since focused on
\begin{inparaenum}
  \item approximations to core stability~\cite{JiangMW20},
  \item auditing/verification problems~\cite{Munagala22}, and
  \item identifying domains or parameter regimes where the core is guaranteed to exist and can be found efficiently~\cite{ChengJMW20,PierczynskiS22,Peters25,CharikarLRV025}.
\end{inparaenum}
Our contribution is orthogonal: rather than restricting preferences or moving to randomized outcomes, we broaden the model by allowing both \NTU\ and \TU\ interpretations of the same underlying rule-induced utilities and analyze how existence and computational properties change across rules and utility models.
In particular, we %
provide existence and algorithmic results for \AV\ and \SAV\ in the \TU\ model and for \CC\ in the \NTU\ model.
Recently, Brill et al.~\cite{BrillGPSW24} show that it is \coNPcc\ to determine whether a given committee is core-stable under \NTU-\AV. Per \cref{obs:AVPAVequiv}, this also holds for \NTU-\SAV\ and \NTU-\PAV.
We study the related but harder problem of determining \NTU-core membership, which is beyond \NP; more precisely it is \DPcc~(cf.~\cref{thm:NTU-CV-AV-PAV}).

Related core notions have also been studied in participatory budgeting (PB), which generalizes multi-winner selection by allocating a budget across projects.
Fain et al.~\cite{FainGM16} initiate the study of (approximate) core outcomes in continuous PB and relate them to market equilibria, while discrete PB variants have been studied algorithmically and from an approximation perspective~\cite{FainM018,JiangMW20,Munagala22,SongThanh25}.
Recent work shows that core non-emptiness can fail for broad classes of PB utilities~\cite{Maly25PBcoreempty}.
While PB and our setting share the key idea of proportional resource entitlements for coalitions, the feasible outcomes and utility aggregation differ: PB allocates divisible (or budgeted) resources, whereas we study discrete committees and utilities induced by multi-winner scoring rules.

Beyond cooperative deviations by voter coalitions, other game-theoretic models for multi-winner selection treat strategic behavior of candidates or projects.
Obraztsova et al.~\cite{ObraztsovaPEG20} introduce multi-winner candidacy games, and Faliszewski et al.~\cite{FaliszewskiPBgames} study project submission games; both model equilibrium outcomes of strategic entry rather than coalition deviations in a cooperative game.
Haret et al.~\cite{HaretK0S24}  propose a non-cooperative budgeting game in which outcomes arise from budget-allocation dynamics and prove core non-emptiness for restricted (tree-structured) domains. In contrast,  we directly formulate \TU\ and \NTU\ cooperative games over committees, establishing, among others, non-emptiness of the \TU-\AV/\SAV-core and \NTU-\CC-core for unrestricted domains.

\paragraph*{Paper outline.}

In \cref{sec:prelim}, we recall definitions from multi-winner approval voting.
There, we also introduce our multi-winner voting games in \TU\ and \NTU\ and their related computational problems.
In \cref{sec:structural}, we present our main structural results.
In \cref{sec:complexity}, we establish complexity classifications for \CE\ and \CV. 
We conclude in \cref{sec:conclude} with future research directions.

\section{Preliminaries}\label{sec:prelim}%
Given a non-negative integer~$t\in \mathds{N}$, let $[t]$ denote the set~$\{1,\ldots, t\}$.

\subsection{Multi-winner approval voting}
An \myemph{approval preference profile} (in short \myemph{profile}) is a triple~$\ppp=(\aaa, \vvv, \RR)$,
where $\aaa$ denotes a set~$\aaa$ of $m$ alternatives with $\aaa=\{\alt_1,\dots,\alt_m\}$,
$\vvv$ a set of $n$ voters with $\vvv=\{v_1,\dots, v_n\}$,
and $\RR$ a collection~$\RR=(\app_1,\dots, \app_n)$ of subsets (\myemph{approval ballots}) of $\aaa$,
with $\app_i$ consisting of the alternatives that voter~$v_i$ approves of, $i\in[n]$. We also write $\app(v)$ if the index of the voter is omitted.

For each alternative~$\alt_j\in \aaa$, we use \myemph{$\VV(\alt_j)$} to denote the set of voters who approve of~$\alt_j$.
A subset~$\vvv'\subseteq \vvv$ of voters is called a \myemph{coalition}.

An instance of \myemph{multi-winner approval voting} (in short \mw) is a pair~$I=(\ppp,\kk)$,
where $\ppp$ denotes the approval preference profile and $\kk\in \mathds{N}$ denotes the number of alternatives from~$\aaa$ that shall be selected.
The task is to select a \myemph{committee} (i.e., a subset of alternatives) of $\kk$ alternatives from~$\aaa$.
To complete this task, we use voting rules that are based on scores. 
Roughly speaking, a score-based voting rule~$\vr$ assigns to each committee~$\W$ a non-negative value, called \myemph{score}, which measures the satisfaction of the voters towards~$\W$.
Then, $\vr$ selects a size-$\kk$ committee among all size-$\kk$ committees that maximizes the score. 

We consider four typical score-based voting rules: Approval Voting~(\AV), Satisfaction Approval Voting~(\SAV), Chamberlin--Courant~(\CC),
and Proportional Approval Voting~(\PAV).
For each~$\vr\in \{\AV,\SAV,\CC,\PAV\}$, the score of a committee is the sum of the voters' individual scores.
Formally, given a coalition $\vvv'\subseteq \vvv$ and a committee $\W\subseteq \aaa$, define:
\begin{description}
  \item[Approval Voting (\AV):]
  \myemph{$\score{\AV}(\vvv',\W)$} $= \sum_{v_i\in \vvv'} |\app_i\cap \W|$.
    \item[Satisfaction Approval Voting (\SAV):]
  \myemph{$\score{\SAV}(\vvv',\W)$} $= \begin{cases}
  	\sum_{v_i\in \vvv'} \frac{|\app_i\cap \W|}{|\app_i|} & \app_i\neq\emptyset \\
  	0 & \, \text{else}
  \end{cases} $.
  \item[Chamberlin--Courant (\CC):]
  \myemph{$\score{\CC}(\vvv',\W)$} $= \sum_{v_i\in \vvv'} \min\bigl(|\app_i\cap \W|,1\bigr)$.
  \item[Proportional Approval Voting (\PAV):]
  \myemph{$\score{\PAV}(\vvv',\W)$} $= \sum_{v_i\in \vvv'} \sum_{z=1}^{|\app_i\cap \W|}\frac{1}{z}$.
\end{description}
For readability, we write $\score{\vr}(v_i,\W)$ instead of $\score{\vr}(\vvv',\W)$ when $\vvv'=\{v_i\}$.

Finally, for each voter coalition~$\vvv'\subseteq \vvv$, 
we call a committee~$\W'\subseteq \aaa$ \myemph{share-admissible} (in short \myemph{admissible}) for~$\vvv'$ if $|\W'|$ is at most proportional to the share of $\vvv'$, i.e., $|\W'| \le \share$. 
Then, let 
\begin{align}\label{def:score-V'}
  \score{\vr}(\vvv')
  \;\coloneqq\;
  \max_{\W'\subseteq \aaa\;:\; |\W'|\le \share}
  \score{\vr}(\vvv',\W')
\end{align}
denote the maximum score that an admissible committee can achieve for~$\vvv'$.
Moreover, let \myemph{$\vr(\ppp,\kk)$} $ = \displaystyle\argmax_{\W\subseteq \aaa\;:\; |\W|=\kk} \score{\vr}(\vvv,\W)$
denote the set of all winning committees under rule $\vr$.

For more details on multi-winner approval voting we refer to a recent book by Lackner and Skowron~\cite{LacknerSkownron23ABC}.

\subsection{Multi-winner voting games}
Let $\vr$ denote a score-based voting rule.
A \myemph{$\vr$-multi-winner voting game} consists of an \mw-instance $(\ppp,\kk)$ with voter set $\vvv$ in $\ppp$,
where the voters are the players.
The utility of each voter coalition is derived from the maximum score that an admissible committee can achieve for it under $\vr$.
The goal is to find a \myemph{utility function~$\alloc \colon \vvv\to \mathds{R}_{\geq0}$} (aka.\ \myemph{utility vector})
that assigns to each voter~$v_i$ a utility value~$\alloc(v_i)$ such that no subset of voters has an incentive to deviate from the solution.

We propose two types of multi-winner voting games: one with \myemph{transferable utilities} (\TU) and the other with
\myemph{non-transferable utilities}~(\NTU).
In an \NTU-game, a coalition can only realize utilities that arise from some committee (each voter receives her own score).
In a \TU-game, only the total coalition utility matters and can be redistributed arbitrarily among coalition members.

\begin{definition}[\MWGame\ in \TU\ and \NTU]\label{defi:games}
  The input of an \MWGame\ is an \mw-instance~$(\ppp,\kk)$.

  \mypara{The \TU-game} induced by $(\ppp,\kk)$ and $\vr$ is the characteristic-function game $(\vvv,\cvalvr)$ where,
  for every coalition~$\vvv'\subseteq \vvv$,
  \begin{align*} %
    \myemph{\cvalvr(\vvv')}
    \;\coloneq\; %
    \max_{\W'\subseteq \aaa\;:\; |\W'|\le \share}
    \score{\vr}(\vvv',\W'); \text{ note that }\cvalvr(\vvv') = \score{\vr}(\vvv') \text{ as defined in } \eqref{def:score-V'}. 
  \end{align*}

  \mypara{The \NTU-game} induced by~$(\ppp,\kk)$ and $\vr$ consists of for each coalition~$\vvv'\subseteq \vvv$,
  the set of \myemph{feasible} utility functions with 
  \[
    \myemph{\NTUutil{\vr}(\vvv')}
    \coloneq
    \Big\{
    \alloc'\colon \vvv'\to \mathds{R}_{\ge 0}\ \Big|\
    \exists \W'\subseteq \aaa \text{ with }|\W'|\le \share
    \text{ s.t. } \alloc'(v)=\score{\vr}(v,\W')\ \forall v\in \vvv'
    \Big\}.
  \]

\end{definition}

\begin{remark}[TU games as a subclass of NTU games]\label{remark:TU-sub-NTU}
  It is well-known that \NTU-games generalize \TU-games~\cite{ChalkiadakisCompAspects}.
  In particular, every \TU-game $(\vvv,\cvalvr)$ can be represented as an \NTU-game by defining, for each coalition $\vvv'\subseteq \vvv$,
  the feasible set
  \[
    \TUutil{\vr}(\vvv') \;=\; \Big\{\alloc'\colon \vvv'\to\mathds{R}_{\ge 0}\ \Big|\ \sum_{v\in \vvv'} \alloc'(v)\le \cvalvr(\vvv')\Big\}.
  \]
\end{remark}

Next, we define blocking coalitions and cores for the two games. 
\begin{definition}[Blocking, witness, and core]\label{defi:blocking-core}
  Let~$(\ppp,\kk)$ be an \mw-instance with $\ppp=(\aaa,\vvv)$,
  $\alloc\colon \vvv\to \mathds{R}_{\ge 0}$ a utility function,
  and $\vvv'\subseteq \vvv$ a voter coalition.

  $\vvv'$ is called \myemph{$\TUblocking{\vr}$} for $\alloc$ if
  \begin{align}\label{eq:TU-blocking-sum}
    \sum_{v\in \vvv'} \alloc(v) \;<\; \cvalvr(\vvv').
  \end{align}

  $\vvv'$ is called \myemph{$\NTUblocking{\vr}$} for $\alloc$ if there exists
  a utility function $\alloc'\in \NTUutil{\vr}(\vvv')$ such that
  \begin{align}\label{eq:blocking-componentwise}
    \alloc(v) < \alloc'(v)\quad \text{for all } v\in \vvv'.
  \end{align}
  \noindent  A \myemph{blocking witness} for $\vvv'$ (against $\alloc$) is additional data that certifies that $\vvv'$ is blocking.
  \begin{compactitem}[--]
    \item 
    {A committee~$\W'$ is a \myemph{witness} for $\vvv'$ to be~\myemph{$\TUblocking{\vr}$} for~$\alloc$ if
      $|\W'|\le \share$ and
      $\sum_{v\in \vvv'}\alloc(v)<\score{\vr}(\vvv',\W')$. Note that this implies that~\eqref{eq:TU-blocking-sum} holds.}
    \item
    {A committee~$\W'$ is a \myemph{witness} for $\vvv'$  to be~\myemph{$\NTUblocking{\vr}$} for~$\alloc$  if
      $|\W'|\le \share$ and 
      $\alloc(v) < \score{\vr}(v,\W')$ for all $v\in \vvv'$. Note that this implies that~\eqref{eq:blocking-componentwise} holds.}
  \end{compactitem}

  We say that $\alloc$ is \myemph{$\TUcs{\vr}$} if $\sum_{v\in \vvv}\alloc(v)=\cvalvr(\vvv)$, and
  no coalition is $\TUblocking{\vr}$ for~$\alloc$.
  It is \myemph{$\NTUcs{\vr}$} if $\alloc \in \NTUutil{\vr}(\vvv)$, and no coalition is $\NTUblocking{\vr}$ for~$\alloc$.

  The \myemph{$\NTUcore{\vr}$} (resp.\ $\TUcore{\vr}$) is the set of all {$\NTUcs{\vr}$} (resp.\ {$\TUcs{\vr}$}) utility functions.
\end{definition}

\begin{remark}
  The \NTU-game of \cref{defi:games} should not be confused with the \NTU\ representation of the \TU-game (\cref{remark:TU-sub-NTU}).
  In the \TU-representation, any non-negative utility vector summing to at most $\cvalvr(\vvv')$ is feasible for a coalition~$\vvv'$,
  whereas in the \NTU-game only utility vectors directly realized by some admissible committee are feasible.
  In particular, $\NTUutil{\vr}(\vvv') \subset \TUutil{\vr}(\vvv')$ in general, and core non-emptiness in one model does not imply it in the other.
\end{remark}
  
We illustrate the \TU\ and \NTU-cores on two small instances (with $k=2$) for rules $\vr\in\{\AV,\SAV,\PAV,\CC\}$; here we view the utility function $\alloc\colon \vvv\to\mathds{R}_{\ge 0}$ as a vector.
\begin{example}[\TU]\label{ex:TUgames}
  Let $\aaa=\{\alt_1,\alt_2,\alt_3\}$, $\vvv=\{v_1,v_2,v_3,v_4\}$, and
  \[
    \app_1=\app_2=\{\alt_1\},\ 
    \app_3=\{\alt_1, \alt_2\},\ 
    \app_4=\{\alt_2, \alt_3\}
  \]
  with $\kk=2$.
  In the \TU\ model, $\alloc$ is core-stable if and only if it is feasible
  $\bigl(\sum_{i=1}^4 \alloc(v_i)=\max_{|\W|=2}\scoreTU{\vr}(\vvv,\W)\bigr)$
  and satisfies the coalition inequalities
  $\sum_{v_i\in \vvv'}\alloc_i \ge \max_{|\W'|\le \lfloor 2|\vvv'|/4\rfloor}\score{\vr}(\vvv',\W')$ for all $\vvv'\subseteq\vvv$.
  Since singleton coalitions have seat cap~$0$, the binding constraints come from
  size-$2$ and -$3$ coalitions (seat cap~$1$).
  For \AV, \CC, and \PAV, these reduce to the same set of coalition inequalities on
  $(\alloc(v_i))_{i\in[4]}$; only the feasibility for the grand coalition differs. For \SAV\ we get a separate set of inequalities.
  The coalition inequalities for \AV, \CC, and \PAV\ are
  \begin{alignat*}{3}
\alloc(v_i) \geq~ 0, \text{ for all } i \in [4],\quad  
\alloc(v_i)+\alloc(v_j)\geq~& 2,  \text{ for all } i, j \in [3] \text{ with } i \neq j, \nonumber\\
\alloc(v_1)+\alloc(v_4) \geq 1, ~~ \alloc(v_2)+\alloc(v_4) \geq~ 1,~~ \alloc(v_3)+\alloc(v_4) \geq~& 2, \quad
\alloc(v_1)+\alloc(v_2)+\alloc(v_3)\geq3.
\end{alignat*}

\AV: $\W_1=\{\alt_1,\alt_2\}$ is the unique winning committee with score~$5$,
so $\sum_{i\in [4]}\alloc_i = 5$. 
The $\TUcore{\AV}$ is
$\alloc=(1+\epsilon_1,\,1+\epsilon_2,\,1+\epsilon_3,\,2-(\epsilon_1+\epsilon_2+\epsilon_3))$
where
$0\leq\epsilon_1+\epsilon_2+\epsilon_3\leq 2$,\;
$0\leq\epsilon_1+\epsilon_2\leq 1$,\;
$0\leq\epsilon_1+\epsilon_3\leq 2$,\;
$0\leq\epsilon_2+\epsilon_3\leq 2$.

\CC: The winning committees are $\W_1$ and $\W_2=\{\alt_1,\alt_3\}$, each with
score~$4$, so $\sum_{i\in [4]}\alloc_i = 4$.
The $\TUcore{\CC}$ is
$\alloc=(1+\epsilon_1,\,1+\epsilon_2,\,1+\epsilon_3,\,1-(\epsilon_1+\epsilon_2+\epsilon_3))$
where
$0\leq\epsilon_1+\epsilon_2+\epsilon_3\leq 1$,\;
$\epsilon_1+\epsilon_2=0$,\;
$0\leq\epsilon_1+\epsilon_3\leq 1$,\;
$0\leq\epsilon_2+\epsilon_3\leq 1$.

\PAV: $\W_1$ is the unique winning committee with score~$4.5$, so $\sum_{i\in [4]}\alloc_i = 4.5$.
The $\TUcore{\PAV}$ is
$\alloc=(1+\epsilon_1,\,1+\epsilon_2,\,1+\epsilon_3,\,1.5-(\epsilon_1+\epsilon_2+\epsilon_3))$
where
$0\leq\epsilon_1+\epsilon_2+\epsilon_3\leq 1.5$,\;
$0\leq\epsilon_1+\epsilon_2\leq 0.5$,\;
$0\leq\epsilon_1+\epsilon_3\leq 1.5$,\;
$0\leq\epsilon_2+\epsilon_3\leq 1.5$.

For \SAV, we get the following coalition inequalities:
  \begin{alignat*}{3}
	\alloc(v_i) \geq~& 0, \text{ for all } i \in [4],\\  
	\alloc(v_i)+\alloc(v_3)\geq~& 1.5,  \text{ for all } i \in [2], \nonumber\\
	\alloc(v_i)+\alloc(v_4)\geq~& 1,  \text{ for all } i \in [3], \nonumber\\
	\alloc(v_1)+\alloc(v_2) \geq~& 2, \quad
	\alloc(v_1)+\alloc(v_2)+\alloc(v_3)\geq2.5.
\end{alignat*}

\SAV: $\W_1$ is the unique winning committee with score~$3.5$, so $\sum_{i\in [4]}\alloc_i = 3.5$. The $\TUcore{\SAV}$ is
$\alloc=(1+\epsilon_1,\,1+\epsilon_2,\,1+\epsilon_3,\,0.5-(\epsilon_1+\epsilon_2+\epsilon_3))$
where
$-0.5\leq\epsilon_1+\epsilon_2+\epsilon_3\leq 0.5$,\;
$0\leq\epsilon_1+\epsilon_2\leq 0.5$,\;
$-0.5\leq\epsilon_1+\epsilon_3\leq 0.5$,\;
$-0.5\leq\epsilon_2+\epsilon_3\leq 0.5$.
\end{example}

\begin{example}[\NTU]\label{ex:NTUgames}
Let $\aaa=\{\alt_1,\alt_2,\alt_3,\alt_4\}$, $\vvv=\{v_1,\ldots,v_6\}$, and
\[
\app_1=\{\alt_1\},\quad
\app_2=\app_3=\{\alt_1,\alt_4\},\quad
\app_4=\{\alt_2\},\quad
\app_5=\app_6=\{\alt_2,\alt_3,\alt_4\}.
\]
Let $\kk=2$.
Coalitions of size at most~$2$ have seat cap~$\lfloor 2\cdot 2/6\rfloor = 0$ and cannot block.
Coalitions of size $3$ to $5$ may select a single alternative.

\mypara{Necessary conditions.}
If $\W\cap\{\alt_1,\alt_4\}=\emptyset$, then $v_1,v_2,v_3$ each receive utility~$0$
and the coalition $\{v_1,v_2,v_3\}$ blocks via $\{\alt_1\}$.
Symmetrically, $\W\cap\{\alt_2,\alt_3,\alt_4\}=\emptyset$ lets $\{v_4,v_5,v_6\}$ block
via~$\{\alt_2\}$.
Hence every core-stable committee must intersect both
$\{\alt_1,\alt_4\}$ and $\{\alt_2,\alt_3,\alt_4\}$.

\mypara{Sufficient conditions.}
Suppose $\W$ meets both conditions.
A coalition of size $3$--$5$ deviating to $\{\alt_j\}$ requires every member
to currently have utility~$0$ and to approve~$\alt_j$.
For $\alt_1$: Voters $v_2,v_3$ also approve $\alt_4$, so at most $v_1$ has utility~$0$.
For $\alt_2$: Voters $v_5,v_6$ also approve $\alt_3,\alt_4$, so at most $v_4$ has utility~$0$.
For $\alt_3$: Only two approvers exist.
For $\alt_4$: Both conditions together ensure at least four voters ($v_2,v_3$ via the first, $v_5,v_6$ via the second) already have utility $\geq 1$.
In every case fewer than three zero-utility approvers exist.
The grand coalition (with seat cap~$2$) cannot block either:
strict improvement for $v_1$ requires %
$\alt_1 \notin \W$ and for $v_4$ requires
$\alt_2\notin \W$, as otherwise $v_1$ and $v_4$ cannot improve.
This gives us $\W=\{\alt_3,\alt_4\}$; but then voters $v_5,v_6$ who already have
utility~$2$ are not strictly improved.

Therefore the $\NTUcore{\AV}$ consists of the utility vectors induced by
$\{\alt_1,\alt_2\},\ \{\alt_1,\alt_3\},\ \{\alt_1,\alt_4\},\ \{\alt_2,\alt_4\},\ \{\alt_3,\alt_4\}$.
Since any blocking deviation selects a single alternative, the same five
committees form the \NTU-core under \SAV, \CC, and~\PAV.

\end{example}

\subsection{Central problems}\label{subsec:problems}
In this paper, we study two standard computational problems for cores: core non-emptiness and core membership verification.
Let $\vr$ denote a score-based multi-winner voting rule.

\decprob{\MWGTUCexist{$\vr$} (\text{resp.} \MWGNTUCexist{$\vr$})}
{An \mw-instance~$(\ppp,\kk)$.}
{Is the $\TUcore{\vr}$ (resp.\ $\NTUcore{\vr}$) of $(\ppp,\kk)$ non-empty?}

\decprob{\MWGTUCverif{$\vr$} (\text{resp.} \MWGNTUCverif{$\vr$})}
{An \mw-instance~$(\ppp,\kk)$ and a utility function~$\alloc$}
{Is $\alloc$ in the $\TUcore{\vr}$ (resp.\ $\NTUcore{\vr}$) of $(\ppp,\kk)$?}

\section{Non-Emptiness and Structure of the Core}\label{sec:structural}%
We now turn to our main structural results, organized by utility model.

\subsection{\TU: Existence and construction}\label{sub:structure-TU}
We start with the \TU\ interpretation, where a coalition's induced total score can be redistributed arbitrarily among its members.
Our first result shows that, under \AV\ and \SAV, the resulting \TU-game always has a non-empty core. We are actually able to show an even stronger result: For any voting rule where the total committee score is the sum of individual voter scores and each alternative's contribution is independent of which others are selected, the resulting \TU-game always has a non-empty core.
We note that the \TU-game under these voting rules is not convex since the \TU characteristic function is not submodular (cf.\ \cref{ex:TUgames} for a non-submodular instance),
so core non-emptiness does not follow from standard sufficient conditions for convex games~\cite{Shapley1971}.
Instead, we require a direct argument.
We present two complementary perspectives.
The Bondareva–Shapley argument~\cite{bondareva1963,shapley1967} provides a conceptual certificate of non-emptiness via the fractional-committee relaxation, illuminating why the core is always non-empty: Linearity of these voting rules ensures that fractional committees cannot outperform integral ones.
For these rules direct combinatorial construction (\cref{alg:TU-core-totsep}) additionally provides an explicit core element in polynomial time, which the Bondareva–Shapley proof alone does not guarantee.
We then show that this positive picture is specific to \AV\ and \SAV: For \CC\ and \PAV\ the \TU-core can be empty, and even when the core exists, canonical solution concepts such as the Shapley value need not produce a core element.

We next recall the Bondareva--Shapley Theorem~\cite{bondareva1963,shapley1967}.
\begin{proposition}[Bondareva-Shapley Theorem\footnote{The Bondareva-Shapley Theorem is often stated with weights indexed only by non-empty
    coalitions, i.e., $\wvector=(\weight_S)_{\emptyset\neq S\subseteq \vvv}$.
    We include $\weight_\emptyset$ purely for notational convenience (so we can write $\sum_{S\subseteq N}$), and this is without loss since $\nu(\emptyset)=0$ in our setting and $\emptyset$ never appears in the balancing constraints~\eqref{eq:BS_balanced}.}~\cite{bondareva1963,shapley1967}]\label{prop:BS}
  Let $(\vvv,\chi)$ be a TU-game. The core of $(\vvv, \chi)$ is non-empty if and only if
  for  every weight vector~$\wvector=(\weight_{S})_{S\subseteq \vvv}$
  with $\weight_{S}\ge 0$, $S\subseteq \vvv$, satisfying
  \vspace{-2ex}
  \begin{align}\label{eq:BS_balanced}
    \sum_{S\subseteq \vvv\colon v_i\in S} \weight_{S} = 1
    \qquad\text{for all } v_i\in \vvv,
  \end{align}
  \vspace{-1ex}
  \begin{align}\label{eq:BS_ineq}
    \text{  the following inequality holds: }
     \sum_{S\subseteq \vvv} \weight_{S}\, \chi(S) \le \chi(\vvv). & \qquad \qquad \qquad \qquad
  \end{align}
\end{proposition}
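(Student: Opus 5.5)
The plan is to prove the Bondareva--Shapley characterization by linear-programming duality. The core of $(\vvv,\chi)$ is non-empty if and only if the optimum of
\[
  \mathrm{(P)}\qquad \min \sum_{v_i\in\vvv} x_i \quad\text{s.t.}\quad \sum_{v_i\in S} x_i \ge \chi(S) \text{ for all } S\subseteq \vvv
\]
is at most $\chi(\vvv)$. The constraint for $S=\vvv$ already forces $\sum_i x_i\ge \chi(\vvv)$. Hence any optimal $x$ with value exactly $\chi(\vvv)$ is efficient and unblocked, i.e., it lies in the core. Conversely, every core vector is feasible for (P) with value $\chi(\vvv)$. So non-emptiness is equivalent to $\mathrm{OPT}(\mathrm{P})=\chi(\vvv)$, and by the above this is the same as $\mathrm{OPT}(\mathrm{P})\le\chi(\vvv)$.

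Next I will write the dual of (P):
\[
  \mathrm{(D)}\qquad \max \sum_{S\subseteq\vvv}\weight_S\,\chi(S) \quad\text{s.t.}\quad \sum_{S\ni v_i}\weight_S = 1 \text{ for all } v_i,\qquad \weight_S\ge 0.
\]
The dual constraints are equalities because the primal variables $x_i$ are free. Including $\weight_\emptyset$ is harmless: it appears in no constraint and has coefficient $\chi(\emptyset)=0$, matching the footnote.

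Both programs are feasible. For (P), take $x_i$ large, e.g.\ $x_i=\max_S |\chi(S)|$ plus a constant. For (D), take $\weight_{\{v_i\}}=1$ for every $i$ and all other weights $0$. Strong duality then gives $\mathrm{OPT}(\mathrm{P})=\mathrm{OPT}(\mathrm{D})$, and both optima are attained.

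With this, the core is non-empty iff $\mathrm{OPT}(\mathrm{D})\le\chi(\vvv)$, i.e., iff every balanced weight vector satisfies~\eqref{eq:BS_ineq}. The easy direction can also be checked directly. If $x$ is in the core, then for any balanced $\wvector$ we have
\[
  \sum_S \weight_S\chi(S)\le\sum_S\weight_S\sum_{v_i\in S}x_i=\sum_i x_i\sum_{S\ni v_i}\weight_S=\sum_i x_i=\chi(\vvv).
\]

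In our setting the core also requires $\alloc\ge 0$. I will handle this by noting $\chi(\{v_i\})\ge 0$, since scores are non-negative and the empty committee is admissible. The singleton constraints of (P) then force $x_i\ge 0$ automatically, so the non-negativity requirement is implied.

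The main step needing care is the appeal to strong duality: verifying that both programs are feasible, so that the optima exist and coincide. Everything else is bookkeeping.
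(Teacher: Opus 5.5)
Your proof is correct. It is the standard LP-duality argument behind the Bondareva--Shapley theorem; the paper does not reprove the theorem but only cites it, so there is no competing route to compare.

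Your remark on non-negativity is worth keeping. The paper's core requires non-negative utilities, while the classical theorem concerns the unrestricted core. Your observation that $\chi(\{v_i\})\ge 0$, so that the singleton constraints of (P) force $x_i\ge 0$, is exactly what makes the cited statement apply to the games in this paper.
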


\noindent In the following, we call a weight vector~$\wvector$ \myemph{balanced} if it is non-negative %
and satisfies~\eqref{eq:BS_balanced}.
To show core non-emptiness, we will use \cref{prop:BS} above and show that no balanced weight vector can collectively obtain more value than the grand coalition~$\vvv$.
The crucial idea is that every balanced vector yields a fractional committee,
and the left-hand side of~\eqref{eq:BS_ineq} can be upper-bounded by the optimum of a linear program over fractional committees.
This linear program is maximized by selecting the top-$\kk$ integral committee, which corresponds to a winning committee under the considered rules. 
First we define two terms that characterize the voting rules for which we show that the core is always non-empty.
\begin{definition}\label{def:totsep}
We say a score-based voting rule $\vr$ is \myemph{\votsep} if $\score{\vr}(\vvv',\W)=\sum_{v_i\in\vvv'}\score{\vr}(v_i,\W)$ for all $\vvv'\subseteq\vvv$ and $\W\subseteq\aaa$. 
We say a score-based voting rule $\vr$ is \myemph{\altsep} if $\score{\vr}(\vvv',\W)=\sum_{\alt_i\in\W}\score{\vr}(\vvv',\{\alt_i\})$ for all $\vvv'\subseteq\vvv$ and $\W\subseteq\aaa$. 
We say a score-based voting rule $\vr$ is \myemph{\totsep} if it is \votsep\ and \altsep.
\end{definition}
We can immediately observe that both \AV\ and \SAV\ are \totsep.
\begin{observation}\label{obs:AVSAVtotsep}
\AV\ and \SAV\ are \totsep.
\end{observation}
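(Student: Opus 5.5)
The plan is to verify the two separability conditions of \cref{def:totsep} directly from the score definitions, once for \AV\ and once for \SAV. For both rules, the paper defines the coalition score as a sum over voters in $\vvv'$ of a per-voter quantity depending only on $\app_i$ and $\W$. Voter-separability is therefore immediate: applying the definition to the singleton $\{v_i\}$ gives exactly the summand. So $\score{\vr}(\vvv',\W)=\sum_{v_i\in\vvv'}\score{\vr}(v_i,\W)$ holds by definition.

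For alternative-separability, I would rewrite the per-voter count as a sum over alternatives:
\[
|\app_i\cap \W| = \sum_{\alt_j\in \W}\mathds{1}[\alt_j\in \app_i] = \sum_{\alt_j\in\W} |\app_i\cap\{\alt_j\}|.
\]
For \AV\ this gives
\[
\score{\AV}(\vvv',\W)=\sum_{v_i\in\vvv'}\sum_{\alt_j\in\W}|\app_i\cap\{\alt_j\}|.
\]
Exchanging the two finite sums turns this into $\sum_{\alt_j\in\W}\score{\AV}(\vvv',\{\alt_j\})$.

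For \SAV\ the same argument works after dividing by the constant $|\app_i|$. That constant depends only on the voter and not on $\W$, so the division passes through the sum over alternatives. The one point requiring care is the case $\app_i=\emptyset$. There the voter's score is $0$ for every committee, including every singleton $\{\alt_j\}$, so the identity reads $0=\sum_{\alt_j\in\W}0$. I would handle this by reading the \SAV\ definition per voter: voters with empty ballots contribute $0$ both to $\score{\SAV}(\vvv',\W)$ and to each $\score{\SAV}(\vvv',\{\alt_j\})$.

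There is no real obstacle here. The only step needing attention is interpreting the piecewise \SAV\ definition correctly for voters with empty approval sets, together with the convention that the empty committee yields score $0$.
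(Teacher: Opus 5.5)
Your proposal is correct and takes the paper's route: the paper states the observation without proof, as immediate from the definitions. Your verification is exactly that. Voter-separability holds because both scores are defined as sums over voters, and alternative-separability follows by writing $|\app_i\cap\W|$ as a sum of indicators over $\W$ and exchanging the finite sums, with the per-voter reading of the \SAV\ case distinction for empty ballots.
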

\begin{restatable}{theorem}{thmAVTUCoreNonEmpty}\label{thm:AV-TU-Core-Non-Empty}
$\TUcore{\vr}$ is always non-empty for all \totsep\ voting rules $\vr$ that satisfy $\scoreTU{\vr}(v,\{\alt\})\geq0$ for all $v\in\vvv$ and $\alt\in\aaa$. Therefore, $\TUcore{\AV}$ and $\TUcore{\SAV}$ are always non-empty.
\end{restatable}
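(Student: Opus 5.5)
We plan to apply the Bondareva--Shapley Theorem (\cref{prop:BS}). Fix an arbitrary balanced weight vector $\wvector$; we must show $\sum_{S\subseteq\vvv}\weight_S\,\cvalvr(S)\le\cvalvr(\vvv)$. For each coalition $S$ with $\weight_S>0$, let $\W_S$ be an admissible committee attaining $\cvalvr(S)$, so $|\W_S|\le\lfloor |S|\kk/|\vvv|\rfloor$. For each alternative $\alt$ and voter $v$, write $s(v,\alt)\coloneqq\score{\vr}(v,\{\alt\})\ge 0$. By total separability, $\cvalvr(S)=\sum_{v\in S}\sum_{\alt\in\W_S}s(v,\alt)$, and the left-hand side of \eqref{eq:BS_ineq} becomes
\[
\sum_{S}\weight_S\sum_{v\in S}\sum_{\alt\in \W_S} s(v,\alt)\;\le\;\sum_{S}\weight_S\sum_{\alt\in \W_S}\sum_{v\in\vvv} s(v,\alt)=\sum_{\alt\in\aaa} y_\alt\, g(\alt),
\]
where $g(\alt)\coloneqq\sum_{v\in\vvv}s(v,\alt)$ is the grand-coalition value of $\alt$ and $y_\alt\coloneqq\sum_{S:\alt\in\W_S}\weight_S$. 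The inequality uses non-negativity of $s$: enlarging the inner sum from $S$ to all of $\vvv$ only adds non-negative terms.

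Next, we interpret $y$ as a fractional committee. Clearly $y_\alt\ge 0$. Using the cap and the balancing constraints \eqref{eq:BS_balanced},
\[
\sum_{\alt} y_\alt=\sum_S\weight_S|\W_S|\le\sum_S\weight_S\frac{|S|\kk}{|\vvv|}=\frac{\kk}{|\vvv|}\sum_{v\in\vvv}\sum_{S\ni v}\weight_S=\kk .
\]
We also need $y_\alt\le 1$. This holds because $y_\alt=\sum_{S:\alt\in\W_S}\weight_S$, and every $S$ with $\W_S\neq\emptyset$ is non-empty, so it contains some voter $v$. The cleaner route is to note $\weight_S\le 1$ for each $S$, but that alone is not enough. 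Instead we bound $y_\alt$ by fractional seat counting: each such $S$ has $|S|\ge |\vvv|/\kk$, hence $\weight_S\le \weight_S|S|\kk/|\vvv|\cdot$, which does not directly give $y_\alt\le1$.

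We therefore avoid needing $y_\alt\le1$ by using the integer-rounding structure more carefully. The linear program $\max\{\sum_\alt y_\alt g(\alt): 0\le y_\alt\le 1,\ \sum y_\alt\le\kk\}$ is optimized by the top-$\kk$ alternatives. Without the upper bound $y_\alt\le1$, however, the maximum could put all mass on the best alternative. So the main obstacle is establishing $y_\alt\le 1$. We will try the following argument: every $S$ containing $\alt$ in $\W_S$ satisfies $\weight_S\le\weight_S\,|\W_S|$, and we want $\sum_{S:\alt\in\W_S}\weight_S\le 1$. If this fails, we modify the witnesses instead. We choose $\W_S$ to consist of the alternatives with the largest values of $\sum_{v\in S}s(v,\alt)$, and we bound $\sum_{v\in S}s(v,\alt)$ per voter rather than per coalition. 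Concretely, for each voter $v$ we have $\sum_{S\ni v}\weight_S\,[\alt\in\W_S]\le 1$ by \eqref{eq:BS_balanced}. Hence, keeping the sum over $v\in S$ rather than enlarging it,
\[
\text{LHS}=\sum_\alt\sum_v s(v,\alt)\,y_{v,\alt},\qquad y_{v,\alt}\coloneqq\sum_{S\ni v,\ \alt\in\W_S}\weight_S\in[0,1].
\]
The per-voter budget is $\sum_\alt y_{v,\alt}$, but different voters have different fractional committees, so we still need a coupling. Combining the two views, the bound $y_{v,\alt}\le1$ together with the aggregate budget $\sum_\alt y_\alt\le\kk$, is the step we expect to require the most care.

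Once $\text{LHS}\le\max_{|\W|=\kk}\sum_{\alt\in\W}g(\alt)=\cvalvr(\vvv)$ is established, \cref{prop:BS} yields non-emptiness. By \cref{obs:AVSAVtotsep}, and since the \AV\ and \SAV\ single-alternative scores are non-negative, the claim for $\TUcore{\AV}$ and $\TUcore{\SAV}$ follows.
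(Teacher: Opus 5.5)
Your proposal has a genuine gap: it stops exactly at the decisive step, so as written it does not prove the theorem. Everything up to that point is fine and matches the paper: Bondareva--Shapley, the budget bound $\sum_S\weight_S|\W_S|\le\kk$ via balancedness, and the LP over $[0,1]^m$ with budget $\kk$ maximized by the top-$\kk$ alternatives. You are also right to abandon your first route of enlarging $\sum_{v\in S}$ to $\sum_{v\in\vvv}$, because it is genuinely lossy. For example, if $\kk$ disjoint coalitions of size $|\vvv|/\kk$ (weight $1$ each) all choose the same $\alt$, then $y_\alt=\kk$, and $\sum_\alt y_\alt g(\alt)$ can exceed $\cvalvr(\vvv)$ by a factor of $\kk$. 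What is missing is the one-line coupling between your two views, which you flag as the hard step but never supply.

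The sum defining $y_{v,\alt}=\sum_{S\ni v,\ \alt\in\W_S}\weight_S$ is a sub-sum both of $\sum_{S\ni v}\weight_S=1$ and of $y_\alt=\sum_{S:\,\alt\in\W_S}\weight_S$. Hence
\[
y_{v,\alt}\le x_\alt\coloneqq\min\{1,y_\alt\}\qquad\text{for every voter } v \text{ simultaneously.}
\]
Since $s\ge 0$, your exact rewriting $\text{LHS}=\sum_\alt\sum_v s(v,\alt)\,y_{v,\alt}$ then gives $\text{LHS}\le\sum_\alt x_\alt\, g(\alt)$. Here $x$ is a single fractional committee with $0\le x_\alt\le 1$ and $\sum_\alt x_\alt\le\sum_\alt y_\alt\le\kk$, the latter by your budget computation. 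The LP bound therefore yields $\text{LHS}\le\cvalvr(\vvv)$.

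So you never need $y_\alt\le 1$ itself, nor a per-voter coupling, nor a re-choice of the witnesses $\W_S$. Clipping $y_\alt$ at $1$ loses nothing, because no individual voter ever places total weight above $1$ on any alternative. This clipped vector is exactly the paper's fractional committee $\xx(\alt_j)=\min\{1,\sum_{S:\,\alt_j\in\W_S}\weight_S\}$, and with this step inserted your argument coincides with the paper's proof.
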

  \begin{proof}
    Let $(\ppp,\kk)$ be an \mw-instance with $\ppp=(\aaa,\vvv)$. Let $\vr$ be a \totsep\ voting rule.
  For every voter coalition~$S\subseteq \vvv$, let
  \begin{align*}
  \chi(S)\coloneq \cval{\vr}(S) = \max_{\W'\subseteq \aaa\colon |\W'|\le \shareS}\score{\vr}(S,\W')\text{, for }\vr\in\{\AV,\SAV\}.
  \end{align*}
  Note that for $\vr=\AV$, $\chi(S)=\max_{\W'\subseteq \aaa\colon |\W'|\le \shareS}\sum_{v_i\in S}|\app_i\cap \W'|$, and for $\vr=\SAV$, $\chi(S)=\max_{\W'\subseteq \aaa\colon |\W'|\le \shareS}\sum_{v_i\in S}\frac{|\app_i\cap \W'|}{|\app_i|}$.
  Then, by \cref{prop:BS}, it suffices to show that all \emph{balanced} weight vectors satisfy inequality~\eqref{eq:BS_ineq}. %
  Let $\wvector=(\weight_{S})_{S\subseteq \vvv}$ be an arbitrary balanced weight vector satisfying~\eqref{eq:BS_balanced}. %
  For each voter coalition~$S\subseteq \vvv$, choose a committee~\myemph{$\W_{S}$} that maximizes the score for~$S$, i.e.,
  \begin{align}
    \chi(S) = \score{\vr}(S,\W_S) \text{ and } |\W_S| \le \shareS.\label{def:W_s}
  \end{align}
  Note that $\W_S$ depends on the voting rule, e.g., $\W_S$ can differ based on whether \AV, \SAV, or some other rule is the underlying voting rule.  
  Then, we claim that the weighted sum of the sizes of the committees~$\W_{S}$ is bounded by~$\kk$.
  \begin{restatable}[]{claim}{claimweightedcommittees}\label{claim:weighted-committees}
    $\displaystyle\sum_{S\subseteq \vvv}\weight_S |\W_S|\leq \kk$.
  \end{restatable}
\begin{claimproof}{claim:weighted-committees}
    Using \eqref{eq:BS_balanced} we obtain the incidence-counting identity
    \begin{align}\label{eq:sum_weighted_sizes}
      \sum_{S\subseteq \vvv}\weight_S|S|
      =
      \sum_{S\subseteq V}\weight_S\sum_{v_i\in S}1
      =
      \sum_{v_i\in \vvv}\sum_{S\subseteq \vvv\colon v_i\in S}\weight_S
      \stackrel{\eqref{eq:BS_balanced}}{=}
      \sum_{v_i\in \vvv}1
      =
      |\vvv|.
    \end{align}
    Combining~\eqref{def:W_s} with \eqref{eq:sum_weighted_sizes} gives
    \begin{align*}
      \sum_{S\subseteq \vvv}\weight_S|\W_S| \le  \sum_{S\subseteq \vvv}\weight_S \lfloor\frac{\kk}{|\vvv|}|S|\rfloor \le  \frac{\kk}{|\vvv|} \sum_{S\subseteq \vvv} \weight_{S}|S|  = k, \text{as desired.}
    \end{align*}
\end{claimproof}
  Next, we show that the weighted sum of the scores of the committees~$\W_S$ is upper-bounded by the weighted sum of the scores of the associated fractional committee~$\xx$, where
  for each~$\alt_j\in \aaa$, let \myemph{$\xx(\alt_j)$} $\displaystyle\coloneqq \min\Bigl\{1,\sum_{S\subseteq \vvv\colon \alt_j\in \W_S}\weight_S\Bigr\}$.
  \begin{claim} %
    \label{claim:fractional-committee}
    \begin{compactenum}
      \item\label{fractional-k}  $\sum_{\alt_j\in \aaa} \xx(\alt_j) \le \kk$.
      \item\label{fractional-sum} $\displaystyle\sum_{S\subseteq \vvv} \weight_{S}\cdot \score{\vr}(S,\W_S) \le \sum_{\alt_j\in \aaa} \xx(\alt_j)\cdot\score{\vr}(\vvv,\{\alt_j\})$, where \myemph{$\xx(\alt_j)$} $\displaystyle\coloneqq \min\Bigl\{1,\ \sum_{S\subseteq \vvv\colon \alt_j\in \W_S}\weight_S\Bigr\}$.
    \end{compactenum}
    \end{claim}
    \begin{claimproof}{claim:fractional-committee}
    The first statement means that $\xx$ as defined above yields a fractional committee of size at most~$\kk$.
    By the definition of~$\xx$ and \cref{claim:weighted-committees}, %
    we obtain \begin{align*}
      \sum_{\alt_j\in \aaa} \xx(\alt_j) \stackrel{\text{def}}{\le} \sum_{\alt_j\in \aaa} \sum_{\stackrel{S\subseteq\vvv\colon}{\alt_j\in \W_S}}\weight_S
      = \sum_{
      S\subseteq \vvv} \sum_{\alt_j\in \W_S}\weight_S
      = \sum_{S\subseteq \vvv}\weight_S|\W_S| \stackrel{\small\text{\cref{claim:weighted-committees}}}{\le} \kk, \text{ showing the first statement.}
    \end{align*}
    Then, by the definition of the score under \totsep\ voting rules and by counting the summation differently, we obtain 
    \begin{align}
      \sum_{S\subseteq \vvv} \weight_{S}\cdot \score{\vr}(S,\W_S)
      = \sum_{S\subseteq \vvv} \weight_{S} \sum_{v_i\in S}\sum_{\alt_j\in \W_S}\score{\vr}(v_i, \{\alt_j\})
      = \sum_{\alt_j\in \aaa} \sum_{v_i\in \vvv} \score{\vr}(v_i, \{\alt_j\})\sum_{\mathclap{\stackrel{S\subseteq \vvv\colon}{v_i\in S\wedge \alt_j\in \W_S}}} \weight_S.\label{eq:sum-score-weights}
    \end{align}
    For each alternative~$\alt_j\in \aaa$ and voter~$v_i\in \vvv$, %
    we infer the following.
    
    \noindent $\displaystyle\sum_{\mathclap{\stackrel{S\subseteq \vvv\colon}{v_i\in S\wedge \alt_j\in \W_S}}} \weight_S \le \sum_{S\subseteq \vvv\colon v_i\in S} \weight_S\stackrel{\eqref{eq:BS_balanced}}{=} 1$,
    and
    $\displaystyle\sum_{\mathclap{\stackrel{S\subseteq \vvv\colon}{v_i\in S\wedge \alt_j\in \W_S}}} \weight_S
    \le \sum_{\mathclap{\substack{S\subseteq \vvv\colon\\ \alt_j\in \W_{S}}}}\weight_S$.~
    Hence,
      $\displaystyle\sum_{\mathclap{\stackrel{S\subseteq \vvv\colon}{v_i\in S\wedge \alt_j\in \W_S}}} \weight_S \le \min(1, \sum_{S\subseteq \vvv\colon \alt_j \in \W_S}\weight_S)$.
    Plugging the inequality into \eqref{eq:sum-score-weights},
    we obtain
    \begin{align*}
      \sum_{S\subseteq \vvv} \weight_{S}\cdot \score{\vr}(S,\W_S) \le& \sum_{\alt_j\in \aaa} \sum_{v_i\in \vvv}  \score{\vr}(v_i, \{\alt_j\})\min(1, \sum_{S\subseteq \vvv\colon \alt_j \in \W_S}\weight_S) \\
      = & \sum_{\alt_j\in \aaa}\sum_{v_i\in \vvv} \score{\vr}(v_i, \{\alt_j\})\xx(\alt_j) =\sum_{\alt_j\in \aaa} \xx(\alt_j)\cdot\score{\vr}(\vvv,\{\alt_j\})
    \end{align*}
  \end{claimproof} 
  Finally, consider the following linear program

  {\centering
    $\max\Bigl\{\sum_{\alt_j\in \aaa} y(\alt_j) \cdot \score{\vr}(\vvv,\{\alt_j\})\colon 0\le y(\alt_{j}) \le 1, \forall \alt_j\in \aaa \wedge \sum_{\alt_j\in \aaa} y(\alt_j)\le \kk\Bigl\}$.\par
  }
  
  Since  $\score{\vr}(\vvv,\{\alt_j\})\ge 0$ holds for all~$\alt_j\in \aaa$, the linear program is maximized by setting~$y(\alt_{j})=1$ for the $\kk$ alternatives with the largest score~$\score{\vr}(\vvv,\{\alt_j\})$ and $y(\alt_{j})=0$ otherwise.
  Since $\xx$ is a feasible solution of the linear program (see \cref{claim:fractional-committee}\eqref{fractional-k}), 
  we infer

  \smallskip
  {\centering %
    $\sum_{\alt_j\in \aaa} \xx(\alt_j) \cdot \score{\vr}(\vvv,\{\alt_j\}) \le \max_{\W\subseteq \aaa\colon |\W|=\kk}\sum_{\alt_j\in \W}\score{\vr}(\vvv,\{\alt_j\}) = \cval{\vr}(\vvv)$. 
    \par}
  \smallskip
    
  Combining the above with \cref{claim:fractional-committee}\eqref{fractional-sum}
    gives
    
   \smallskip
  {\centering %
    $\sum_{S\subseteq \vvv} \weight_{S} \chi(S) = \sum_{S\subseteq \vvv} \weight_{S}\cdot \score{\vr}(S,\W_S) \le  \sum_{\alt_j\in \aaa} \xx(\alt_j)\score{\vr}(\vvv,\{\alt_j\}) \le  \cval{\vr}(\vvv)$.
    \par}
    \smallskip
    
  \noindent Thus \eqref{eq:BS_ineq} holds for every balanced weight vector, and \cref{prop:BS} implies $\TUcore{\vr}\neq \emptyset$. %
  \end{proof}

The LP-based algorithm derived from the core non-emptiness proof of \cref{thm:AV-TU-Core-Non-Empty} may run in exponential time. 
Next, we explicitly give a utility function in polynomial time and show that it is in the core; see \cref{alg:TU-core-totsep}.
Despite its compact form, our algorithm must satisfy core constraints for all coalitions and all admissible deviations.
The baseline-plus-surplus split is designed so that any coalition's attainable score under its seat cap is upper-bounded by the utilities it is assigned. 
More precisely, our algorithm exploits the linearity of \totsep\ voting rules: starting from a winning committee~$\W$, we distribute
\begin{compactenum}
  \item a common baseline based on the weakest alternative in~$\W$ to all voters (see line~\ref{baseline1}), and
  \item the remaining surplus of each alternative only among its approving voters (see lines~\ref{surplus1}--\ref{surplus2});
\end{compactenum}
we prove that no coalition can block the resulting allocation.

Intuitively, the $\kk^{\text{th}}$ best alternative~$\alt_{\kk}$ serves as a ``price unit'': any coalition with seat cap $s$ can extract at most $s|\VV(\alt_{\kk})|$ baseline approvals.
Hence, any improvement beyond this baseline must come from selecting alternatives that beat $\alt_{\kk}$, whose excess approvals we pay only to voters who can actually realize them (the approvers).
    
\begin{algorithm}[t!] 
	\caption{Compute a $\TUcs{\vr}$ utility function for \totsep\ $\vr$.}\label{alg:TU-core-totsep} 
	\KwI{A profile~$\ppp=(\aaa,\vvv,\RR)$ with $\scoreTU{\vr}(v,\{\alt\})$ for all $v\in\vvv, \alt\in\aaa$ and $\kk\in [|\aaa|]$.}
	
	Sort and rename the alternatives such that $\scoreTU{\vr}(\vvv,\{\alt_1\}) \ge \dots \ge \scoreTU{\vr}(\vvv,\{\alt_m\})$, breaking ties arbitrarily.
	
	\smallskip
	\Comment{Distribution of the surplus}\vspace{-1mm}
	\ForEach{$(j,v_i)\in [\kk-1]\times \vvv$}{\label{surplus1}%
		\leIf{$\scoreTU{\vr}(v_i,\{\alt_j\})\neq0$}{$\alloc_j(v_i) \leftarrow \frac{\scoreTU{\vr}(\vvv,\{\alt_j\}) - \scoreTU{\vr}(\vvv,\{\alt_k\})}{\scoreTU{\vr}(\vvv,\{\alt_j\})}\cdot\scoreTU{\vr}(v_i,\{\alt_j\})$}{$\alloc_j(v_i) \leftarrow 0$}\label{surplus2}
	}
	\medskip
	
	\Comment{Distribution of the baseline score}\vspace{-1mm}
	\lForEach{$v_i \in \vvv$}{\label{baseline1}%
		$\alloc_{\kk}(v_i) \leftarrow \frac{\kk\,\scoreTU{\vr}(\vvv,\{\alt_k\})}{|\vvv|}$
	}
	\Return{$\sum_{j\in [\kk]}\alloc_j$}
\end{algorithm}

\begin{restatable}{theorem}{thmTUAVpoly}\label{thm:TU-AV-poly}
	\cref{alg:TU-core-totsep} computes a $\TUcs{\vr}$ utility function in polynomial time for every \totsep\ voting rule $\vr$ that satisfies $\scoreTU{\vr}(v,\{\alt\})\geq0$ for all $v\in\vvv$ and $\alt\in\aaa$.
\end{restatable}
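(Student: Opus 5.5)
We prove the two requirements of $\TUcs{\vr}$ separately, namely efficiency ($\sum_{v\in\vvv}\alloc(v)=\cvalvr(\vvv)$) and absence of blocking coalitions, and then note the running time. Throughout, write $s_j\coloneqq\scoreTU{\vr}(\vvv,\{\alt_j\})$ with the sorted order $s_1\ge\dots\ge s_m$ from the algorithm. Let $\sigma_j(S)\coloneqq\sum_{v_i\in S}\alloc_j(v_i)$ be the surplus of $\alt_j$ paid to a coalition $S$, for $j<\kk$.

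\emph{Efficiency.} For $j<\kk$ with $s_j>0$, summing $\alloc_j(v_i)$ over all voters gives $\frac{s_j-s_\kk}{s_j}\sum_{v_i}\scoreTU{\vr}(v_i,\{\alt_j\})$. By voter-separability this equals $s_j-s_\kk$. If $s_j=0$, then also $s_\kk=0$ and every individual score is $0$ by nonnegativity, so the sum is $0=s_j-s_\kk$. The baseline contributes $\kk s_\kk$ in total. Hence $\sum_v\alloc(v)=\sum_{j\le \kk}s_j$. Since $\vr$ is alternative-separable with nonnegative single-alternative scores, $\cvalvr(\vvv)=\max_{|\W|\le\kk}\sum_{\alt\in\W}\scoreTU{\vr}(\vvv,\{\alt\})$ is attained by the top-$\kk$ alternatives. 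This is the same LP/greedy observation used in the proof of \cref{thm:AV-TU-Core-Non-Empty}, and it gives efficiency.

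\emph{No blocking.} Fix a coalition $S$ with seat cap $t=\shareS$ and any $\W'$ with $|\W'|\le t$. By total separability, $\scoreTU{\vr}(S,\W')=\sum_{\alt\in\W'}\scoreTU{\vr}(S,\{\alt\})$. The key step is the per-alternative bound
\[
\scoreTU{\vr}(S,\{\alt_j\})\;\le\; s_\kk+\sigma_j(S),
\]
where we set $\sigma_j(S)\coloneqq 0$ for $j\ge\kk$. We prove it by cases.
\begin{itemize}
\item If $j\ge\kk$, then $\scoreTU{\vr}(S,\{\alt_j\})\le s_j\le s_\kk$, using nonnegativity and voter-separability.
\item If $j<\kk$ and $s_j>0$, split $\scoreTU{\vr}(S,\{\alt_j\})=\frac{s_\kk}{s_j}\scoreTU{\vr}(S,\{\alt_j\})+\frac{s_j-s_\kk}{s_j}\scoreTU{\vr}(S,\{\alt_j\})$. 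The second term is exactly $\sigma_j(S)$. The first term is at most $s_\kk$ because $\scoreTU{\vr}(S,\{\alt_j\})\le s_j$.
\item If $j<\kk$ and $s_j=0$, every individual score for $\alt_j$ is $0$ by nonnegativity, so the left-hand side is $0$.
\end{itemize}

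Summing the bound over the distinct alternatives of $\W'$, and using that every $\sigma_j(S)\ge0$ (since $s_j\ge s_\kk$ and all scores are nonnegative), we get
\[
\scoreTU{\vr}(S,\W')\le |\W'|\,s_\kk+\sum_{j<\kk}\sigma_j(S)\le t\,s_\kk+\sum_{j<\kk}\sigma_j(S).
\]
Since $t\le |S|\kk/|\vvv|$, we have $t\,s_\kk\le |S|\kk s_\kk/|\vvv|$, which is exactly the baseline paid to $S$. Therefore $\scoreTU{\vr}(S,\W')\le\sum_{v\in S}\alloc(v)$, so no $\W'$ is a blocking witness and no coalition is $\TUblocking{\vr}$.

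The main subtlety is this per-alternative charging step. Each alternative in $\W'$ must be paid either by one baseline "seat unit" $s_\kk$ or by its own surplus. Distinctness of the alternatives in $\W'$ ensures that no surplus $\sigma_j(S)$ is charged twice. The degenerate case $s_j=0$ has to be handled separately because the algorithm divides by $s_j$.

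\emph{Running time.} The algorithm computes the $|\vvv|\cdot|\aaa|$ single-alternative scores, sorts $m$ values, and performs $O(\kk|\vvv|)$ arithmetic operations. This is polynomial, assuming the single-voter single-alternative scores are polynomial-time computable, as they are for \AV\ and \SAV.
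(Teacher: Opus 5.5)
Your proof is correct and uses essentially the same charging argument as the paper. The paper splits $\W'$ into the surplus alternatives $\W_1$ and the rest $\W_2$, and shows that each alternative in $\W_1$ is paid for by one baseline unit $\score{\vr}(\vvv,\{\alt_\kk\})$ plus its own surplus; this is exactly your per-alternative bound, and your explicit handling of the case $\score{\vr}(\vvv,\{\alt_j\})=0$, which the paper passes over silently, is a small tightening.
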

\begin{proof}
	Let $\vr$ be a \totsep\ voting rule.
	Clearly, \cref{alg:TU-core-totsep} runs in polynomial time: sorting alternatives by their \vr-scores and computing the $\kk$ utility functions can be done in polynomial time if the scores $\scoreTU{\vr}(v,\{c\})$ are given for each voter-alternative pair.
	Let $\alt_1,\dots,\alt_m$ be as sorted in \cref{alg:TU-core-totsep}.
	By definition, $\W=\{\alt_1,\dots,\alt_{\kk}\}$ is a winning committee under \vr.
	It remains to show that the returned function~$\alloc$ is in the \TU-core.
	
	We start by showing that the utility function is feasible for the grand coalition, i.e., $\sum_{v_i\in \vvv}\alloc(v_i) = \cval{\vr}(\vvv)= \sum_{j\in [\kk]}\scoreTU{\vr}(\vvv,\{\alt_j\})$:
	\begin{align*}
		&\sum_{v_i\in \vvv}\alloc(v_i) = \sum_{j\in [\kk]} \sum_{v_i\in \vvv} \alloc_j(v_i)\\ 
		& = \sum_{j\in [\kk-1]} \sum_{\substack{v_i : \\ \scoreTU{\vr}(v_i,\{\alt_j\}) > 0}}\frac{\scoreTU{\vr}(\vvv,\{\alt_j\}) - \scoreTU{\vr}(\vvv,\{\alt_k\})}{\scoreTU{\vr}(\vvv,\{\alt_j\})}\cdot\scoreTU{\vr}(v_i,\{\alt_j\})
		\;+\; \sum_{v_i\in \vvv} \frac{\kk\,\scoreTU{\vr}(\vvv,\{\alt_k\})}{|\vvv|} \\
		&= \sum_{j\in [\kk-1]}\bigl(\scoreTU{\vr}(\vvv,\{\alt_j\}) - \scoreTU{\vr}(\vvv,\{\alt_k\})\bigr) \;+\; \kk\scoreTU{\vr}(\vvv,\{\alt_k\})= \sum_{j\in [\kk]}\scoreTU{\vr}(\vvv,\{\alt_j\}), \text{ as desired. }
	\end{align*}
	Now, we show that no coalition can block~$\alloc$.
	Fix an arbitrary coalition~$\vvv'\subseteq \vvv$ and fix an admissible committee~$\W'\subseteq \aaa$ for $\vvv'$ with $|\W'|\le \share$.
	We will show that $\score{\vr}(\vvv', \W') \le \sum_{v_i\in \vvv'} \alloc(v_i)$, which implies that $\vvv'$ cannot block~$\alloc$.
	First, split the alternatives from~$\W'$ into alternatives from $\{\alt_1,\dots,\alt_{\kk-1}\}$ that induce a surplus and the rest, and define
	$\W_1=\{\alt_j\in\{\alt_1,\dots,\alt_{\kk-1}\}\cap\W'\mid \scoreTU{\vr}(\vvv,\{\alt_j\})>\scoreTU{\vr}(\vvv,\{\alt_k\})\}$ and $\W_2 = \W'\setminus\W_1$.
	\allowdisplaybreaks
	Then 
	\begin{align}
		\score{\vr}(\vvv',\W') = \sum_{\alt_j\in \W'}\scoreTU{\vr}(\vvv',\{\alt_j\}) =~ & 
		\sum_{\alt_j\in \W_1}\scoreTU{\vr}(\vvv',\{\alt_j\}) +
		\sum_{\alt\in \W_2}\scoreTU{\vr}(\vvv',\{\alt\}) \nonumber \\
		\le~&   \sum_{\alt_j\in \W_1}\scoreTU{\vr}(\vvv',\{\alt_j\}) + |\W_2|\cdot \scoreTU{\vr}(\vvv,\{\alt_k\}),\label{TU-core-AV-LHS}
	\end{align}
	where the last inequality holds because $\W=\{\alt_1,\dots,\alt_{\kk}\}$ consists of the $\kk$ alternatives with highest \vr-scores, and hence
	$\scoreTU{\vr}(\vvv,\{\alt\})\le \scoreTU{\vr}(\vvv,\{\alt_k\})$ for every $\alt\notin \W_1$; moreover, $\scoreTU{\vr}(\vvv',\{\alt\})\le \scoreTU{\vr}(\vvv,\{\alt\})$.
	
	Next, we lower-bound $\sum_{v_i\in \vvv'}\alloc(v_i)$.
	\begin{claim}\label{claim:lower-bound}
		$\displaystyle\sum_{v_i\in\vvv'}\alloc(v_i)
		\;\geq\;\sum_{\alt_j\in \W_1}\scoreTU{\vr}(\vvv',\{\alt_j\}) + |\W_2|\cdot \scoreTU{\vr}(\vvv,\{\alt_k\})$
	\end{claim}
	
	\begin{claimproof}{claim:lower-bound}
		By definition of~$\alloc$,
		\allowdisplaybreaks
		\begin{align*}  
			&\sum_{v_i\in \vvv'}\alloc(v_i)
			= \sum_{v_i\in \vvv'}\sum_{j\in [\kk]}\alloc_j(v_i) \ge \sum_{\alt_j\in \W_1}\sum_{v_i\in \vvv'}\alloc_j(v_i) + \sum_{v_i\in \vvv'}\alloc_{\kk}(v_i)\\
			&= \sum_{\alt_j\in \W_1}\left(\sum_{\substack{v_i \in \vvv': \\ \scoreTU{\vr}(v_i,\{\alt_j\}) > 0}}\frac{\scoreTU{\vr}(\vvv,\{\alt_j\}) - \scoreTU{\vr}(\vvv,\{\alt_k\})}{\scoreTU{\vr}(\vvv,\{\alt_j\})}\cdot\scoreTU{\vr}(v_i,\{\alt_j\})\right)
			+ \sum_{v_i \in \vvv'}  \frac{\kk\,\scoreTU{\vr}(\vvv,\{\alt_k\})}{|\vvv|}\\
			&= \sum_{\alt_j\in \W_1}\left(\sum_{\substack{v_i \in \vvv': \\ \scoreTU{\vr}(v_i,\{\alt_j\}) > 0}}\frac{\scoreTU{\vr}(\vvv,\{\alt_j\}) - \scoreTU{\vr}(\vvv,\{\alt_k\})}{\scoreTU{\vr}(\vvv,\{\alt_j\})}\cdot\scoreTU{\vr}(v_i,\{\alt_j\})\right)
			+ \frac{|\vvv'|\kk\,\scoreTU{\vr}(\vvv,\{\alt_k\})}{|\vvv|}\\
			&\geq \sum_{\alt_j\in \W_1}\left(\sum_{\substack{v_i \in \vvv': \\ \scoreTU{\vr}(v_i,\{\alt_j\}) > 0}}\frac{\scoreTU{\vr}(\vvv,\{\alt_j\}) - \scoreTU{\vr}(\vvv,\{\alt_k\})}{\scoreTU{\vr}(\vvv,\{\alt_j\})}\cdot\scoreTU{\vr}(v_i,\{\alt_j\})\right)
			+ (|\W_1|+ |\W_2|)\scoreTU{\vr}(\vvv,\{\alt_k\})\\
			&= \sum_{\alt_j\in \W_1}\left(\scoreTU{\vr}(\vvv,\{\alt_k\})+\sum_{\substack{v_i \in \vvv': \\ \scoreTU{\vr}(v_i,\{\alt_j\}) > 0}}\frac{\scoreTU{\vr}(\vvv,\{\alt_j\}) - \scoreTU{\vr}(\vvv,\{\alt_k\})}{\scoreTU{\vr}(\vvv,\{\alt_j\})}\cdot\scoreTU{\vr}(v_i,\{\alt_j\})\right)
			+ |\W_2|\scoreTU{\vr}(\vvv,\{\alt_k\})\\
			&= \sum_{\alt_j\in \W_1}\left(\scoreTU{\vr}(\vvv,\{\alt_k\})+\frac{\scoreTU{\vr}(\vvv,\{\alt_j\}) - \scoreTU{\vr}(\vvv,\{\alt_k\})}{\scoreTU{\vr}(\vvv,\{\alt_j\})}\cdot(\scoreTU{\vr}(\vvv,\{\alt_j\})-\sum_{v_i\notin\vvv'}\scoreTU{\vr}(v_i,\{\alt_j\}))\right)\\
			&\phantom{oo}+ |\W_2|\scoreTU{\vr}(\vvv,\{\alt_k\})\\
			&\geq \sum_{\alt_j\in \W_1}\left(\scoreTU{\vr}(\vvv,\{\alt_j\})-\frac{\scoreTU{\vr}(\vvv,\{\alt_j\})}{\scoreTU{\vr}(\vvv,\{\alt_j\})}\cdot\sum_{v_i\notin\vvv'}\scoreTU{\vr}(v_i,\{\alt_j\})\right)\\
			&\phantom{oo}+ |\W_2|\scoreTU{\vr}(\vvv,\{\alt_k\})\\
			&=\sum_{\alt_j\in \W_1}\scoreTU{\vr}(\vvv',\{\alt_j\}) + |\W_2|\cdot \scoreTU{\vr}(\vvv,\{\alt_k\})
		\end{align*}
		The first inequality holds since $|\W'|=|\W_1|+|\W_2|\le \share$.

		For the fifth line, we use $\scoreTU{\vr}(\vvv,\{\alt_j\})=\sum_{v_i\in\vvv'}\scoreTU{\vr}(v_i,\{\alt_j\})+\sum_{v_i\notin\vvv'}\scoreTU{\vr}(v_i,\{\alt_j\})$, i.e., the \votsep\ property. In the sixth line, we use that $\scoreTU{\vr}(\vvv,\{\alt_k\})\geq0$.
	\end{claimproof}
	Combining \eqref{TU-core-AV-LHS} with \cref{claim:lower-bound} yields
	$\sum_{v_i\in \vvv'}\alloc(v_i) \ge \score{\vr}(\vvv',\W')$, as desired. 
	Since $\vvv'\subseteq \vvv$ was arbitrary, no coalition blocks~$\alloc$, hence $\alloc$ lies in the \TU-core.
\end{proof}

\begin{example}\label{ex:AVTUalg}
  Recall the instance from \cref{ex:TUgames} with $\kk=2$.
  For \AV, the winning committee is $\W=\{\alt_1,\alt_2\}$
  with $\score{\AV}(\vvv,\{\alt_1\}) = 3$ and $\score{\AV}(\vvv,\{\alt_2\})=2$.
  \cref{alg:TU-core-totsep} gives every voter~$\alloc_2(v_i)=2\cdot 2 / 4 =1$.
  The surplus of~$\alt_1$ over~$\alt_2$ (line \ref{surplus2}) is shared equally among~$\alt_1$'s three approving voters~$\alloc_1(v_i)=(3-2)/3$ for all~$i\in [3]$, while $\alloc_1(v_4)=0$ since she does not approve~$\alt_1$.
  Summing the two components yields $\alloc=(1+1/3, 1+1/3, 1+1/3, 1+0)=(4/3,4/3,4/3,1)$.
  For \SAV, the winning committee is $\W=\{\alt_1,\alt_2\}$
  with $\score{\SAV}(\vvv,\{\alt_1\}) = 2.5$ and $\score{\SAV}(\vvv,\{\alt_2\})=1$.
  \cref{alg:TU-core-totsep} gives every voter~$\alloc_2(v_i)=2\cdot 1 / 4 =0.5$.
  
  The surplus of~$\alt_1$ over~$\alt_2$ (line \ref{surplus2}) is shared among~$\alt_1$'s three approving voters according to the score they assign to $\alt_1$~$\alloc_1(v_1)=\alloc_1(v_2)=(2.5-1)/2.5\cdot 1=0.6$ and $\alloc_1(v_3)=(2.5-1)/2.5\cdot 0.5=0.3$, while $\alloc_1(v_4)=0$ since she does not approve~$\alt_1$.
  Summing the two components yields $\alloc=(0.5+0.6 ,0.5+0.6,0.5+0.3,0.5)=(1.1,1.1,0.8,0.5)$.
\end{example}
The preceding results rely on the additivity of \totsep\ voting rules, which allows us to compare any ``deviating'' committee to a baseline alternative and to distribute ``surplus'' only among approvers. Note that \cref{alg:TU-core-totsep} and \cref{thm:AV-TU-Core-Non-Empty} also work for voting rules over linear preferences that are \totsep. This includes the multi-winner variant of the Borda rule.
For \CC\ and \PAV, the voter-level scoring functions introduce complementarities and nonlinearities that break this structure, and core stability may fail even with transfers.

\begin{restatable}[]{proposition}{propPAVCCTUCNE}\label{prop:PAVCCTUCNE}
  \TUcore{\CC} and \TUcore{\PAV} can be empty. 
\end{restatable}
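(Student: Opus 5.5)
The plan is to exhibit one explicit instance that works for both rules. I will certify emptiness of the core with the easy direction of \cref{prop:BS}: I will find a balanced weight vector $\wvector$ with $\sum_{S}\weight_S\,\chi(S)>\chi(\vvv)$. Equivalently, I will sum the core inequalities $\sum_{v\in S}\alloc(v)\ge\chi(S)$ over a regular family of coalitions and obtain a lower bound on $\sum_{v\in\vvv}\alloc(v)$ that exceeds $\chi(\vvv)$.

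\textbf{Construction.} Take $|\vvv|=6$ and $\kk=2$. Then every coalition of size $3$ has seat cap $\lfloor 3\cdot 2/6\rfloor=1$, and the grand coalition has cap $2$. Let $\mathcal{T}$ be a family of ten $3$-subsets of $\vvv$ forming a $2$-$(6,3,2)$ design, i.e., one of the two classes of the standard partition of the $20$ triples into complementary halves. In this family each voter lies in exactly $5$ triples, and any two triples intersect, since the complement of a triple in $\mathcal{T}$ is not in $\mathcal{T}$. For each $T\in\mathcal{T}$ introduce an alternative $\alt_T$. Voter $v$ approves $\alt_T$ if and only if $v\in T$.

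\textbf{Coalition values.} For $\vr\in\{\CC,\PAV\}$ and every $T\in\mathcal{T}$, the coalition $T$ can propose $\{\alt_T\}$, which gives each member score $1$, so $\chi(T)\ge 3$. For the grand coalition, any committee $\{\alt_T,\alt_{T'}\}$ has $|T\cup T'|\le 5$ because $T\cap T'\ne\emptyset$.
\begin{itemize}
\item Under \CC, this gives $\cval{\CC}(\vvv)\le 5$.
\item Under \PAV, the voters in $T\cap T'$ get $1+\tfrac12$ and the others get $1$. Using $|T\cap T'|\ge1$, I will check that the value is at most $5.5$; the case $|T\cap T'|=1$ gives exactly $5.5$, and larger intersections give less. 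So $\cval{\PAV}(\vvv)\le 5.5$.
\end{itemize}
This pairwise-intersection bound is the main point to check, along with the verification that the chosen ten triples really are pairwise intersecting and $5$-regular. I will verify both by listing the design explicitly, e.g.\ via a standard presentation of the $10$ blocks.

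\textbf{Contradiction.} Suppose $\alloc$ is in the $\TU$-core. Summing $\sum_{v\in T}\alloc(v)\ge 3$ over all ten $T\in\mathcal{T}$ gives $5\sum_{v\in\vvv}\alloc(v)\ge 30$, hence $\sum_{v}\alloc(v)\ge 6$. This is the balanced vector $\weight_T=1/5$ in disguise. But feasibility requires $\sum_v\alloc(v)=\chi(\vvv)\le 5.5<6$, a contradiction, so both cores are empty. If the design bookkeeping turns out awkward, my fallback is a smaller hand-built instance with the same two features: a regular family of pairwise-intersecting coalitions with seat cap $1$ that are each fully covered by one alternative, while the grand-coalition cap is too small to cover everyone.
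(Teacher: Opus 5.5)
Your proof is correct and follows the same strategy as the paper: an instance with $|\vvv|=6$ and $\kk=2$, a regular family of pairwise-intersecting seat-cap-$1$ triples that are each fully covered by one alternative, and a summation of their coalition inequalities forcing $\sum_v\alloc(v)\ge 6$, which exceeds $\cval{\CC}(\vvv)=5$ and $\cval{\PAV}(\vvv)=5.5$. The paper uses the smaller family you mention as a fallback (four alternatives, voters as the six edges of $K_4$, each voter in two of the four triples); your $2$-$(6,3,2)$ design works equally well, since no block's complement is a block, so any two blocks intersect.
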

\begin{proof}
  Consider the instance~$I=(\ppp,\kk)$ with voters~$\vvv=\{v_1,\dots,v_6\}$, alternatives~$\aaa=\{\alt_1,\alt_2,\alt_3,\alt_4\}$, and $\kk=2$.
  The approval preferences of the voters are as follows: $\app_1=\{\alt_1,\alt_2\}$,
  $\app_2=\{\alt_1,\alt_4\}$,
  $\app_3=\{\alt_1,\alt_3\}$,
  $\app_4=\{\alt_2,\alt_3\}$,
  $\app_5=\{\alt_2,\alt_4\}$,
  $\app_6=\{\alt_3, \alt_4\}$.

  One can verify that for any size-$2$ committee the \CC-score is at most five,
  while the \PAV-score is at most $5.5$,
  i.e., $\cval{\CC}(\vvv)=5$ and  $\cval{\PAV}(\vvv)= 5.5$. 
  However, every alternative is approved by exactly three voters.
  Since $|\vvv|/\kk=3$, the corresponding three voters can claim one seat and obtain a total score of three (under \CC\ as well as \PAV).
  These give four linear inequalities:
  \begin{align*}
    \alloc_1+\alloc_2+\alloc_3\geq 3,\quad    \alloc_1+\alloc_4+\alloc_5\geq 3, \quad
    \alloc_3+\alloc_4+\alloc_6\geq 3,\quad   \alloc_2+\alloc_5+\alloc_6\geq 3.
  \end{align*}
  Combining them yields $\sum_{i\in [6]}\alloc_i \ge 6$, which is not feasible given that the grand coalition can only achieve a score of at most $5$ (resp. $5.5$).
\end{proof}

Even when the \TU-core is non-empty, core stability is not guaranteed by standard fairness-driven %
solution concepts.
The next proposition formalizes this point for our setting: the Shapley value can assign too little total payoff to some entitled coalition, enabling a blocking deviation.

\begin{restatable}[]{proposition}{propShapley}\label{prop:Shapley}
  The utility function defined by the Shapley value is not always in $\TUcore{\vr}$ for any $\vr\in \{\AV, \SAV, \CC, \PAV\}$,
  even if the core is not empty. %
\end{restatable}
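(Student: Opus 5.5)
The plan is to refute the statement by explicit counterexamples, one per rule (or a single instance serving several rules). For each instance I will check two things. First, the \TU-core is non-empty: for \AV\ and \SAV\ this is automatic by \cref{thm:AV-TU-Core-Non-Empty}, while for \CC\ and \PAV\ I will exhibit a concrete core vector. Second, the Shapley value $\phi$ violates at least one coalition inequality $\sum_{v\in \vvv'}\phi(v)\ge \cvalvr(\vvv')$.

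To keep the Shapley computations tractable, I will work with highly symmetric profiles in which the voters fall into two or three equivalence classes. Symmetric voters receive equal Shapley values, so only one or two marginal-contribution sums need to be evaluated. Each sum is taken over the positions of a voter in a random ordering, weighting each predecessor set $S$ by $|S|!(n-|S|-1)!/n!$.

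A first test shows that the instance of \cref{ex:TUgames} does not suffice for \AV. The coalition values are $0$ for singletons, $2,2,2,1,1,2$ for the pairs, $3,2,2,2$ for the triples, and $5$ for the grand coalition. This yields $\phi=(4/3,4/3,3/2,5/6)$, which satisfies all the core inequalities listed there.

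My first serious candidate is therefore the instance of \cref{ex1}, with $\kk=3$, four voters approving $\{\alt_1,\alt_2,\alt_3\}$, and two voters approving $\{\alt_4,\alt_5\}$. The coalition $\{v_5,v_6\}$ already has seat cap $1$ and \AV-value $2$. Moreover, $v_5,v_6$ add little in large coalitions, because a seat spent on $\alt_4$ displaces an alternative worth up to four approvals. I therefore expect their Shapley value $b$ to fall below $1$, so that $\phi(v_5)+\phi(v_6)=2b<2=\cval{\AV}(\{v_5,v_6\})$.

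By symmetry, only $b$ needs to be computed. The coalition values depend only on the pair $(\#\text{first-group members},\#\text{second-group members})$ and on the seat cap $\lfloor |\vvv'|/2\rfloor$, so tabulating them is routine. The same instance, with scores rescaled by $1/3$ and $1/2$ for the two groups, should work for \SAV.

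For \CC\ and \PAV\ the instance of \cref{ex1} is less suited to the argument: under \CC, a seat already suffices to saturate every voter in a group, which limits the room for imbalance. If $\phi$ turns out to lie in the core, I will instead modify the instance of \cref{ex:TUgames}, whose \TUcore{\CC} and \TUcore{\PAV} are given explicitly there. For \CC\ that core is very thin, since it forces $\epsilon_1+\epsilon_2=0$ together with $\alloc(v_3)+\alloc(v_4)\ge 2$. I expect the Shapley value to break these inequalities; for instance, symmetry gives $\phi(v_1)=\phi(v_2)$, and I would check whether $\phi(v_1)>1$, which the equality $\epsilon_1+\epsilon_2=0$ forbids.

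The main obstacle is that the Shapley value may happen to land inside the core, as it does in the \AV\ test above. The remedy is to perturb the instance by adding duplicated voters or dummy alternatives, driving a small coalition's entitlement up while its marginal contributions to large coalitions stay low. The final step for each rule is a direct and mechanical verification: list the coalition values up to symmetry, compute $\phi$, and name the violated inequality together with a core vector witnessing non-emptiness.
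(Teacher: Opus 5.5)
Since the statement is purely existential, its whole content is a verified counterexample for each rule. For \AV\ and \SAV\ the one you commit to fails. On the instance of \cref{ex1} the Shapley value is $\phi=(\frac{7}{3},\frac{7}{3},\frac{7}{3},\frac{7}{3},\frac{4}{3},\frac{4}{3})$ under \AV\ and $\phi=(\frac{13}{18},\dots,\frac{13}{18},\frac{5}{9},\frac{5}{9})$ under \SAV. Both satisfy every coalition inequality (e.g.\ $\phi(v_5)+\phi(v_6)$ is $\frac{8}{3}>2$, resp.\ $\frac{10}{9}>1$), so both lie in the \TU-core.

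Your heuristic is backwards. $v_5$ and $v_6$ matter in large coalitions precisely because they raise the seat cap, and the extra seat is spent on another of $\alt_1,\alt_2,\alt_3$. Adding $v_5$ to $\{v_1,\dots,v_4,v_6\}$ raises the \AV-value from $8$ to $12$, and adding it to three first-group voters raises it from $3$ to $6$.

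Your fallback \cref{ex:TUgames} fails for \PAV\ as well: $\phi=(\frac{29}{24},\frac{29}{24},\frac{11}{8},\frac{17}{24})$ lies in the \PAV-core described there ($\epsilon_1+\epsilon_2=\frac{5}{12}\le\frac{1}{2}$). It does work for \CC, where $\phi=(\frac{13}{12},\frac{13}{12},\frac{5}{4},\frac{7}{12})$ is blocked by $\{v_3,v_4\}$ because $\frac{11}{6}<2$. Contrary to your expectation, \cref{ex1} works for \CC\ and \PAV: there $v_5,v_6$ receive $\frac{13}{15}$, resp.\ $\frac{29}{30}$, each, so $\{v_5,v_6\}$ blocks. ``Perturbing the instance'' is not a proof step, so \AV\ and \SAV\ remain unproved in your proposal.

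What is missing is a construction that pins down the core in a way the Shapley value cannot follow. A useful diagnostic: if $\vvv$ splits into coalitions whose values add up to $\cvalvr(\vvv)$, every core vector must give each block exactly its stand-alone value. An asymmetry between the blocks' marginal contributions then easily pushes the Shapley value out. In \cref{ex1} the split $\{v_1,\dots,v_4\},\{v_5,v_6\}$ reaches only $8+2<12$ under \AV\ (and $\frac{8}{3}+1<4$ under \SAV). Under \CC\ it reaches exactly $4+2=6$, and under \PAV\ exactly $6+2=8$, which is why only those two succeed.

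The paper's \AV\ instance is built this way: $\app_1=\app_2=\app_3=\{\alt_1,\alt_2\}$, $\app_4=\app_5=\app_6=\{\alt_3,\alt_4,\alt_5\}$, $\kk=4$. Each group alone has value $6$, summing to the grand value $12$. Yet the second group's third alternative makes it worth more once three or more seats are available, so $\phi_1=\frac{19}{10}$ and $\{v_1,v_2,v_3\}$ blocks with $\frac{57}{10}<6$. For \SAV\ the paper uses the strongly asymmetric profile $\app_1=\{\alt_1\}$, $\app_2=\app_3=\app_4=\{\alt_2,\dots,\alt_6\}$, $\kk=2$, where $\{v_1,v_2\}$ blocks with $\frac{29}{30}<1$. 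For \CC\ and \PAV\ it uses the profile $\{\alt_1,\alt_2\},\{\alt_1,\alt_3\},\{\alt_2,\alt_3\},\{\alt_4\}$ with $\kk=2$.
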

\begin{proof}
We first consider \AV.
Consider the following profile~$\ppp=(\aaa, \vvv, \RR)$ with 
$\aaa=\{\alt_1,\dots,\alt_5\}$, $\vvv=\{v_1,\dots,v_6\}$, and
$\app_1=\app_2=\app_3=\{\alt_1,\alt_2\}$, $\app_4=\app_5=\app_6=\{\alt_3,\alt_4,\alt_5\}$. Let $\kk=4$.

\mypara{Characteristic function.}
Every alternative has three approving voters: $|\VV(\alt_j)|=3$, $j\in [5]$.
Seat caps are $\lfloor|\vvv'|\cdot 4/6\rfloor$, giving cap $0,1,2,2,3,4$ for
$|\vvv'|=1,\dots,6$.
By symmetry, coalition values depend only on the group composition~$(g_1,g_2)$ (number of voters from each group).
The non-trivial values are:

\begin{tabular}{@{}c|c@{\;}c@{\;}c@{\;}c@{\;}c@{\;}c@{\;}c@{\;}c@{\;}c@{\;}c@{\;}c@{\;}c@{\;}c@{\;}c@{}}
  \toprule
$(g_1,g_2)$ & $(2,0)$ & $(1,1)$ & $(0,2)$ & $(3,0)$ & $(2,1)$ & $(1,2)$ & $(0,3)$ & $(3,1)$ & $(2,2)$ & $(1,3)$ & $(3,2)$ & $(2,3)$ & $(3,3)$ \\
\hline
$\chi$ & 2 & 1 & 2 & 6 & 4 & 4 & 6 & 6 & 4 & 6 & 8 & 9 & 12\\\bottomrule
\end{tabular}
    
\noindent The grand coalition value is $\chi(3,3)=12$ (e.g., $\{\alt_1,\alt_2,\alt_3,\alt_4\}$ gives $3\cdot 2+3\cdot 2=12$).
    
\mypara{Shapley value.} For voter~$v_i$,
\begin{align*}
      \phi_i = \sum_{\vvv'\subseteq \vvv\setminus \{v_i\}} \frac{|\vvv'|! (|\vvv|- |\vvv'|-1)!}{|\vvv|!}\cdot \bigl(\chi(\vvv'\cup \{v_i\})-\chi(\vvv')\bigr).
\end{align*}

By symmetry $\phi_1=\phi_2=\phi_3$ and $\phi_4=\phi_5=\phi_6$, with $\phi_1+\phi_4=4$.

Grouping $v_1$'s marginal contributions by $(a,b)$, the number of members of $S\subseteq\mathcal{V}\setminus\{v_1\}$ from each group
($a\in\{0,1,2\}$, $b\in\{0,1,2,3\}$), with multiplicity $\binom{2}{a}\binom{3}{b}$ and Shapley weight $\tfrac{(a{+}b)!\,(5{-}a{-}b)!}{6!}$, the nine \emph{non-zero} terms are from~$(0,1)$, $(0,2)$,
    $(1,0)$, $(1,1)$, $(1,3)$, $(2,0)$, $(2,1)$, $(2,2)$, $(2,3)$. 
This yields Shapley value~$\phi_1=\tfrac{19}{10}$ (we omit the routine calculation).

\smallskip
   
\mypara{Blocking.}
Coalition $\{v_1,v_2,v_3\}$ has seat cap~$2$ and picks $\{\alt_1,\alt_2\}$, achieving score~$6$, while
$\phi_1+\phi_2+\phi_3=\tfrac{57}{10}<6$.
The \TUcore{\AV} is non-empty by \cref{thm:AV-TU-Core-Non-Empty}; \cref{alg:TU-core-totsep} yields $\boldsymbol{\alpha}=(2,2,2,2,2,2)$.

\medskip
We next consider \SAV.
Consider the following profile~$\ppp=(\aaa, \vvv, \RR)$ with 
$\aaa=\{\alt_1,\dots,\alt_6\}$, $\vvv=\{v_1,\dots,v_4\}$, and
$\app_1=\{\alt_1\}$,
$\app_2=\app_3=\app_4=\{\alt_2,\ldots,\alt_6\}$. Let $\kk=2$.

\mypara{Characteristic function.}
Seat caps are $\lfloor|\vvv'|\cdot 2/4\rfloor$, giving cap $0,1,1,2$ for
$|\vvv'|=1,\dots,4$.
By symmetry, coalition values depend only on the group composition~$(g_1,g_2)$ (number of voters from each group).
The non-trivial values are:

\begin{tabular}{@{}c|c@{\;}c@{\;}c@{\;}c@{\;}c@{}}
	\toprule
	$(g_1,g_2)$ & $(1,1)$ & $(0,2)$ & $(1,2)$ & $(0,3)$ & $(1,3)$\\
	\hline
	$\chi$ & $1$ & $\frac{2}{5}$ & $1$ & $\frac{3}{5}$ & $\frac{8}{5}$\\\bottomrule
\end{tabular}

\noindent The grand coalition value is $\chi(1,3)=\frac{8}{5}$ (e.g., $\{c_1,c_2\}$ gives $1+\frac{3}{5}=\frac{8}{5}$).

\mypara{Shapley value.} 
By symmetry $\phi_2=\phi_3=\phi_4$, with $\phi_1+\phi_2+\phi_3+\phi_4=\frac{8}{5}$.

Grouping $v_1$'s marginal contributions by $(0,a)$, the number of members of $S\subseteq\mathcal{V}\setminus\{v_1\}$ from the second group ($a\in\{1,2,3\}$), with multiplicity $\binom{3}{a}$ and Shapley weight $\tfrac{(a)!\,(3{-}a)!}{4!}$, the three \emph{non-zero} terms are from~$(0,1)$, $(0,2)$,
$(0,3)$. 
This yields Shapley value~$\phi_1=\tfrac{65}{100}$ (we omit the routine calculation), which implies $\phi_2=\phi_3=\phi_4=\frac{95}{300}$.

\smallskip

\mypara{Blocking.}
Coalition $\{v_1,v_2\}$ has seat cap~$1$ and picks $\{\alt_1\}$, achieving score~$1$, while
$\phi_1+\phi_2=\tfrac{290}{300}<1$.
The \TUcore{\SAV} is non-empty by \cref{thm:AV-TU-Core-Non-Empty}; $\boldsymbol{\alpha}=(1,\frac{1}{5},\frac{1}{5},\frac{1}{5})$ is in the core.

\medskip
\noindent We now consider \CC. 
Consider the following profile~$\ppp=(\aaa, \vvv, \RR)$ with $\aaa=\{\alt_1,\alt_2, \alt_3,\alt_4\}$, $\vvv=\{v_1, v_2,v_3,v_4\}$,
and $\app_1=\{\alt_1,\alt_2\}$, $\app_2=\{\alt_1,\alt_3\}$, $\app_3=\{\alt_2,\alt_3\}$, $\app_4=\{\alt_4\}$.
Let~$\kk=2$. 	

\mypara{Characteristic function.} Every alternative except for $\alt_4$ has two approving voters: $|\VV(\alt_j)|=2$, $j\in[3]$ and $|\VV(\alt_4)|=1$. Seat caps are $\lfloor|\vvv'|\cdot\frac{2}{4}\rfloor$, giving cap $0,1,1,2$ for $|\vvv'|=1,\ldots,4$. By symmetry, coalition values depend only on the group composition $(g_1,g_2)$, where $g_1$ is the number of voters from $\{v_1,v_2,v_3\}$ and $g_2$ is the number of voters from $\{v_4\}$. The non-trivial values are:

\begin{tabular}{@{}c|c@{\;}c@{\;}c@{\;}c@{\;}c@{}}
  \toprule
$(g_1,g_2)$ & $(2,0)$ & $(1,1)$ & $(3,0)$ & $(2,1)$ & $(3,1)$ \\
\hline
$\chi$ & 2 & 1 & 2 & 2 & 3\\\bottomrule
\end{tabular}

The grand coalition value is $\chi(3,1)=3$ (e.g.,$\{\alt_1,\alt_2\}$ gives 3).

\mypara{Shapley value.}
By symmetry $\phi_1=\phi_2=\phi_3$. Grouping $v_1$'s marginal contributions by $(a,b)$ the number of members of $S\subseteq\vvv\setminus\{v_1\}$ from each group ($a\in\{1,2\}$, $b\in\{0,1\}$), with multiplicity $\binom{2}{a}\binom{1}{b}$ and Shapley weight $\frac{(a+b)!(3-a-b)!}{4!}$, the four \emph{non-zero} terms are from $(1,0),(0,1),(1,1),(2,1)$. This yields Shapley value $\phi_1=\frac{5}{6}$ (we omit the routine calculation).

\mypara{Blocking.} Coalition $\{v_1,v_2\}$ has seat cap $1$ and picks $\{\alt_1\}$, achieving score $2$, while $\phi_1+\phi_2=\frac{10}{6}<2$. The \TU-\CC-core is non-empty, as $\alloc=(1,1,1,0)$ is in the core.

\medskip
\noindent Finally, we consider \PAV: Consider the same profile, as for \CC.

\mypara{Characteristic function.} The characteristic function is mostly identical to the \CC-case with the small difference of $\chi(3,1)=3.5$. This gives us the following non-trivial values:

\begin{tabular}{@{}c|c@{\;}c@{\;}c@{\;}c@{\;}c@{}}
  \toprule
$(g_1,g_2)$ & $(2,0)$ & $(1,1)$ & $(3,0)$ & $(2,1)$ & $(3,1)$ \\
\hline
$\chi$ & 2 & 1 & 2 & 2 & 3.5\\\bottomrule
\end{tabular}

\mypara{Shapley value.} Using the same methods and arguments as for \CC, we get the Shapley value $\phi_1=\frac{23}{24}$. 

\mypara{Blocking.} Coalition $\{v_1,v_2\}$ has seat cap $1$ and picks $\{\alt_1\}$, achieving score $2$, while $\phi_1+\phi_2=\frac{23}{12}<2$. The \TU-\PAV-core is non-empty, as $\alloc=(1,1,1,0.5)$ is in the core.

\end{proof}

In summary, the \totsep\ rules are the rules among the four for which the \TU-core is unconditionally non-empty; we next ask how this picture changes when transfers are removed.

\subsection{\NTU: Equivalences and the \CC\ core}\label{sub:structure-NTU}
We now turn to the \NTU\ interpretation, which matches the original committee-core model~\cite{AzizBCEFW17}
in which coalitions cannot transfer utility, and blocking requires a deviation that strictly improves each member's realized score.
This makes the core sensitive to the feasible \emph{utility profiles} induced by committees, rather than only to total attainable score.
We first clarify the \AV-\SAV-\PAV\ equivalence (\cref{obs:AVPAVequiv}) and then focus on \CC.

\begin{restatable}[]{observation}{obsAVPAVequiv}\label{obs:AVPAVequiv}
  Let $(\ppp,\kk)$ be an \mw-instance with $\ppp=(\aaa,\vvv,\RR)$.
  For each coalition~$\vvv'\subseteq \vvv$,
  the function $f_{\vvv'}\colon \NTUutil{\AV}(\vvv')\to \NTUutil{\PAV}(\vvv')$ defined by
    $f_{\vvv'}(\alloc)(v)=\sum_{z\in [\alloc(v)]} 1/z$, for all $v\in\vvv'$,
  where the sum is interpreted as $0$ if $\alloc(v)=0$ is a bijection, and both $f_{\vvv'}$ and $f_{\vvv'}^{-1}$ are computable in polynomial time. 
  For each coalition~$\vvv'\subseteq \vvv$,
  the function $g_{\vvv'}\colon \NTUutil{\AV}(\vvv')\to \NTUutil{\SAV}(\vvv')$ defined by
  $g_{\vvv'}(\alloc)(v)=\begin{cases}
  	\frac{\alloc(v)}{|\app(v)|} & \app(v)\neq\emptyset \\
  	0 & \, \text{else}
  \end{cases}$,  is a bijection, and both $g_{\vvv'}$ and $g_{\vvv'}^{-1}$ are computable in polynomial time. 
  Consequently, for each coalition~$\vvv'\subseteq \vvv$, the function $h_{\vvv'}\colon \NTUutil{\SAV}(\vvv')\to \NTUutil{\PAV}(\vvv')$ defined by $h_{\vvv'}=f_{\vvv'}\circ g_{\vvv'}^{-1}$ is a bijection, and both $h_{\vvv'}$ and $h_{\vvv'}^{-1}$ are computable in polynomial time. 
  Moreover, for every $\alloc\in \NTUutil{\AV}(\vvv)$ and every coalition $\vvv'\subseteq\vvv$,
  {\centering
  	$\vvv' \text{ is } \NTU\text{-}\AV\text{-blocking for }\alloc
  	\text{ if and only if it is } \NTU\text{-}\PAV\text{-blocking for } f_{\vvv'}(\alloc)$\\
  	and\\
  	$\vvv' \text{ is } \NTU\text{-}\AV\text{-blocking for }\alloc
  	\text{ if and only if it is } \NTU\text{-}\SAV\text{-blocking for } g_{\vvv'}(\alloc)$.
  	\par}
   As a consequence, for every $\alloc\in \NTUutil{\SAV}(\vvv)$ and every coalition $\vvv'\subseteq\vvv$,
    {\centering
    $\vvv' \text{ is } \NTU\text{-}\SAV\text{-blocking for }\alloc
    \text{ if and only if it is } \NTU\text{-}\PAV\text{-blocking for } h_{\vvv'}(\alloc)$.
    \par}
\end{restatable}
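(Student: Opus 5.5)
The plan is to derive everything from one fact: for a fixed voter $v$, the per-voter \PAV- and \SAV-scores are strictly increasing functions of the per-voter \AV-score. Write $a=|\app(v)\cap\W'|$. Then $\score{\PAV}(v,\W')=H(a)$ with $H(a)=\sum_{z\in[a]}1/z$ and $H(0)=0$. Likewise $\score{\SAV}(v,\W')=a/|\app(v)|$ when $\app(v)\neq\emptyset$, and $0$ otherwise. Applied coordinatewise, these maps send the \AV-utility vector induced by a committee $\W'$ to the \PAV- (resp.\ \SAV-) vector induced by the same $\W'$.

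\textbf{Well-definedness and surjectivity.} Let $\alloc\in\NTUutil{\AV}(\vvv')$, realized by an admissible $\W'$. Then $f_{\vvv'}(\alloc)$ is the \PAV-vector of $\W'$, so it lies in $\NTUutil{\PAV}(\vvv')$. The same argument puts $g_{\vvv'}(\alloc)$ in $\NTUutil{\SAV}(\vvv')$. Both sets are defined through the same family of admissible committees, so every \PAV-vector (resp.\ \SAV-vector) is the image of the \AV-vector of its realizing committee. This gives surjectivity.

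\textbf{Injectivity and polynomial-time computability.} $H$ is strictly increasing on $\mathds{N}$, so $f_{\vvv'}$ is injective. To invert it in polynomial time, scan $a=0,1,\dots,|\aaa|$ and return the unique $a$ with $H(a)=\alloc'(v)$; these are rational computations of polynomial size. For $g_{\vvv'}$ with $\app(v)\neq\emptyset$, the inverse is multiplication by $|\app(v)|$. The only subtle point is voters with $\app(v)=\emptyset$. Their \AV-utility is forced to $0$ in every feasible vector, so sending $0\mapsto 0$ is still injective on the feasible domain. I expect this to be the one spot that needs explicit care, together with reading $H$ only on integers. The bijection and computability claims for $h_{\vvv'}=f_{\vvv'}\circ g_{\vvv'}^{-1}$ then follow by composition.

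\textbf{Blocking equivalence.} Blocking can be witnessed by committees, so I would argue with witnesses. A committee $\W'$ with $|\W'|\le\share$ is an \AV-witness iff $\alloc(v)<|\app(v)\cap\W'|$ for all $v\in\vvv'$. Since $\alloc(v)$ is an integer and $H$ is strictly increasing, this holds iff $H(\alloc(v))<H(|\app(v)\cap\W'|)$, that is, iff $f(\alloc)(v)<\score{\PAV}(v,\W')$. Here $f$ is applied to the restriction of $\alloc$, which is just the same coordinatewise map.

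For \SAV, a voter with $\app(v)=\emptyset$ can never strictly improve under either rule. So both notions of blocking fail whenever such a voter is in $\vvv'$, and the equivalence holds trivially there. For all other voters, dividing by $|\app(v)|>0$ preserves strict inequalities.

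Finally, the statement about $h_{\vvv'}$ follows by chaining the two equivalences through $g_{\vvv'}^{-1}(\alloc)$.
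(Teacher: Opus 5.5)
Your proposal is correct and follows essentially the same route as the paper. Both arguments push \AV-vectors through coordinatewise strictly increasing maps (the harmonic sum, and division by $|\app(v)|$), reuse the realizing or witnessing committee for feasibility, surjectivity and blocking, and obtain the claim for $h_{\vvv'}$ by chaining. Your version is also slightly more careful than the paper's: you realize coalition vectors by admissible committees rather than size-$\kk$ ones, and you explicitly handle empty ballots when proving injectivity of $g_{\vvv'}$.
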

\begin{proof}
  We first show the statements regarding the connection between \AV\ and \PAV:
  To simplify the notation, for each fixed $x\in\{0,\dots,\kk\}$ let \myemph{$\phi(x)$} $\coloneqq \sum_{z=1}^{x}\frac{1}{z}$ (with $\phi(0)=0$) denote the harmonic sum over~$x$.
  Note that $f_{\vvv'}(\alloc)(v)=\phi(\alloc(v))$ for all $v\in\vvv'$.
  
  We first show that $f$ is a bijection.
  Let $\alloc\in\NTUutil{\AV}(\vvv')$ be a feasible utility function for~$\vvv'$ under \AV.
  By feasibility, there exists a committee~$\W$ of size $\kk$ such that
  $\alloc(v)=\scoreNTU{\AV}(v,\W)$ for all $v\in\vvv'$.
  By the definition of \PAV,
  \[
    \scoreNTU{\PAV}(v,\W)=\phi(\alloc(v))=f_{\vvv'}(\alloc)(v),
  \]
  hence $f_{\vvv'}(\alloc)\in\NTUutil{\PAV}(\vvv')$.

  Conversely, let $\alloc'\in\NTUutil{\PAV}(\vvv')$ and let $\W$ witness feasibility, i.e.,
  $\alloc'(v)=\scoreNTU{\PAV}(v,\W)=\phi(\scoreNTU{\AV}(v,W))$ for all $v\in\vvv'$.
  Define $\alloc(v)=\scoreNTU{\AV}(v,\W)\in\{0,\dots,\kk\}$.
  Then $\alloc\in\NTUutil{\AV}(\vvv')$ and $f_{\vvv'}(\alloc)=\alloc'$.

  Since $\phi$ is strictly increasing on $\{0,\dots,\kk\}$, it is injective; therefore $f_{\vvv'}$ is a bijection.
  Polynomial-time computability follows by precomputing $\phi(0),\dots,\phi(\kk)$ and inverting via table lookup.

  Now, we show the blocking equivalence.
  Let $\alloc\in\NTUutil{\AV}(\vvv')$ and write $\alloc'=f_{\vvv'}(\alloc)$.
  Assume that $\vvv'$ is \NTUblocking{\AV} $\alloc$.
  By definition, let $\W'$ be an admissible committee with $|\W'|\le \share$ such that for all $v\in\vvv'$, $\scoreNTU{\AV}(v,\W')>\alloc(v)$.
  Applying $\phi$ and using strict monotonicity, for all $v\in\vvv'$ we obtain
  $\phi(\scoreNTU{\AV}(v,\W'))>\phi(\alloc(v))$.
  Using $\scoreNTU{\PAV}(v,\W')=\phi(\scoreNTU{\AV}(v,\W'))$ and $\alloc'(v)=\phi(\alloc(v))$,
  this is equivalent to $\scoreNTU{\PAV}(v,\W')>\alloc'(v)$ for all $v\in\vvv'$,
  hence $\vvv'$ is \NTUblocking{\PAV} $\alloc'$.

  The reverse implication is analogous (or follows by applying the same argument to $f_{\vvv'}^{-1}$).
  We now show the statements regarding the connection between \AV\ and \SAV:
  We first show that $g$ is a bijection. 
  Let $\alloc\in\NTUutil{\AV}(\vvv')$ be a feasible utility function for~$\vvv'$ under \AV.
  By feasibility, there exists a committee~$\W$ of size $\kk$ such that
  $\alloc(v)=\scoreNTU{\AV}(v,\W)$ for all $v\in\vvv'$.
    By the definition of \SAV,
  \[
  \scoreNTU{\SAV}(v,\W)=\begin{cases}
  \frac{|\app(v)\cap \W|}{|\app(v)|} & \app(v)\neq\emptyset \\
  	0 & \, \text{else}
  \end{cases}=g_{\vvv'}(\alloc)(v)
  \]
    hence $g_{\vvv'}(\alloc)\in\NTUutil{\SAV}(\vvv')$.
    
      Conversely, let $\alloc'\in\NTUutil{\SAV}(\vvv')$ and let $\W$ witness feasibility, i.e.,
    $\alloc'(v)=\scoreNTU{\SAV}(v,\W)=\begin{cases}
    	\frac{|\app(v)\cap \W|}{|\app(v)|} & \app(v)\neq\emptyset \\
    	0 & \, \text{else}
    \end{cases}$ for all $v\in\vvv'$.
    Define $\alloc(v)=\scoreNTU{\AV}(v,\W)\in\{0,\dots,\kk\}$.
    Then $\alloc\in\NTUutil{\AV}(\vvv')$ and $g_{\vvv'}(\alloc)=\alloc'$.
    Since the value of $\frac{|\app(v)\cap \W|}{|\app(v)|}$ is strictly increasing with increasing $|\app(v)\cap \W|$, it is injective; therefore $g_{\vvv'}$ is a bijection. Polynomial-time computability follows directly by definition of the scores, as we just divide (or multiply) by $|\app(v)|$.
    
      Now, we show the blocking equivalence.
    Let $\alloc\in\NTUutil{\AV}(\vvv')$ and write $\alloc'=g_{\vvv'}(\alloc)$.
    Assume that $\vvv'$ is \NTUblocking{\AV} $\alloc$.
    By definition, let $\W'$ be an admissible committee with $|\W'|\le \share$ such that for all $v\in\vvv'$, $\scoreNTU{\AV}(v,\W')>\alloc(v)$. As all voters must improve we know that for all $v\in\vvv'$ $\app(v)\neq\emptyset$.
    
    By dividing by $|\app(v)|$, for all $v\in\vvv'$ we obtain
    $\frac{\scoreNTU{\AV}(v,\W')}{|\app(v)|}>\frac{\alloc(v)}{|\app(v)|}$.
    Using $\scoreNTU{\SAV}(v,\W')=\frac{\scoreNTU{\AV}(v,\W')}{|\app(v)|}$ and $\alloc'(v)=\frac{\alloc(v)}{|\app(v)|}$,
    this is equivalent to $\scoreNTU{\SAV}(v,\W')>\alloc'(v)$ for all $v\in\vvv'$,
    hence $\vvv'$ is \NTUblocking{\SAV} $\alloc'$.
      The reverse implication is analogous (or follows by applying the same argument to $g_{\vvv'}^{-1}$).
      
      Finally the statement for $h$ and the blocking equivalence follows directly by chaining the two above arguments.
\end{proof}

Observation~\ref{obs:AVPAVequiv} shows that, in the \NTU\ model, the \AV- \SAV- and \PAV-induced cores are equivalent up to a monotone transformation of voter utilities.
Accordingly, the open non-emptiness question for the committee core can be phrased under either scoring rule.
In contrast, for \CC\ we can resolve existence unconditionally by exploiting a direct correspondence between core stability and representation.
For completeness, we recall the following definition.

\begin{definition}[Justified Representation (JR)\cite{AzizBCEFW17}]\label{def:JR}
  Let $(\ppp,\kk)$ denote an \mw-instance with $\ppp=(\aaa,\vvv,\RR)$.
  A committee~$\W$ with $|\W|=\kk$ satisfies \myemph{justified representation} (\myemph{JR}) if for each cohesive voter coalition~$\vvv'\subseteq \vvv$, it holds that $\W\cap \bigcup_{v_i\in\vvv'} \app_i \neq \emptyset$.
  $\vvv'$ is \myemph{cohesive} if $|\vvv'|\ge \lceil \coalsize \rceil$ and $\bigcap_{v_i\in \vvv'} \app_i \neq \emptyset$.
\end{definition}

For \CC, feasibility and blocking simplify because each voter's utility is binary.
A coalition can block only by moving from utility $0$ to utility $1$ for every member, i.e., by selecting a committee that``covers'' each member with at least one approved alternative.
Under proportional entitlements, this is closely aligned with the JR axiom. %
The next proposition makes this correspondence exact: core stability is equivalent to JR of the witnessing committee.

\begin{restatable}{proposition}{propCCJR}\label{prop:CCJR}
  Let $\alloc$ be an \NTU-utility function feasible for~$\vvv$ under \CC, and let $\W$ be a size-$\kk$ committee witness, i.e.,
  $\alloc(v_i) = \scoreNTU{\CC}(v_i, \W)$ holds for all $v_i \in \vvv$.
  Then, $\alloc \in \NTUcore{\CC}$ if and only if $\W$ satisfies JR.
\end{restatable}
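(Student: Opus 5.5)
We prove the two directions separately, using the observation that under \CC\ every utility is binary: $\alloc(v_i)=1$ if $\app_i\cap\W\neq\emptyset$ and $\alloc(v_i)=0$ otherwise. Consequently a committee~$\W'$ is a witness for $\vvv'$ to be \NTUblocking{\CC} for~$\alloc$ exactly when three things hold: every member of~$\vvv'$ has utility $0$ under~$\W$, every member approves some alternative of~$\W'$, and $|\W'|\le \share$.

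\emph{Core $\Rightarrow$ JR (contrapositive).} Suppose $\W$ violates JR. Then there is a cohesive coalition~$\vvv'$ with $|\vvv'|\ge\lceil \coalsize\rceil$, some alternative $\alt\in\bigcap_{v_i\in\vvv'}\app_i$, and $\W\cap\bigcup_{v_i\in\vvv'}\app_i=\emptyset$. Hence every $v_i\in\vvv'$ has $\alloc(v_i)=0$. Since $|\vvv'|\ge |\vvv|/\kk$, we get $\share\ge 1$, so $\W'=\{\alt\}$ is admissible for~$\vvv'$. Under~$\W'$ every member gets \CC-score $1>0$, so $\vvv'$ is \NTUblocking{\CC} for~$\alloc$ and $\alloc\notin\NTUcore{\CC}$.

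\emph{JR $\Rightarrow$ core.} Assume $\W$ satisfies JR, and suppose for contradiction that $\vvv'$ blocks via an admissible~$\W'$. Let $s:=|\W'|\le\share$. As noted above, every member of~$\vvv'$ has utility $0$ under~$\W$ and approves some alternative of~$\W'$. Partition $\vvv'$ by assigning each voter to one alternative of~$\W'$ that she approves. By pigeonhole, some $\alt\in\W'$ is assigned at least $|\vvv'|/s$ voters. Call this group $\vvv''$.

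The key step is the arithmetic showing $\vvv''$ is large enough to be cohesive. From $s\le |\vvv'|\kk/|\vvv|$ (the floor only helps) we get $|\vvv'|/s\ge |\vvv|/\kk$. Since $|\vvv''|$ is an integer, $|\vvv''|\ge\lceil \coalsize\rceil$. Moreover, all voters in~$\vvv''$ approve~$\alt$, so $\vvv''$ is cohesive. Every voter in~$\vvv''$ has utility $0$ under~$\W$, so $\W\cap\bigcup_{v_i\in\vvv''}\app_i=\emptyset$, which contradicts JR.

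The only subtle point is this pigeonhole/rounding step and the edge case $s=0$. That case is harmless: a blocking coalition must strictly improve every member, which is impossible with the empty committee, so $s\ge1$ and division by~$s$ is legitimate. The case $\vvv'=\emptyset$ is excluded in the same way, since blocking requires a nonempty coalition to have any member who strictly improves.
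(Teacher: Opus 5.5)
Your proof is correct and follows the paper's argument essentially step for step. For the violation-of-JR direction you use the singleton committee $\{\alt\}$ for the unrepresented cohesive group, and for the JR direction you apply pigeonhole over $\W'$ together with $|\W'|\le \share$ to extract a cohesive subgroup of unrepresented voters. Your explicit treatment of $|\W'|=0$ and of the empty coalition is a small but useful addition, since the paper's blocking definition excludes the empty coalition only implicitly.
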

\begin{proof}
    Let $\alloc$ and $\W$ be as in the statement.
  Recall that for \CC\ we have $\scoreNTU{\CC}(v_i,\W)\in\{0,1\}$, and
  $\scoreNTU{\CC}(v_i,\W)=1$ if and only if $\W\cap \app_i\neq\emptyset$.
  To show the statement, we consider two cases: $\W$ satisfies JR and it does not.
  
  Assume that $\W$ satisfies JR and suppose, for contradiction, that $\alloc\notin \NTUcore{\CC}$.
  Then there exist a voter coalition~$\vvv'\subseteq \vvv$ and an admissible committee~$\W'\subseteq \aaa$ with $|\W'|\le \share$ such that
  $\scoreNTU{\CC}(v_i,\W')>\alloc(v_i)$ for all $v_i\in\vvv'$.
  Since both terms lie in $\{0,1\}$, we have $\scoreNTU{\CC}(v_i,\W')=1$ and $\alloc(v_i)=0$ for all $v_i\in\vvv'$.
  Hence each $v_i\in\vvv'$ approves at least one alternative in $\W'$, so by pigeonhole there exists $\alt\in\W'$ approved by at least
  $\lceil \frac{|\vvv'|}{|\W'|} \rceil$~voters in $\vvv'$.
  Let $\vvv''\coloneq \{v_i\in\vvv' \colon \alt\in\app_i\}$; then
  $|\vvv''|\ge \lceil \frac{|\vvv'|}{|\W'|} \rceil\ge \lceil \frac{|\vvv|}{\kk} \rceil$, where the last inequality uses admissibility~$|\W'|\le \share$.
  Thus $\alt\in \cap_{v_i\in\vvv''}\app_i$ and $|\vvv''|\ge \lceil \frac{|\vvv|}{\kk} \rceil$, confirming that $\vvv''$ is cohesive. 
  Moreover, $\alloc(v_i)=0$ for all $v_i\in\vvv''$ implies $\W\cap \cup_{v_i\in\vvv''}\app_i=\emptyset$,
  contradicting JR.

  Now, assume that $\W$ violates JR, witnessed by $\vvv'\subseteq\vvv$ and $\alt\in\cap_{v_i\in\vvv'}\app_i$ with $|\vvv'|\ge \lceil \frac{|\vvv|}{\kk} \rceil$ and $\W\cap\app_i=\emptyset$ for all $v_i\in\vvv'$.
  Then, $\alloc(v_i)=\score{\CC}(v_i, \W)=0$ for all $v_i\in\vvv'$, while for $\W'=\{\alt\}$ we have $\scoreNTU{\CC}(v_i,\W')=1>\alloc(v_i)$ for all $v_i\in\vvv'$.
  Moreover, $\W'$ is admissible for~$\vvv'$ since $|\vvv'|\ge \lceil \frac{|\vvv|}{\kk} \rceil$ implies that $\share  \ge 1 = |\W'| $. 
  Hence, $\vvv'$ is \NTUblocking{\CC} $\alloc$, witnessing that $\alloc\notin \NTUcore{\CC}$.
  
  Therefore, $\alloc\in\NTUcore{\CC}$ if and only if $\W$ satisfies JR.
\end{proof}

\cref{prop:CCJR} immediately implies the following. %
\begin{corollary}\label{cor:JR->NTU-CC-Core}
  Every JR-satisfying voting rule finds a utility function in the $\NTUcore{\CC}$.
\end{corollary}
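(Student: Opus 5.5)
The corollary follows directly from \cref{prop:CCJR}. Let $\vr'$ be any multi-winner voting rule that always outputs a size-$\kk$ committee satisfying JR. Fix an \mw-instance $(\ppp,\kk)$ and let $\W$ be a committee returned by $\vr'$.

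First, I define the utility function $\alloc(v_i)\coloneqq\scoreNTU{\CC}(v_i,\W)$ for all $v_i\in\vvv$. Since $|\W|=\kk=\lfloor |\vvv|\kk/|\vvv|\rfloor$, the committee $\W$ is admissible for the grand coalition. Hence $\alloc\in\NTUutil{\CC}(\vvv)$, so $\alloc$ is feasible and $\W$ is a size-$\kk$ committee witness in the sense required by \cref{prop:CCJR}.

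Second, because $\W$ satisfies JR, the ``if'' direction of \cref{prop:CCJR} gives $\alloc\in\NTUcore{\CC}$. This is exactly the claim: the rule ``finds'' a utility function in the core, namely the \CC-score vector of its output committee.

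No step is a real obstacle. The only point needing care is to read ``finds a utility function'' as taking the \CC-utility vector induced by the committee the rule selects. One should also check that the feasibility notion in \cref{defi:games}, which allows $|\W'|\le\share$, covers a committee of size exactly $\kk$ for $\vvv'=\vvv$; it does, as shown above. If one also wants polynomial time, it suffices to note that a JR committee can be computed efficiently, for instance by the greedy procedure mentioned in the introduction. Computing the \CC-scores of that committee is then trivially polynomial.
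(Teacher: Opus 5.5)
Your proposal is correct and matches the paper's route. The paper states that the corollary follows immediately from \cref{prop:CCJR}, applied to the \CC-utility vector induced by the size-$\kk$ JR committee that the rule outputs, which is exactly your argument.
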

With the correspondence in \cref{prop:CCJR}, it suffices to compute a size-$\kk$ committee $\W$ that satisfies JR, since the induced utility function then belongs to $\NTUcore{\CC}$.
It is known that \CC, \PAV, and a greedy variant of \CC~(aka.\ Greedy\AV~\cite{AzizBCEFW17}) satisfy JR.
However, both \CC\ and \PAV\ are computationally difficult while Greedy\AV\ runs in polynomial time~\citep{AzizBCEFW17}.
For completeness, we present the greedy procedure in
\cref{alg:NTU-core-CC} and give a self-contained proof of the JR property.

\begin{algorithm}[t!]
  \caption{Compute a $\NTUcs{\CC}$ utility function.}\label{alg:NTU-core-CC}
  \KwI{A profile~$\ppp=(\aaa,\vvv,\RR)$ with $\RR=(\app_i)_{v_i\in \vvv}$ and $\kk \in [|\aaa|]$.}

  $\vvv'\leftarrow \vvv$;  $\W \leftarrow \emptyset$;\label{alg:NTU-core-CC:init1}

  Initialize $\alloc\colon \vvv \to \mathds{N}$ with $\alloc(v_i) = 0$ for all~$v_i\in \vvv$;\label{alg:NTU-core-CC:init2}
   
  \While{$\vvv'\neq \emptyset$ \KwAnd $|\W| < \kk$}{
    Let $\alt\leftarrow \argmax_{\alt_j\in \aaa\setminus \W}|\VV(\alt_j)|$\label{alg:NTU-core-CC:argmax} \tcp*{Recall that $\VV(\alt_j)$ denotes the set of voters that approve of~$\alt_j$}

   $\W\leftarrow \W \cup \{\alt\}$;   $\vvv'\leftarrow \vvv'\setminus \VV(\alt)$\label{alg:NTU-core-CC:W}

    \lForEach{$\alt_j \in \aaa\setminus \W$}{%
      $\VV(\alt_j)\leftarrow \VV(\alt_j)\setminus \VV(\alt)$
    }\label{alg:NTU-core-CC:V(c)}
    
    \lForEach{$v_i\in \VV(\alt)$}{$\alloc(v_i) \leftarrow 1$}\label{alg:NTU-core-CC:alloc}
  }

  \lIf{$|\W| < \kk$}{%
    Pick $\kk-|\W|$ arbitrary alternatives from~$\aaa\setminus \W$ and add them to~$\W$
  }

  \Return{$(\alloc,\W)$}
\end{algorithm}

\begin{restatable}[]{theorem}{thmCCNTUCE}\label{thm:CCNTUCE}
  \cref{alg:NTU-core-CC} computes a utility function in $\NTUcore{\CC}$ in polynomial time,
  and hence $\NTUcore{\CC}$ is always non-empty.
\end{restatable}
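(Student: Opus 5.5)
By \cref{prop:CCJR} it suffices to show three things: the committee $\W$ returned by \cref{alg:NTU-core-CC} has size exactly $\kk$, the returned $\alloc$ equals the \CC-utility vector induced by $\W$, and $\W$ satisfies JR. The first point is immediate, since the final step pads $\W$ to size $\kk$ (we assume $\kk\le|\aaa|$, as in the input specification). For the running time, there are at most $\kk$ iterations of the while loop, each doing an argmax over at most $m$ alternatives and updating the sets $\VV(\alt_j)$ in $O(mn)$ time, so the algorithm is polynomial.

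For the second point I would maintain a loop invariant. At every point, $\vvv'$ is exactly the set of voters approving no alternative of the current $\W$, and $\alloc(v_i)=1$ iff $v_i\notin\vvv'$. Also, for every $\alt_j\notin\W$, the current $\VV(\alt_j)$ equals the set of original approvers of $\alt_j$ that lie in $\vvv'$. Each iteration preserves this invariant. The set removed from $\vvv'$ is the current $\VV(\alt)$, which consists precisely of the still-uncovered approvers of $\alt$; these are exactly the voters whose $\alloc$ is set to $1$. The sets of the other alternatives are then shrunk by the same voters. One subtlety to check is that line~\ref{alg:NTU-core-CC:alloc} uses the set $\VV(\alt)$ from before the removal, which is harmless since it only assigns $1$s. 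The padding step only adds alternatives, so it never decreases coverage. The only question is whether padding could cover someone whose $\alloc$ is still $0$. This can happen only if the loop ended because $|\W|=\kk$, in which case no padding occurs. Otherwise the loop ended with $\vvv'=\emptyset$, so all voters already have $\alloc=1$. Hence $\alloc(v_i)=\score{\CC}(v_i,\W)$ for all $v_i$.

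The main obstacle is proving JR. Suppose a cohesive coalition $\vvv''$ with common alternative $\alt$ and $|\vvv''|\ge\lceil|\vvv|/\kk\rceil$ is entirely uncovered by the final $\W$. Then the loop must have stopped because $|\W|=\kk$, since if it had stopped with $\vvv'=\emptyset$ every voter would be covered. Moreover, $\alt\notin\W$ and no voter of $\vvv''$ was ever removed from $\vvv'$. So by the invariant, throughout the run the current $\VV(\alt)$ contains $\vvv''$ and has size at least $\lceil|\vvv|/\kk\rceil$. The greedy choice in each of the $\kk$ iterations therefore picked an alternative whose current approver set had size at least $|\vvv''|\ge|\vvv|/\kk$. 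These $\kk$ sets are pairwise disjoint by the invariant, so together they cover at least $|\vvv|$ voters, i.e., all voters. That contradicts the assumption that $\vvv''\neq\emptyset$ remains uncovered.

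Hence $\W$ satisfies JR, and \cref{prop:CCJR} yields $\alloc\in\NTUcore{\CC}$. This shows both the polynomial-time computation and that $\NTUcore{\CC}$ is always non-empty.
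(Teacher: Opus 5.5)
Your proof is correct and follows the paper's argument. Both reduce the statement to JR via \cref{prop:CCJR}, and both show that if a cohesive group stayed uncovered, then each of the $\kk$ greedy picks would remove at least $\lceil |\vvv|/\kk\rceil$ still-uncovered voters, so all voters would end up covered. Your explicit loop invariant ($\vvv'$ is exactly the uncovered set and each $\VV(\alt_j)$ is its set of uncovered approvers) and your remark that the loop must have run all $\kk$ iterations make rigorous what the paper states only as ``by construction'' and leaves implicit.
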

\begin{proof}
 Clearly, \cref{alg:NTU-core-CC} runs in polynomial time.
 Let $(\alloc,\W)$ be the output. By construction, $|\W|=\kk$ and
 $\alloc(v_i)=1$ if and only if $\W\cap \app_i\neq\emptyset$.
 Thus $\W$ witnesses feasibility of $\alloc$ for $\CC$.
 
 By \cref{prop:CCJR}, it suffices to show that $\W$ satisfies JR.
 Assume for contradiction that $\W$ violates JR. Then there exists a cohesive coalition~$\vvv^\star\subseteq \vvv$ with $|\vvv^\star|\ge \lceil \frac{|\vvv|}{\kk} \rceil$ and some alternative
 $\alt^\star\in \bigcap_{v_i\in \vvv^\star}\app_i$ such that
 $\W\cap \bigcup_{v_i\in \vvv^\star}\app_i=\emptyset$.
 In particular, no voter in $\vvv^\star$ is ever removed from $\vvv'$ during the algorithm,
 because a voter is removed only when she approves a chosen alternative.

 Consider an arbitrary iteration~$\ell\in\{1,\dots,\kk\}$ of the \textbf{while}-loop (so $|\W|<\kk$).
 At the beginning of iteration~$\ell$, all voters in $\vvv^\star$ are still in $\vvv'$,
 and all of them approve~$\alt^\star$. Hence $\alt^\star$ is approved by at least
 $|\vvv^\star|\ge |\vvv|/\kk$ voters in the current~$\vvv'$.
 Since the algorithm chooses an alternative $\alt$ maximizing $|\VV(\alt)|$ over $\aaa\setminus \W$ and since $\alt^{\star}$ was never chosen,
 it follows that the selected $\alt$ in iteration~$\ell$ is approved by at least $\lceil \frac{|\vvv|}{\kk}\rceil$ currently non-deleted voters, and these voters are removed from $\vvv'$ at the end of the iteration.
 This implies that at the end of the~$k^{\text{th}}$ iteration, all voters are removed, a contradiction to the assumption that $|\vvv^{\star}| \ge \lceil \frac{|\vvv|}{\kk} \rceil$ voters remain at the end of the algorithm. %
\end{proof}

We illustrate how to execute \cref{alg:NTU-core-CC}:

\begin{example}\label{ex:NTU-core-CC}
  Consider the instance~$(\ppp,\kk)$ from~\cref{ex:NTUgames} with $\kk=2$. %
  Initially, %
  alternative~$\alt_4$ has the most approvers~$|\VV(\alt_4)|=4$,
  so the algorithm selects~$\alt_4$ (line~\ref{alg:NTU-core-CC:argmax}).
  It then removes the covered voters~$\VV(\alt_4)=\{v_2,v_3,v_5,v_6\}$ from consideration and sets $\alloc(v_i) = 1$ for each of them (line \ref{alg:NTU-core-CC:W}--\ref{alg:NTU-core-CC:alloc}). 
  Among the remaining voters~$\{v_1, v_4\}$, alternatives $\alt_1$ and $\alt_2$ each have one approver,
  so the algorithm selects one of them --- say $\alt_1$ --- completing the committee $\W=\{\alt_1,\alt_4\}$.
  The resulting utility vector is $\alloc=(1,1,1,0,1,1)$;
  analogously, choosing $\alt_2$ yields $(0,1,1,1,1,1)$.
  In either case, exactly one voter receives utility~$0$, and the output lies in the \NTUcore{\CC} by \cref{thm:CCNTUCE}.
  Note that neither committee is \CC-winning: the unique winning committee $\W'=\{\alt_1,\alt_2\}$
  achieves $\score{\CC}(v,\W')=1$~for all $v\in \vvv$.
  Since a \CC-winning committee satisfies JR, by \cref{cor:JR->NTU-CC-Core}, $\W'$ also induces a utility vector in the core. 
\end{example}

\section{Computational Issues}\label{sec:complexity}%

The preceding results settle existence for \TU-\AV, \TU-\SAV, and \NTU-\CC,
and leave \AV/\SAV/\PAV\ under \NTU\ open.
A natural follow-up is computational: even when stable outcomes exist, can they be found or verified efficiently?
This question requires tools from computational complexity, which we briefly recall before stating our results.

\begin{definition}[Computational complexity classes: \DP, \ThetaTwoP, \SigmaTwoP]\label{def:BeyondNP}
In order to define these classes, we utilize so-called \myemph{\NP-oracles}.
An \myemph{\NP-oracle} is a black box algorithm that can decide in \emph{constant} time whether a given instance of a specific \NP\ problem is a yes-instance or no-instance~\cite[Chapter 3]{arora2009computational}. Note that, the restriction to a specific \NP\ problem is not actually necessary, as any \NP-hard problem can be used to model any other \NP-hard problem.

The complexity class \myemph{$\DP$}~\cite{PY84} (D standing for ``difference'') contains all problems that are the differences of two \NP\ problems.
In other words, a problem $\Pi$ is contained in $\DP$ if there exists a mapping $f=(f_1,f_2)$ mapping the problem to two problems $\Pi_1,\Pi_2$ in \NP\ such that an instance~$I$ of $\Pi$ is a yes-instance if and only if $f_1(I)$ is a yes-instance of $\Pi_1$ and $f_2(I)$ is a no-instance of~$\Pi_2$~\cite[Chapter 17]{papadimitrioubook}.
The complexity class~\myemph{$\ThetaTwoP$} (aka.\ $\PP^{\NP[\log]}$ and $\PP^{\NP}_{\mid\mid}$) contains all problems that can be solved by a \emph{polynomial-time deterministic algorithm} using \emph{logarithmically} many \NP-oracle calls~\cite{WagnerBoundeQueries1990,HSV05Kemeny}.
The complexity class \myemph{$\SigmaTwoP$} contains all problems that can be solved by a \emph{polynomial-time non-deterministic algorithm} with \emph{polynomially} many \NP-oracle calls~\cite[Chapter 5]{arora2009computational}. Note that it holds that
{$\PP\subseteq \NP \subseteq \DP \subseteq \ThetaTwoP  \subseteq\SigmaTwoP$.} It is generally assumed that all inclusion relations are strict. 
\end{definition}

For readers less familiar with the polynomial hierarchy: \DP\ captures `\NP-and-\coNP type' conditions (here: feasibility plus absence of a blocking witness);
\ThetaTwoP\ captures problems solvable in polynomial time with only logarithmically many \NP-oracle calls (here: we can binary-search the optimal grand-coalition score while checking feasibility of the induced inequality system); and \SigmaTwoP\ corresponds to an $\exists\,\forall$-structure typical of `there exists a feasible outcome such that no coalition has a blocking deviation.'
    
With these classes in hand, we can state precise complexity bounds for our core problems.
At a high level, our upper bounds follow a common pattern: non-membership in the core is certified by an explicit blocking coalition together with an admissible committee witnessing strict improvement (or, in the \TU\ case, a violation of a coalition inequality).
The more delicate containments (notably the $\ThetaTwoP$ upper bounds for existence) rely on separating the task of computing the grand-coalition optimum score from the feasibility of the resulting system of core inequalities.
\begin{restatable}[]{theorem}{thmcomplexityupperbounds}\label{thm:complexity-upperbounds}
  \begin{compactenum}
    \item\label{upper:TU-verif} \MWGTUCverif{$\AV$} and \MWGTUCverif{$\SAV$} are in $\coNP$, while \MWGTUCverif{$\PAV$} and   \MWGTUCverif{$\CC$} are in $\DP$.
    \item\label{upper:NTU-verif} For all~$\vr\in \{\AV, \SAV, \PAV, \CC\}$, \MWGNTUCverif{$\vr$} is in $\DP$.
    \item\label{upper:TU-exist}  For all~$\vr\in \{\AV, \SAV, \PAV, \CC\}$, \MWGTUCexist{$\vr$} is in $\ThetaTwoP$,
    and \MWGNTUCexist{$\vr$} is in~$\SigmaTwoP$.
  \end{compactenum}
\end{restatable}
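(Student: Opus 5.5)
We prove each item by making explicit which part of the question is existential (an \NP-type witness) and which is universal (a \coNP-type condition). Everything rests on one fact: for a fixed coalition $\vvv'$ and a committee $\W'$, the admissibility check $|\W'|\le\share$ and all scores $\score{\vr}(v,\W')$, $\score{\vr}(\vvv',\W')$ are computable in polynomial time for $\vr\in\{\AV,\SAV,\PAV,\CC\}$. Harmonic sums and the \SAV\ fractions have polynomially bounded encodings.

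\textbf{Item 1 (\TU\ membership).} The \TU\ condition has two parts: feasibility, $\sum_{v}\alloc(v)=\cvalvr(\vvv)$, and the absence of a blocking coalition.
\begin{itemize}
\item \emph{Non-blocking.} The complement, ``some coalition blocks'', is in \NP: guess $\vvv'$ and a committee $\W'$ with $|\W'|\le\share$, then check $\sum_{v\in\vvv'}\alloc(v)<\score{\vr}(\vvv',\W')$. So non-blocking is in \coNP.
\item \emph{Feasibility for \AV\ and \SAV.} These rules are \totsep\ (\cref{obs:AVSAVtotsep}), so $\cvalvr(\vvv)$ is the sum of the $\kk$ largest values $\score{\vr}(\vvv,\{\alt\})$. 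This is computable in polynomial time, so membership lies in \coNP.
\item \emph{Feasibility for \CC\ and \PAV.} Write the equality as the conjunction of two conditions. The condition $\sum_v\alloc(v)\le\cvalvr(\vvv)$ is in \NP: guess a size-$\kk$ committee reaching the sum. The condition $\sum_v\alloc(v)\ge\cvalvr(\vvv)$ is in \coNP. Actually the ``$\ge$'' direction already follows from the non-blocking condition applied to $\vvv'=\vvv$. So membership is an \NP\ condition intersected with a \coNP\ condition, i.e.\ it lies in \DP.
\end{itemize}

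\textbf{Item 2 (\NTU\ membership).} Feasibility, $\alloc\in\NTUutil{\vr}(\vvv)$, is in \NP: guess a size-$\kk$ committee $\W$ and check $\alloc(v)=\score{\vr}(v,\W)$ for every $v$. Non-blocking is in \coNP: a violation is certified by guessing $\vvv'$ and an admissible $\W'$ and checking strict improvement for every member. The conjunction therefore lies in \DP.

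\textbf{Item 3 (existence).}
\begin{itemize}
\item \emph{\NTU.} Use the $\exists\forall$ form directly. Existentially guess a size-$\kk$ committee $\W$, take $\alloc$ to be its induced utility vector, and ask an \NP\ oracle whether a blocking pair $(\vvv',\W')$ exists. Accept if the answer is no. This places the problem in $\NP^{\NP}=\SigmaTwoP$.
\item \emph{\TU, step 1.} The grand-coalition value $\cvalvr(\vvv)$ is a number of polynomial bit-length. For \PAV\ it is a multiple of $1/\mathrm{lcm}(1,\dots,\kk)$; for \SAV\ the denominators are bounded by products of ballot sizes. We find it by binary search with $O(\log)$ many \NP\ queries of the form ``is there a committee with score $\ge t$?''.
\item \emph{\TU, step 2.} With $\cvalvr(\vvv)$ fixed, the core is the polytope $\{\alloc\ge0: \sum_v\alloc(v)=\cvalvr(\vvv),\ \sum_{v\in S}\alloc(v)\ge\cvalvr(S)\ \forall S\}$. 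We need one final oracle query deciding whether it is non-empty.
\end{itemize}

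\textbf{Main obstacle.} The last \TU\ query has exponentially many constraints, and each right-hand side $\cvalvr(S)$ is itself an optimization value. A direct guess-and-check would be $\SigmaTwoP$-like: guess $\alloc$, then universally check all $S$. To stay within $\ThetaTwoP$ I would use the Bondareva--Shapley dual (\cref{prop:BS}). The core is empty iff some balanced $\wvector$ has $\sum_S\weight_S\cvalvr(S)>\cvalvr(\vvv)$. By Carathéodory, such a $\wvector$ can be chosen with support of size at most $|\vvv|$, since it is a vertex of the balancing polytope. Given the computed $\cvalvr(\vvv)$, emptiness is then in \NP: guess at most $|\vvv|$ coalitions $S$, the weights (with polynomial bit-size at a vertex), and for each $S$ an admissible $\W_S$; then verify balancedness and $\sum_S\weight_S\score{\vr}(S,\W_S)>\cvalvr(\vvv)$. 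This works because lower bounds on $\cvalvr(S)$ are certifiable, which is exactly the direction needed. Non-emptiness is the complement of this \NP\ query. The whole procedure uses logarithmically many adaptive \NP\ queries plus a polynomial-time decision, hence lies in $\ThetaTwoP$. The delicate points to check are the bit-size bounds on the vertex weights and on $\cvalvr(\vvv)$, which both follow from standard LP vertex estimates.
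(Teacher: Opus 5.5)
Items~1 and~2 and the \NTU\ half of item~3 follow the paper's argument. For \CC/\PAV\ you split the efficiency equality into an \NP\ part plus a part implied by non-blocking of~$\vvv$, which is the paper's decomposition. In the \TU\ half of item~3, your second step takes a different route. The paper shows that emptiness of the hinted core is in \NP\ by recording a violated constraint at each center visited in a run of the ellipsoid method. You instead use the Bondareva--Shapley dual directly: take a maximizing balanced vector at a vertex of the balancing polytope, so its support has size at most~$|\vvv|$ and its weights have polynomial bit-size by Cramer's rule and Hadamard's bound, and attach an admissible committee to each supported coalition. This is correct and more elementary. It is an explicit Farkas certificate checked by exact rational arithmetic. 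It also avoids what the ellipsoid sketch leaves open: the feasible set lies in the hyperplane $\sum_v\alloc(v)=s^\star$, so it is not full-dimensional, and the iteration count and precision must be controlled. The ellipsoid route is a generic separation-oracle template, but for an explicit LP like this one your dual certificate is the more direct choice.

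There is a gap in the first step, in the query count. Binary search costs $\log_2$ of the number of candidate values. It therefore uses logarithmically many queries only when that number is polynomial. ``Polynomial bit-length'', which is what you assert, yields polynomially many adaptive queries and hence only $\PP^{\NP}$. This is fine for \AV\ and \CC, whose candidates are the integers in $[0,|\vvv|\kk]$ and $[0,|\vvv|]$.

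For \PAV, the candidates are the multiples of $1/\mathrm{lcm}(1,\dots,\kk)$ up to $|\vvv|\sum_{z\le\kk}1/z$. Since $\mathrm{lcm}(1,\dots,\kk)=2^{\Theta(\kk)}$, the search takes $\Theta(\kk+\log|\vvv|)$ adaptive queries. Because $\kk$ can be as large as $|\aaa|$, the $\ThetaTwoP$ conclusion does not follow. The paper's own proof has the same weakness: once \PAV\ scores are scaled to integers, its bound~$M$ is no longer polynomial.

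For \SAV\ the analogous problem with the denominators is harmless. There $\cval{\SAV}(\vvv)$ is polynomial-time computable by total separability, as you already use in item~1. In fact the \TU\ cores for \AV\ and \SAV\ are always non-empty by \cref{thm:AV-TU-Core-Non-Empty}, so those two cases are trivially in~\PP. For \PAV\ you would need an extra idea: one that determines $\cval{\PAV}(\vvv)$, or compares it directly with the best balanced-collection value, using only logarithmically many adaptive (or polynomially many parallel) \NP\ queries. As written, your argument gives $\PP^{\NP}$ for that case, not $\ThetaTwoP$.
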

\begin{proof}
  \noindent \emph{Statement~\eqref{upper:TU-verif}:} 
  A utility function~$\alloc$ is in the $\TUcore{\vr}$ if
    \begin{inparaenum}[(a)]
      \item\label{coNP:AV-cond1}there exists $W\subseteq\aaa$ with $|\W|=k$ such that
      $\sum_{v\in \vvv}\alloc(v)=\scoreTU{\vr}(\vvv,\W)$, and
      \item\label{coNP:AV-cond2}there do not exist a coalition~$\vvv'\subseteq \vvv$ and a witness committee~$\W'$ of size
      $|\W'|\le \share$ such that
      $\sum_{v\in \vvv'}\alloc(v) < \scoreTU{\vr}(\vvv',\W')$.
    \end{inparaenum}

    Note that condition~\eqref{coNP:AV-cond1} together with condition~\eqref{coNP:AV-cond2} is equivalent to checking whether
   (c) $\sum_{v\in \vvv}\alloc(v)=\cval{\vr}(\vvv)$: if $\sum_{v\in \vvv}\alloc(v)<\cval{\vr}(\vvv)$, then the grand coalition~$\vvv$ blocks.

     For $\vr\in\{\AV,\SAV\}$, a pair $(\vvv',\W')$ serves as a polynomial-size certificate for violating condition~\eqref{coNP:AV-cond2},
    and can be verified in polynomial time and condition~(c) can be checked in polynomial time.
    Hence the complement of \MWGTUCverif{$\vr$} is in $\NP$, and therefore
    \MWGTUCverif{$\vr$} is in $\coNP$. 

    For $\vr\in\{\CC,\PAV\}$, \MWGTUCverif{$\vr$} is in $\DP$.
    Indeed, verifying condition~(\ref{coNP:AV-cond1}) is in $\NP$, with a polynomial-size certificate given by a committee~$W$
    witnessing $\sum_{v\in \vvv}\alloc(v)=\scoreTU{\vr}(\vvv,\W)$.
    Verifying condition~(\ref{coNP:AV-cond2}) is in $\coNP$, since a polynomial-size certificate for its violation is a pair $(\vvv',\W')$, where $\vvv'$ is $\TUblocking{\vr}$ and $\W'$ witnesses it.
    By Papadimitriou~\cite{papadimitrioubook}, a language $P$ is in $\DP$ if there exist $\NP$-languages $Q_1,Q_2$ such that
    $P=Q_1\cap \overline{Q_2}$.
    Taking $Q_1$ to be condition~\eqref{coNP:AV-cond1} and $Q_2$ to be the complement of condition~\eqref{coNP:AV-cond2} yields
    \MWGTUCverif{$\vr$}$\in\DP$ for $\vr\in\{\CC,\PAV\}$.

    \smallskip
    \noindent \emph{Statement~\eqref{upper:NTU-verif}:}
    A utility function~$\alloc$ is in the $\NTUcore{\vr}$ if
    \begin{inparaenum}[(a)]
      \item\label{NTUcoreverif:cond1} there exists $W\subseteq\aaa$ with $|\W|=k$ such that
      $\alloc(v_i)=\scoreNTU{\vr}(v_i,W)$ for all $v_i\in\vvv$, and
      \item\label{NTUcoreverif:cond2} there do not exist a coalition~$\vvv'\subseteq \vvv$ and a witness committee~$\W'$ of size
      $|\W'|\le \share$ such that
      $\alloc(v_i) < \scoreNTU{\vr}(v_i,\W')$ for all $v_i\in\vvv'$.
    \end{inparaenum}

    For all~$\vr\in\{\AV,\SAV,\PAV,\CC\}$, \MWGNTUCverif{$\vr$} is in $\DP$:
    verifying condition~\eqref{NTUcoreverif:cond1} is in $\NP$ with certificate $W$, and violating condition~\eqref{NTUcoreverif:cond2} is
    in $\NP$ with certificate $(\vvv',\W')$. The same $\DP$ characterization argument as in Statement~\ref{upper:TU-verif} applies.

                 \smallskip
    \noindent \emph{Statement~\eqref{upper:TU-exist}:}
     We first prove $\ThetaTwoP$-containment for \MWGTUCexist{$\vr$} for $\vr\in\{\AV,\SAV,\PAV,\CC\}$ by adapting the approach of Chen et al.~\cite[Theorem 2]{PCOGpaper}.
     The key idea is to
     \begin{compactenum}[(i)]
       \item first compute the score of a winning committee~$s^\star=\scoreTU{\vr}(\vvv)$ using logarithmically many $\NP$-oracle calls via binary search, and then
       \item solve a \emph{hinted} version of the \CE\ problem that additionally receives~$s^\star$ as part of the input.
     \end{compactenum}
     
     \mypara{Step 1: computing $s^\star$ with logarithmically many $\NP$-queries.}
     Let $M$ be a polynomially bounded upper bound on $s^\star$ (e.g., $M=|\vvv|\kk$ for $\AV$ and $M=|\vvv|$ for $\CC$; for $\PAV$ we scale scores to integers).
     For any threshold $t\in\{0,\dots,M\}$, the committee score problem 
     \[
       \exists \W\subseteq\aaa,\ |\W|=\kk\colon \scoreTU{\vr}(\vvv,\W)\ge t.
     \]
     is in $\NP$ (certificate: $\W$), hence we can compute $s^\star$ by binary search using $O(\log M)$ $\NP$-oracle calls.
     
     \smallskip
     \mypara{Step 2: A hinted version of core existence is in $\coNP$ via ellipsoid-based certificate.}

     Define the hinted problem:
     \decprob{Hinted\text{-}\MWGTUCexist{$\vr$}}
     {An \mw-instance~$(\ppp,\kk)$ and an integer~$s^{\star}$ such that $s^{\star}= \scoreTU{\vr}(\vvv)$.}
     {Is the $\TUcore{\vr}$ of $(\ppp,\kk)$ non-empty?}

   \noindent  We express non-emptiness of the $\TUcore{\vr}$ as feasibility of a linear program (LP) with variables $\{\alloc(v)\}_{v\in\vvv}$:
   \begin{align}
     \sum_{v\in\vvv}\alloc(v) &= s^\star, \label{eq:hinted-eff}\\
     \sum_{v\in\vvv'}\alloc(v) &\ge \scoreTU{\vr}(\vvv',\W'), 
                                 \qquad\text{for all } \vvv'\subseteq\vvv \text{ and all } \W'\subseteq\aaa \text{ with } |\W'|\le \share. \label{eq:hinted-block}
   \end{align}
   This LP has exponentially many constraints of type~\eqref{eq:hinted-block}.

   \begin{clm}\label{cl:hinted-conp}
     \textsc{Hinted\text{-}\MWGTUCexist{$\vr$}} is in $\coNP$ for $\vr\in\{\AV,\SAV,\PAV,\CC\}$.
   \end{clm}
   
   \begin{claimproof}{cl:hinted-conp}
     We show that the complement of the problem (i.e., the hinted core is empty) has a polynomial-size certificate verifiable in polynomial time,
     based on the ellipsoid method~\cite{GLS1988}. 
     Since $\sum_{v\in\vvv}\alloc(v)=s^\star\le M$ and $\alloc(v)\ge 0$, we may restrict to $\alloc\in[0,M]^n$ (where $n\coloneq |\vvv|$),
     and thus start the ellipsoid method from any explicit ellipsoid containing this box.
     
     If the hinted core is empty, then the LP is infeasible. 
     Now, imagine we run the ellipsoid method with an arbitrary separation oracle~$Q$ (it need not be efficient) on the LP. 
     Note that since the hinted core is empty, for each queried center point~(i.e., $\alloc$), the oracle~$Q$ will answer that either \eqref{eq:hinted-eff} is violated or
     there exists a blocking witness~$(\vvv', \W')$ violating~\eqref{eq:hinted-block}.
     Because for a no-instance, the ellipsoid method makes only polynomially many $Q$-oracle calls~\cite{Schrijver99book} (in $n$ and the encoding length of $M$, but independent of the number of constraints), we obtain a polynomial-length \emph{ellipsoid certificate} $\hat{Q}$ that lists, for each queried center point~(i.e., $\alloc$), a violated constraint~$(\vvv', \W')$.
     Each listed violated constraint is polynomial-time checkable:
     If $\vvv'=\vvv$, then we check \eqref{eq:hinted-eff} is violated;
     otherwise we check in polynomial time whether $(\vvv',\W')$ is blocking, i.e., 
     $|\W'|\le \share$ and $\sum_{v\in\vvv'}\alloc(v) < \scoreTU{\vr}(\vvv',\W')$.
     Hence, infeasibility (i.e., empty hinted core) is in $\NP$, and therefore the hinted non-emptiness problem is in $\coNP$.
   \end{claimproof}

   \mypara{Step 3: concluding $\ThetaTwoP$.}
     After we have determined the optimal score $s^\star=\scoreTU{\vr}(\vvv)$ using logarithmically many $\NP$-oracle calls,
     we make one additional $\NP$-oracle call to the complement of \textsc{Hinted\text{-}\MWGTUCexist{$\vr$}} on input $(\ppp,\kk,s^\star)$, which is in $\NP$ by Claim~\ref{cl:hinted-conp}. 
     If that call answers ``yes'' (the hinted core is empty), we reject; otherwise we accept.
     In total we use $O(\log M)+1$ many $\NP$-oracle calls, and therefore \MWGTUCexist{$\vr$} is in $\ThetaTwoP$.
     \medskip
     
     It remains to show $\SigmaTwoP$-containment for \MWGNTUCexist{$\vr$}.
     We do so by giving a non-deterministic polynomial-time algorithm with access to an \NP-oracle.
     
     Let $I=(\ppp,\kk)$ be an instance.
     Notice that $I$ is a yes-instance if and only if there exists a size-$\kk$ committee $\W$ such that the induced allocation
     \[
       \alloc(v_i)\;=\;\scoreNTU{\vr}(v_i,\W)\qquad\text{for all } v_i\in\vvv
     \]
     is in the core, i.e., admits no blocking voter coalition.
     
     Our algorithm proceeds as follows.
     \begin{enumerate}
       \item Non-deterministically guess a size-$\kk$ committee $\W$, and compute the induced allocation $\alloc$ by setting
       $\alloc(v_i)=\scoreNTU{\vr}(v_i,\W)$ for all $v_i\in\vvv$.
       \item Query an \NP-oracle $Q_{\mathrm{block}}$ that decides the complement of \CV\ on~$(I,\alloc)$, i.e., whether $\alloc$ is blocked by some coalition. 
       Accept if and only if $Q_{\mathrm{block}}$ answers \textsc{no}.
     \end{enumerate}
     
     \mypara{Oracle.}
     The oracle $Q_{\mathrm{block}}$ is an \NP-oracle: a {yes}-certificate consists of a coalition~$\vvv'\subseteq \vvv$,
     and the additional witness required by the definition of blocking under $\vr$,
     and the verification that $\vvv'$ blocks the induced allocation can be carried out in polynomial time under~$\vr$.

     \mypara{Correctness.}
     If $I$ is a yes-instance, then there exists a size-$\kk$ committee $\W^\star$ whose induced allocation $\alloc^\star$ is in the core.
     On the computation branch that guesses $\W^\star$, there is no blocking coalition for $\alloc^\star$,
     hence $(\ppp,\W^\star)\notin Q_{\mathrm{block}}$ and the oracle answers \textsc{no}; thus the algorithm accepts.

     Conversely, if $I$ is a no-instance, then for every size-$\kk$ committee $\W$ the induced allocation is not in the core,
     so there exists a blocking coalition. Hence $(\ppp,\W)\in L_{\mathrm{block}}$ for every guessed $\W$,
     the oracle answers {yes} on every branch, and the algorithm rejects.
     
     \mypara{Running time.}
     Guessing $\W$ and computing $\alloc$ take polynomial time, and we make a single call to the \NP-oracle.
     Therefore $\MWGNTUCexist{\vr}\in \NP^{\NP}=\SigmaTwoP$.
    \end{proof}

We next show that most of these upper bounds are essentially tight.
The reductions isolate the source of difficulty in our setting: blocking combines proportional feasibility constraints with an optimization problem over committees, and the interaction can encode both existential and universal structure (e.g., ``there exists a blocking coalition'' versus ``for all coalitions no blocking exists'').
 As a result, core membership problems often land in classes such as \DP, and core non-emptiness problems can require $\ThetaTwoP$- or $\SigmaTwoP$-style access to an \NP-oracle.

\subsection{Complexity for the \TU\ model}\label{sub:complexity-TU}
We begin with the \TU\ model.
In this case, blocking reduces to a \emph{scalar} comparison: a coalition blocks a candidate utility function if it can achieve strictly larger \emph{total} score than the coalition's achievable score.
This makes \CE\ resemble checking a system of core inequalities, but the coalition values are themselves defined by an optimization problem over admissible committees.
Accordingly, the complexity depends strongly on the underlying rule: for \AV\ and \SAV, \totsepity\ yields substantially more tractability,
while for \CC\ and \PAV\ the induced optimization structure leads to hardness because computing a winning committee under both rules is already \NPhh~\cite{AS1999,lu2011budgeted,procaccia2008complexity,Aziz15}.

\begin{restatable}[]{theorem}{thmTUCE}\label{thm:TU-CE}
  \MWGTUCexist{\AV} and \MWGTUCexist{\SAV} are in $\PP$, while \MWGTUCexist{\CC} and \MWGTUCexist{\PAV} are \NPhh.
\end{restatable}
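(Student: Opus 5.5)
The \AV/\SAV\ part is immediate. By \cref{obs:AVSAVtotsep}, both rules are \totsep\ and have non-negative single-alternative scores. By \cref{thm:AV-TU-Core-Non-Empty}, every instance is therefore a yes-instance, so the decision algorithm simply accepts. Equivalently, \cref{alg:TU-core-totsep} builds a core element in polynomial time (\cref{thm:TU-AV-poly}).

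For \CC\ and \PAV, I would reduce from \textsc{Exact Cover by 3-Sets} (X3C), restricted to instances where every element lies in exactly three sets (RX3C). The construction generalises the gadget from \cref{prop:PAVCCTUCNE}. In that gadget every alternative is approved by exactly $|\vvv|/\kk$ voters, so each approver group can claim one seat and force $\sum_{v\in\VV(\alt)}\alloc(v)\ge |\VV(\alt)|$ under both \CC\ and \PAV. Summing these inequalities with the right multiplicities forces a total utility at least $|\vvv|$, whereas the grand coalition reaches $|\vvv|$ only if some size-$\kk$ committee covers every voter exactly once.

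The concrete encoding is as follows. Given universe $X$ with $|X|=3q$ and a family $\mathcal{S}$ of $3q$ triples, each element in exactly three triples, I create one alternative per triple and one voter per element. The voter approves the three triples containing her element. I set $\kk=q$, so $|\vvv|/\kk=3$ and every alternative has exactly three approvers.

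The argument then runs in three steps.
\begin{compactenum}[(i)]
\item If an exact cover exists, the committee $\W$ consisting of the cover has \CC-score and \PAV-score $3q=|\vvv|$. The all-ones vector is then feasible, and I must check that it is unblocked. For a coalition $\vvv'$ with seat cap $s=\lfloor |\vvv'|/3\rfloor$, every voter has utility at most $1$ under \CC, so $\cval{\CC}(\vvv')\le|\vvv'|$. For \PAV\ a voter could gain more than $1$, but any $s$ alternatives cover at most $3s\le|\vvv'|$ voter-approvals in total. Since the \PAV\ score of a voter is at most her number of approved selected alternatives, the coalition value is at most $3s\le|\vvv'|$.
\item If no exact cover exists, every size-$q$ committee leaves some voter uncovered. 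Hence $\cval{\CC}(\vvv)\le 3q-1$, and for \PAV\ the value satisfies $\cval{\PAV}(\vvv)\le 3q-1/2$, because the $3q$ approvals then land on at most $3q-1$ voters and a repeated approval earns at most $1/2$.
\item Since each voter lies in exactly three approver sets, summing the $3q$ inequalities $\sum_{v\in\VV(\alt)}\alloc(v)\ge 3$ gives $3\sum_v\alloc(v)\ge 9q$, i.e., $\sum_v\alloc(v)\ge 3q$. This exceeds the grand coalition value, so the core is empty.
\end{compactenum}

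The main obstacle is ensuring that the source problem stays \NP-hard under the regularity needed for the final summation step, namely that each element appears in exactly three sets and $|\mathcal{S}|=|X|$. This regular variant (RX3C) is known to be \NP-complete (Gonzalez), so I would cite it. If a variant without equal counts is needed, I would pad with dummy voters and alternatives that keep every alternative at exactly $|\vvv|/\kk$ approvers.
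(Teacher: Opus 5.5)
Your proposal is correct and follows the paper's proof: the same RX3C construction with one alternative per triple, one voter per element and $\kk=q$; the all-ones vector with the bound $3|\W'|\le 3\lfloor|\vvv'|/3\rfloor\le|\vvv'|$ in the yes-case; and the $3q-\frac{d}{2}$ bound for \PAV\ in the no-case. The only cosmetic difference is that you sum the $3q$ approver-coalition inequalities directly, whereas the paper presents the same certificate as a Bondareva--Shapley violation with weight $\frac13$ on each approver coalition.
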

\begin{proof}
  The statement for \AV\ and \SAV\ follows directly from \cref{thm:AV-TU-Core-Non-Empty}, as the core is always non-empty for these rules.
  The \NPhh{ness} for \CC\ and \PAV\ comes from the fact that computing a corresponding winning committee is already \NPhh.
  For the sake of completeness, we give the corresponding reduction, which works for both.
  We reduce from the \NPcc\ problem~\textsc{RX3C}~\cite{RX3C}.

  \decprob{Restricted Exact Cover by 3-Sets (RX3C)}
  {A universe of $3\nn$ elements~$X=\{x_1,\dots,x_{3\nn}\}$, a family $\mathcal{S}=\{S_1,\ldots,S_{\mm}\}$ with $S_j\subseteq X$ and $|S_j|=3$ for all $j\in[\mm]$ such that every element appears in exactly three 3-sets in~$\mathcal{S}$, i.e., $\mm=3\nn$.}
  {Does $(X, \mathcal{S})$ admit an \emph{exact cover} (i.e., a subcollection~$\mathcal{S}'$ of cardinality~$\nn$ such that $\cup_{S_j\in \mathcal{S}'}S_j = X$)?}
  
  Let $I=(X,\mathcal{S})$ be an instance of \textsc{RX3C}. We create an instance $I'=(\ppp,\kk)$ of \MWGTUCexist{\CC} (resp.\ \MWGTUCexist{\PAV}). Note that the two instances are the same.

\mypara{Profile construction.} Let $\kk\coloneq\nn$.
  \begin{compactitem}[--]
    \item For each set $S_i\in\mathcal{S}$, create an alternative $\alt_i$.
    \item For each element $x_i\in X$, create a voter $v_i$. Their approvals are $\app(v_i)\coloneqq\{\alt_j\mid x_i\in S_j\}$.
  \end{compactitem}
Thus the total number of voters is $|\vvv|=3\nn$.

\mypara{Correctness.} We show that $I$ is a yes-instance of \textsc{RX3C} if and only if the core is non-empty (for both \CC\ and \PAV).

For the ``if'' direction, we prove the contrapositive that ``if $I$ is a no-instance, then the core is empty''.
Assume that $I$ is a no-instance, i.e., there does not exist an exact cover.
Then, no size-$\kk$ committee can be approved by all voters, and hence $\cval{\CC}(\vvv)< 3 \nn$.

We next argue that $\cval{\PAV}(\vvv)<3\nn$ as well.
Fix a committee~$\W$ with $|\W|=\kk=\nn$ and write $t_i \coloneqq |\app(v_i)\cap \W|$ for each voter $v_i$.
Since every alternative is approved by exactly three voters, the total number of approval incidences inside $\W$ is $\sum_{i=1}^{3\nn} t_i \;=\; 3|\W| \;=\; 3\nn$.
Because $I$ is a no-instance, $\W$ cannot cover all voters; let $d\ge 1$ be the number of uncovered voters (i.e., voters with $t_i=0$).
Then exactly $3\nn-d$ voters are covered, and assigning one approved winner to each covered voter uses $3\nn-d$ incidences.
Hence there remain precisely $d$ further incidences, each corresponding to an \emph{additional} approved winner for a voter who already has at least one.

Under \PAV, the marginal gain of giving a voter an additional approved winner beyond their first is at most $1/2$
(the second winner contributes $1/2$ and the third contributes $1/3$).
Therefore,
\[
\scoreTU{\PAV}(\vvv,\W)
\;\le\;
(3\nn-d)\cdot 1 \;+\; d\cdot \frac12
\;=\;
3\nn - \frac{d}{2}
\;<\;
3\nn.
\]
Since $\W$ was arbitrary, it follows that $\cval{\PAV}(\vvv)<3\nn$.
 
We utilize the Bondareva--Shapley Theorem~(cf.\ \cref{prop:BS}) and show that there exists a balanced vector that violates~\eqref{eq:BS_ineq}.
For each 3-set~$S_j\in\mathcal{S}$, let $\vvv_j \;\coloneqq\; \{v_i \in \vvv \mid x_i \in S_j\}$
be the corresponding size-$3$ coalition (equivalently, $\vvv_j$ is exactly the set of voters approving~$\alt_j$).
Define a weight vector $\wvector=(\weight_{\vvv'})_{\vvv'\subseteq \vvv}$ by
\[
\weight_{\vvv'} \coloneqq
\begin{cases}
\frac{1}{3}, & \text{if } \vvv'=\vvv_j \text{ for some } j\in[\mm],\\
0, & \text{otherwise.}
\end{cases}
\]
This vector is balanced: Each voter $v_i$ (corresponding to element $x_i$) belongs to exactly three coalitions
$\vvv_j$ because $x_i$ appears in exactly three 3-sets of $\mathcal{S}$, and each such coalition has weight $1/3$.
Therefore $\sum_{\vvv'\ni v_i}\weight_{\vvv'} = 3\cdot \frac13 = 1$, i.e., \eqref{eq:BS_balanced} holds.

Moreover, every coalition $\vvv_j$ with non-zero weight has size $3$ and can select exactly one alternative
(since $\share=\lfloor 3/3\rfloor=1$). Choosing $\alt_j$, all three members approve it, hence
\[
\cval{\CC}(\vvv_j)=3
\qquad\text{and}\qquad
\cval{\PAV}(\vvv_j)=3.
\]
Thus, for $\vr\in\{\CC,\PAV\}$ we obtain
\begin{align*}
\sum_{\vvv'\subseteq \vvv} \weight_{\vvv'}\cdot \cval{\vr}(\vvv')
\;=\;
\sum_{j=1}^{\mm} \frac13\cdot \cval{\vr}(\vvv_j)
\;=\;
\sum_{j=1}^{\mm} \frac13\cdot 3
\;=\; \mm
\;=\; 3\nn.
\end{align*}
Combining this with $\cval{\CC}(\vvv)<3\nn$ and $\cval{\PAV}(\vvv)<3\nn$ shown above, we get
\[
\sum_{\vvv'\subseteq \vvv} \weight_{\vvv'}\cdot \cval{\vr}(\vvv') \;>\; \cval{\vr}(\vvv)
\qquad\text{for }\vr\in\{\CC,\PAV\},
\]
so \eqref{eq:BS_ineq} is violated. By the Bondareva--Shapley Theorem~(\cref{prop:BS}), the core is empty.
This completes the contrapositive and hence the ``if'' direction.

For the ``only if'' direction, assume that $I$ is a yes-instance and let $\mathcal{S}'$ be an exact cover.
Define $\W\coloneqq\{\alt_j\mid S_j\in\mathcal{S}'\}$, so $|\W|=\nn$ and every voter approves exactly one alternative in~$\W$.
Hence $\scoreTU{\CC}(\vvv,\W)=|\vvv|$ and $\scoreTU{\PAV}(\vvv,\W)=|\vvv|$.
Since $\scoreTU{\CC}(\vvv,\W)\le |\vvv|$ for every committee and (by the argument in the ``if'' direction for \PAV) any size-$\nn$ committee that leaves some voter uncovered has \PAV-score strictly less than $|\vvv|$, it follows that
\[
\cval{\CC}(\vvv)=|\vvv|
\qquad\text{and}\qquad
\cval{\PAV}(\vvv)=|\vvv|.
\]
Consider the allocation $\alloc(v)=1$ for all $v\in\vvv$. Then $\sum_{v\in\vvv}\alloc(v)=\cval{\vr}(\vvv)$ for $\vr\in\{\CC,\PAV\}$.

It remains to show that no coalition $\vvv'\subseteq \vvv$ blocks $\alloc$.
Let $\vvv'\subseteq\vvv$ be arbitrary and let $\share\coloneqq \left\lfloor \frac{|\vvv'|\kk}{|\vvv|}\right\rfloor=\left\lfloor\frac{|\vvv'|}{3}\right\rfloor$.
For any witness committee $\W'$ with $|\W'|\le \share$, we have
\[
\scoreTU{\CC}(\vvv',\W') \le 3|\W'|,
\]
because each alternative is approved by exactly three voters.
Moreover,
\[
\scoreTU{\PAV}(\vvv',\W') \le 3|\W'|,
\]
because adding an alternative can increase each approving voter's \PAV-utility by at most $1$, and each alternative is approved by exactly three voters.
Therefore, for $\vr\in\{\CC,\PAV\}$,
\[
\scoreTU{\vr}(\vvv',\W') \le 3|\W'| \le 3\share \le |\vvv'| = \sum_{v\in\vvv'}\alloc(v).
\]
Taking the maximum over all feasible $\W'$ yields $\cval{\vr}(\vvv')\le \sum_{v\in\vvv'}\alloc(v)$, so $\vvv'$ cannot block.
Thus $\alloc\in\TUcore{\CC}$ and $\alloc\in\TUcore{\PAV}$.

This concludes the proof.
\end{proof}

We next turn from \CE\ to \CV. 
In the \TU\ setting, a blocking certificate consists of a coalition together with an admissible committee that achieves a larger total score, turning non-membership into a succinct witness.
This perspective yields the stated upper bounds and guides the hardness reductions below.

The hardness for \AV, and later for \SAV\ via a slight modification, is via reducing from the complement of the \NPcc\ problem \BicliqueL~(\Biclique)~\cite{GareyJohnsonBook}.
Given a graph $G=(U,E)$ and a non-negative integer~$h$,
\Biclique\ asks whether there exist two disjoint subsets~$U_1,U_2\subseteq U$ with $U_1 \cap U_2=\emptyset$ and $|U_1|=|U_2|=h$ such that $\{u,w\}\in E$ for all $u\in U_1$ and all $w\in U_2$. 
  
\newcommand{\vtxnum}{\ensuremath{\hat{n}}}
\newcommand{\richdum}{\ensuremath{o}}
\newcommand{\richdumset}{\ensuremath{O}}
\newcommand{\poordum}{\ensuremath{b}}
\newcommand{\poordumset}{\ensuremath{B}}
\begin{restatable}{theorem}{thmTUCVAV}\label{thm:TU-CV-AV}
   \MWGTUCverif{\AV} is \coNPcc. 
 \end{restatable}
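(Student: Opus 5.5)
Membership in \coNP\ is already given by \cref{thm:complexity-upperbounds}\eqref{upper:TU-verif}. It therefore remains to prove \coNPhh{ness}. We do this by a polynomial-time reduction from \Biclique\ to the complement of \MWGTUCverif{\AV}: given $(G=(U,E),h)$, we build an \mw-instance $(\ppp,\kk)$ and a utility vector $\alloc$ with two properties. First, $\sum_{v}\alloc(v)=\cval{\AV}(\vvv)$, so condition~(c) always holds. Second, some coalition is \TUblocking{\AV} for $\alloc$ if and only if $G$ contains a balanced biclique $K_{h,h}$. Under \AV\ the grand-coalition value is just the sum of the $\kk$ largest approval counts, so we can compute it exactly. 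This gives complete control over feasibility, and all of the difficulty moves into the blocking condition.

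\emph{Construction.} Let alternatives correspond to vertices of $U$ and voters to edges of $E$. The edge-voter for $\{u,w\}$ approves the alternatives $u$ and $w$. A coalition $\vvv'$ of edge-voters that selects a vertex set $\W'$ of size $2h$ then earns the number of coalition edges incident to $\W'$, counted with multiplicity. The plan is to calibrate so that blocking requires a coalition of exactly $h^2$ edge-voters, with seat cap exactly $2h$, that obtains score $2h^2$. This is possible only if every coalition edge has \emph{both} endpoints in $\W'$, that is, the $h^2$ edges live on $2h$ vertices. We then have to force this edge set to be complete bipartite rather than an arbitrary dense graph. For that we use extra dummy structure (the macros \richdum/\richdumset\ and \poordum/\poordumset\ suggest ``rich'' and ``poor'' dummy voters). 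Poor dummies are padding voters with utility $0$ approving nothing useful. They fix $|\vvv|$ so that $\lfloor |\vvv'|\kk/|\vvv|\rfloor$ takes the desired value for coalitions of the intended size. Rich dummies are voters with large utility who approve many dummy alternatives. They make $\kk$ and the grand optimum consistent with $\alloc$, and they are too expensive to include in any blocking coalition. The edge-voters get utilities slightly below $2$, e.g.\ $2-\frac{1}{h^2}$ spread appropriately, plus poor dummies at $0$ that can be added only to raise the seat cap.

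\emph{Key steps.} (1) Fix $|\vvv|$ and $\kk$ so that a coalition of $h^2$ edge-voters plus $t$ poor dummies has seat cap exactly $2h$, with $t$ chosen suitably. (2) Set $\alloc$ on the dummies and the dummy alternatives' approval counts so that $\sum\alloc=\cval{\AV}(\vvv)$. The top $\kk$ alternatives should be dummy alternatives approved by rich dummies, and the rich dummies' utilities are chosen to sum up to that optimum. (3) Completeness: given a biclique $U_1,U_2$, take the $h^2$ cross edges, the poor dummies and $\W'=U_1\cup U_2$. The score is $2h^2$, which exceeds the coalition's total utility $h^2(2-\varepsilon)$ plus $0$. (4) Soundness: show that any blocking pair $(\vvv',\W')$ must consist of edge-voters and poor dummies only, since rich dummies cost more than any alternative yields. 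Then each edge-voter contributes gain at most $2$ minus her utility, so the coalition needs more than roughly $h^2$ edges that are fully inside $\W'$, with $|\W'|\le 2h$.

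\emph{Main obstacle.} The hardest step is soundness: ``$h^2$ edges inside $2h$ vertices'' is a densest-subgraph condition, not a biclique condition. I would first try to fix this by giving the graph a bipartite structure beforehand, using the standard bipartite version of balanced biclique, which is still \NPcc. With sides $L,R$, the alternatives from $L$ and from $R$ are separated, and two side-specific dummy groups force any profitable $\W'$ to take exactly $h$ vertices from each side. Once $|\W'\cap L|=|\W'\cap R|=h$, at most $h^2$ edges can lie inside $\W'$, with equality iff they form $K_{h,h}$. This pins down a biclique. Getting the integer seat-cap arithmetic and the utility thresholds to forbid mixed deviations, where $\W'$ includes dummy alternatives or the coalition size differs from $h^2$, will require careful case analysis over the coalition size.
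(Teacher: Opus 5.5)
\paragraph{Where the proposal fails.} The gap is in soundness, and it is not the issue you flag as the main obstacle.

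An edge-voter approves only two alternatives, so her AV-score is at most $2$ however many seats the coalition gets. With utility slightly below $2$, every edge lying inside $W'$ already yields strictly positive surplus, and your padding voters cost nothing. Your step~(1) makes the cap of $h^2$ edges plus $t$ poor dummies equal to $2h$. Then a single edge $\{u,w\}$ plus the same $t$ dummies has cap at least $\lfloor 2h-(h^2-1)k/|\mathcal{V}|\rfloor$. This is at least $2$ unless $|\mathcal{V}|/k<(h+1)/2$. In that case the coalition blocks with $W'=\{u,w\}$ in every graph that has an edge.

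Unequal utilities do not rescue this. An edge with utility below $2$ blocks on its own together with the padding. If all edges get utility at least $2$, no coalition of edge-voters and padding can ever block.

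Suppose instead you take the quota $|\mathcal{V}|/k$ that small, or put a price on the padding. Then every $|\mathcal{V}|/k$ additional edges buy one more seat. The blocking test for such coalitions becomes a linear density condition of the form $|E(S)|\ge (|\mathcal{V}|/k)\,|S|-t$. That is a densest-subgraph question, decidable in polynomial time since $(|\mathcal{V}|/k)|S|-|E(S)|$ is submodular. It is also met by dense $K_{h,h}$-free bipartite graphs, e.g.\ $C_4$-free ones of sufficiently large degree. So no-instances of \textsc{Biclique} would be mapped to blocked instances.

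Consequently, your step~(4) claim that a blocking coalition ``needs more than roughly $h^2$ edges fully inside $W'$'' is not supported by the construction. The bipartite fix addresses a non-issue. Within a budget of $2h$ vertices of a bipartite graph, $|A||B|\le\bigl((|A|+|B|)/2\bigr)^2$ already forces $K_{h,h}$ from $h^2$ edges. The real problem is pinning the budget, and per-edge payments cannot do that.

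\paragraph{What is missing.} You need a per-voter price tied to the seat cap. The paper gets it by making the voters vertices as well. Vertex-voter $v_i$ approves the alternatives of her neighbours, so her score is at most $|W'|$, while her utility $h-1/\hat n$ sits just below $h$.

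The quota $|\mathcal{V}|/k=\hat n^2$ exceeds the number of vertex-voters, so they never shift the cap. Instead, the $h\hat n^2-h$ zero-utility empty voters fix the cap. It is $h-1$ if fewer than $h$ vertex-voters join, and then the score is at most $(h-1)h'<h'(h-1/\hat n)$. It is $h$ otherwise.

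With $h'\ge h$, integrality forces the score to be exactly $hh'$. So each of the $h$ chosen vertex-alternatives is approved by all $h'$ coalition vertex-voters. This is a biclique, and disjointness of the two sides follows from the absence of self-loops. Full voters with utility $L+h-\varepsilon>\hat n^2$ balance the grand-coalition value while making every coalition containing them non-blocking.

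It is this bilinear voters-times-alternatives structure of AV-scores that encodes $K_{h,h}$. An edge-voter encoding lacks it.
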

 \begin{proof}
   \CoNP-membership follows from \cref{thm:complexity-upperbounds}\eqref{upper:TU-verif}.
  To show \coNP-hardness, we reduce from \Biclique. %
  Let $I=(G=(N,E),h)$ be an instance with $|N|=\nn$ and assume without loss of generality that $\nn\ge 2h$ and $\nn \ge 2$.
  We construct an instance $I'=(\ppp,\kk,\alloc)$ of \MWGTUCverif{\AV}.

  \mypara{Profile construction.} Let $L\coloneq \nn^2+1$ and $\kk\coloneq L+h$.
  \begin{compactitem}[--]
    \item For each vertex~$u_i\in U$, create a \myemph{vertex-alternative} $\alt_i$; let $\aaa_U\coloneq \{\alt_i\mid u_i\in U\}$.
    \item Create $\kk=L+h$ \myemph{dummy alternatives} $\aaa_D\coloneq\{d_1,\dots,d_{L+h}\}$.
    \item For each vertex~$u_i\in U$, create a \myemph{vertex-voter} $v_i$; let $\vvv_U\coloneq \{v_i\mid u_i\in U\}$.
    Their approvals are $\app(v_i)\coloneq \{\alt_j\mid \{u_i,u_j\}\in E\}$; in particular, $\alt_i\notin \app(v_i)$ since $G$ is simple.
    \item Create $d_1\coloneq L\cdot \nn^2-(\nn-h)$ \myemph{full voters} %
    $\vvv_{\richdumset}:=\{\richdum_1,\dots,\richdum_{d_1}\}$.
    Each full voter~$\richdum_i$ approves all dummy alternatives: $\app(\richdum_i)\coloneq \aaa_D$.
    \item Create $d_2\coloneq h\cdot \nn^2-h$ \myemph{empty voters}
    $\vvv_\poordumset\coloneq\{\poordum_1,\dots,\poordum_{d_2}\}$.
    Each empty voter~$\poordum_i$ approves nothing: $\app(b_i)\coloneq \emptyset$.
  \end{compactitem}
  The total number of voters is
  $|\vvv| = |\vvv_U|+|\vvv_\richdumset|+|\vvv_\poordumset|
    = \nn + (L\nn^2-(\nn-h)) + (h\nn^2-h)
    = \kk\cdot \nn^2$.

  \mypara{The utility function~$\alloc$.}
  Define $\varepsilon \coloneq \frac{\nn h-1}{d_1}$ (note that $\varepsilon<1$ since $\nn^2-1 > \nn-h$ and hence $d_1 > L=\nn^2+1$).
  Set  
  $\alloc(v_i)=h-\frac{1}{\nn}$, $\forall v_i\in \vvv_U$,
  $\alloc(\richdum_i)=L+h-\varepsilon$, $\forall \richdum_i \in \vvv_\richdumset$, and 
  $\alloc(\poordum_i)=0$, $\forall \poordum_i\in \vvv_\poordumset$.

  \smallskip
  This completes the construction of the instance.
  Before we show the correctness, we first observe that the utility function~$\alloc$ is feasible and optimal for~$\vvv$.

  \begin{restatable}[]{claim}{clefficiency}
    \label{cl:efficiency}
    $\sum_{v\in \vvv}\alloc(v)=\cval{\AV}(\vvv)$.
  \end{restatable}
  \begin{claimproof}{cl:efficiency}
    Each dummy alternative~$d\in C_D$ is approved by all~$d_1$ full voters and by no other voters, hence
    $|\VV(d)|=d_1=L\nn^2-(\nn-h)$.
    Each vertex-alternative~$\alt_j\in \aaa_U$ is approved only by some subset of the $\nn$ vertex voters, hence
    $|\VV(\alt_j)|\le \nn < d_1$.
    Therefore a winning committee under \AV\ is $\W^*=\aaa_D$ and
    \[
      \cval{\AV}(\vvv)=\kk\cdot d_1=(L+h)\cdot (L\nn^2-(\nn-h)).
    \]
    On the other hand,
    \[
      \sum_{v\in \vvv}\alloc(v)
      = \sum_{v_i\in \vvv_U}\left(h-\frac{1}{\nn}\right) + \sum_{\richdum_i \in \vvv_\richdumset}(L+h-\varepsilon)
      = (h\nn-1) + d_1(L+h) - d_1\varepsilon
      = d_1(L+h),
    \]
    because $d_1\varepsilon = h\nn-1$ by definition of $\varepsilon$.%
  \end{claimproof}

  \mypara{Correctness.} By \cref{cl:efficiency}, $\alloc$ is feasible for~$\vvv$,  so $\alloc\notin \TUcore{\AV}$ if and only if there exists a blocking coalition.

  For the ``if'' part (biclique implies blocking coalition), assume that $I$ is a yes-instance and let $U_1$ and $U_2$ witness a balanced biclique:
  $U_1\cap U_2=\emptyset$, $|U_1|=|U_2|=h$, and all edges between $U_1$ and $U_2$ are present.
  Let $\vvv'\coloneq \{v_i\mid u_i\in U_1\} \cup \vvv_\poordumset$. Then, $|\vvv'| = h+(h\nn^2-h)=h\nn^2$, hence the coalition share is
  $\share = \lfloor \frac{h\nn^2\cdot \kk}{\kk \nn^2} \rfloor = h$.
  Let $\W' \coloneq \{\alt_j\mid u_j\in U_2\}$, so $|\W'|=h$. Every voter in~$\{v_i\mid u_i\in U_1\}$ approves every alternative in~$\W'$, and hence $\score{\AV}(\vvv',\W') = h\cdot h = h^2$.
  Since empty voters have zero utility, 
  $\sum_{v_i\in \vvv'}\alloc(v_i) = \sum_{u_i \in U_1}(h-1/\nn) = h^2 - h/\nn < h^2$.
  Thus, $\vvv'$ blocks~$\alloc$ and $\alloc \notin \TUcore{\AV}$, i.e., $I'$ is a no-instance.

  For the ``only if'' part (blocking coalition implies biclique), assume that $I'$ is a no-instance and let $\vvv'$ be a blocking coalition and $\W'$ a witness: $|\W'|\le \share$ and $\score{\AV}(\vvv',\W') > \sum_{v\in \vvv'}\alloc(v)$.
  
  \begin{restatable}[]{claim}{clnorichdummy}
    \label{cl:no-richdummy}
    No blocking coalition contains any full voter, i.e.\ $\vvv'\cap \vvv_{\richdumset}=\emptyset$.
  \end{restatable}
  \begin{claimproof}{cl:no-richdummy}
    Towards a contradiction, suppose $x\coloneq |\vvv'\cap \vvv_{\richdumset}|\ge 1$.
    Each dummy alternative has score exactly $x$ in $\vvv_{\richdumset}$,
    while every vertex-alternative has score at most $|\vvv'\cap \vvv_U|\le \nn$.
    Let $t\coloneq \share = \lfloor \frac{|\vvv'|}{\nn^2}\rfloor$.
    
    If $x> \nn$, then any \AV-score-maximizing witness $\W'$ committee may be assumed to contain only dummy alternatives (as otherwise replacing one vertex-alternative with a dummy alternative would increase the score), so
    $\score{\AV}(\vvv',\W') \le t x$.
    
    Blocking would imply $t x > x(L+h-\varepsilon)$ and hence $t > L+h-\varepsilon$.
    Since $t$ is an integer, this yields $t\ge L+h=k$ and therefore $|\vvv'|\ge \nn^2k=|\vvv|$, so $\vvv'=\vvv$.
    But the grand coalition~$\vvv$ does not block~$\alloc$ (\cref{cl:efficiency}).

    If $1 \le x \le \nn$, then the total \AV-score of any committee of size at most $t$ is at most $t\cdot \nn$,
    because every alternative has \AV-score at most $\nn$ in $\vvv'$.
    Yet one full voter contributes utility
    $\alloc(\richdum_i)=L+h-\varepsilon > \nn^2 \ge t\cdot \nn$ (since $|\vvv'|\leq h\cdot\nn^2+2\nn-h$, due to $|\vvv'\cap\vvv_{\richdumset}|\leq\nn$, and therefore $t\leq h+1<\nn$).
    Hence $\vvv'$ cannot block~$\alloc$.

    In both cases we obtain a contradiction, so $\vvv'\cap \vvv_{\richdumset}=\emptyset$.%
  \end{claimproof}

    \begin{restatable}[]{claim}{claddemptydummy}
      \label{cl:add-emptydummy}
      If $\vvv'$ blocks~$\alloc$, then $\vvv'\cup \vvv_{\poordumset}$ also blocks~$\alloc$ (with the same witness committee).
    \end{restatable}
 \begin{claimproof}{cl:add-emptydummy}
    Empty voters approve no alternatives and have utility $0$,
    hence adding them does not change the \AV-score of~$\W'$ nor the utility sum~$\sum_{v\in \vvv'}\alloc(v)$,
    while it can only increase the share~$\share$. 
     \end{claimproof}

By Claims~\ref{cl:no-richdummy} and~\ref{cl:add-emptydummy}, without loss of generality, we may assume $\vvv_\poordumset\subseteq \vvv'$ and $\vvv'\cap \vvv_{\richdumset}=\emptyset$.
Let $h'\coloneq |\vvv'\cap \vvv_U|$ be the number of vertex-voters in $\vvv'$. We aim to show that $h'\ge h$ and that the vertex-voters in the blocking coalition and vertex-alternatives in the witness correspond to a balanced biclique of size at least~$h$ on each side.
First,
$|\vvv'| = |\vvv_{\poordumset}| + h' = (h\nn^2-h)+h'$.
Since $0\le h'\le \nn < \nn^2$, it follows that if $h' < h$, then  $\lfloor \frac{|S|}{\nn^2}\rfloor= h-1$;
otherwise  $\lfloor \frac{|S|}{\nn^2}\rfloor= h$. 

If $h'<h$, then $|\W'|\le h-1$ and every alternative has \AV-score at most $h'$ in $\vvv'$, hence
$\score{\AV}(\vvv', \W')\le (h-1)h'$.
But the coalition utility is
$\sum_{v_i\in \vvv'}\alloc(v_i)=h'\left(h-\frac{1}{\nn}\right)>(h-1)h'$,
a contradiction.
Therefore $h' \ge h$ and the budget forces~$|\W'|\le h$. 

Now note that for every size-at-most-$h$ committee~$\W''$ we have
$\score{\AV}(\vvv', \W'')\le hh'$ since each chosen alternative has score at most $h'$.
Since $\vvv'$ is blocking we get

{\centering
 $\score{\AV}(\vvv', \W') > h'(h-\frac{1}{\nn}) = h'h-\frac{h'}{\nn}$.
 \par}

The left-hand side is an integer and the right-hand side is strictly less than $h'h$, so
$\score{\AV}(\vvv', \W') \ge h'h$.
Thus $\score{\AV}(\vvv', \W')=h'h$ and $|\W'|=h$,
and consequently every alternative in $\W'$ has score exactly $h'$ in $\vvv'$, i.e.\ it is approved by \emph{all} $h'$ vertex voters in $\vvv'$.

Let $U_1$ be any subset of size-$h$ of the vertices corresponding to $\vvv'\cap \vvv_U$, and let $U_2$ be the set of size-$h$ of vertices corresponding to $\W'$.
Then every vertex in $U_1$ is adjacent to every vertex in $U_2$.
Moreover $U_1\cap U_2=\emptyset$ because a voter~$v_i$ does not approve $\alt_i$ (no self-loop),
so if $\alt_i\in \W'$ and $v_i\in \vvv'$, then $\alt_i$ would not be approved by all voters in $\vvv'\cap \vvv_U$,
contradicting that $\alt_i$ has score $h'$.
Hence $(U_1,U_2)$ witnesses that $I$ is a yes-instance of \Biclique.

We have shown that $I$ is a yes-instance if and only if $I'$ is a no-instance.
Therefore \MWGTUCverif{\AV} is \coNPhh.
Together with \coNP-containment in \cref{thm:complexity-upperbounds}\eqref{upper:TU-verif}, the problem is \coNPcc.
\end{proof}
The following result can be obtained using a similar reduction to \cref{thm:TU-CV-AV}. Intuitively, we circumvent the fact that the utility that a voter $v$ obtains from an approved alternative is $\frac{1}{|\app(v)|}$ rather than $1$ by making sure that all voters that approve at least one alternative approve the exact same number of alternatives by adding dummy alternatives that are approved by one voter each.
\begin{restatable}{theorem}{thmTUCVSAV}\label{thm:TU-CV-SAV}
	\MWGTUCverif{\SAV} is \coNPcc. 
\end{restatable}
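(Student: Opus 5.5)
Membership in \coNP\ follows from \cref{thm:complexity-upperbounds}\eqref{upper:TU-verif}. For hardness we reduce again from \Biclique\ and mimic the construction from the proof of \cref{thm:TU-CV-AV}, rescaling so that \SAV-scores of the relevant voters are proportional to their \AV-scores. The obstacle is that under \SAV\ a vertex-voter $v_i$ gains $1/|\app(v_i)|$ per approved alternative, and degrees differ across vertices. To fix this, let $\Delta\coloneq\nn$. For each vertex-voter $v_i$ we add $\Delta-\deg(u_i)$ fresh \emph{padding alternatives}, each approved only by $v_i$. Then every vertex-voter approves exactly $\Delta$ alternatives, so her \SAV-score for any committee is $(1/\Delta)$ times her \AV-score. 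Full voters already approve all $\kk$ dummy alternatives, so their \SAV-utility is at most $1$ with per-alternative gain $1/\kk$. Empty voters have score $0$. Next we keep $L,\kk,d_1,d_2$ and the voter counts as before, so $|\vvv|=\kk\nn^2$ and the seat caps are unchanged. We then rescale the utilities: $\alloc(v_i)=(h-\frac1\nn)/\Delta$, $\alloc(\richdum_i)=(L+h-\varepsilon')/\kk$, and $\alloc(\poordum_i)=0$, where $\varepsilon'$ is chosen so that the total equals $\cval{\SAV}(\vvv)$.

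Efficiency is the first step. Each dummy alternative has \SAV-score $d_1/\kk$. A vertex or padding alternative has score at most $\nn/\Delta=1$, which is far smaller, so $\aaa_D$ is winning with value $d_1$. Solving $\nn(h-\frac1\nn)/\Delta+d_1(L+h-\varepsilon')/\kk=d_1$ gives $\varepsilon'=\kk(h\nn-1)/(\Delta d_1)<1$. For the direction ``biclique implies blocking'' we reuse the same coalition $U_1$-voters together with $\vvv_\poordumset$ and the same committee $U_2$. Its score is $h^2/\Delta>h(h-\frac1\nn)/\Delta$.

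For the converse we redo the three claims. (i) No blocking coalition contains a full voter. If $x>\nn$ full voters are present, dummies dominate, since each gives $x/\kk$ while any other alternative gives at most $1$; blocking then forces $t>L+h-\varepsilon'$, so $t\ge\kk$ and the coalition is the grand coalition, which cannot block. If $1\le x\le\nn$, every alternative has score at most $\max(\nn/\kk,1)\le1$. So the total is at most $t\le h+1$, which is less than one full voter's utility $(L+h-\varepsilon')/\kk\approx1$. This comparison is the delicate step, and we expect it to be the main obstacle. To make it robust we would increase $L$, or give full voters extra private padding so their per-seat value is rescaled comparably. If needed we would simply choose $\Delta$ and $\kk$ so that full voters' utility exceeds $(h+1)$. (ii) Adding empty voters preserves blocking. (iii) Padding alternatives are never useful in a witness, since each gives score at most $1/\Delta$ to one voter, which is no better than a vertex alternative. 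The rest is the \AV\ counting argument scaled by $1/\Delta$. If $h'<h$, then the score is at most $(h-1)h'/\Delta$, which is below the utility; otherwise $|\W'|\le h$. Multiplying by $\Delta$ gives an integer strictly greater than $h'h-h'/\nn$, so every chosen alternative must be approved by all $h'$ coalition vertex-voters, which yields the biclique.
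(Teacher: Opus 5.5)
Your plan matches the paper's: pad the vertex voters with private alternatives so that \SAV\ becomes a rescaled \AV, rescale $\alloc$, and rerun the three \AV\ claims. The difference is that you pad to $\Delta=\nn$ approvals, whereas the paper pads each vertex voter to exactly $\kk$ approvals (using $\kk-\deg(u_i)$ private alternatives). Your choice makes the scaling non-uniform: vertex voters get $1/\nn$ per approval, full voters get $1/\kk$. This breaks your claim~(i) in both subcases.

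For $x>\nn$, a dummy is worth $x/\kk$ to the coalition, but a vertex alternative can be worth up to $h'/\nn$, which is close to $1$. Your own bound gives dominance only when $x\ge\kk\approx\nn^2$. So for $\nn<x<\kk$, neither the exchange argument nor the bound $\score{\SAV}(\vvv',\W')\le tx/\kk$ is justified.

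For $1\le x\le\nn$, the comparison ``total score at most $t\le h+1$, which is less than one full voter's utility $(\kk-\varepsilon')/\kk$'' is false, since $h+1\ge 2$ while that utility is below $1$. The sharper bound $t\le h$ does not help either. Your proposed remedies cannot fix this. A full voter's utility stays below $1$ however $L$ and $\kk$ are chosen, and extra padding for full voters only lowers their per-seat value. The \AV\ argument works because a vertex alternative is worth at most $\nn$ there, or at most $\nn/\kk\approx 1/\nn$ after a uniform $1/\kk$ scaling. That is far less than one full voter's utility. Under your $1/\nn$ scaling, a vertex alternative is worth about as much as that utility.

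The cleanest repair is $\Delta=\kk$. Then every voter approves $0$ or $\kk$ alternatives, so $\score{\SAV}(\vvv',\W')=\score{\AV}(\vvv',\W')/\kk$ for all coalitions and committees. Dividing all \AV-utilities (with the original $\varepsilon$) by $\kk$ then lets every \AV\ claim, including both subcases of the full-voter claim, transfer verbatim.

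Your $\Delta=\nn$ construction is in fact still correct, but claim~(i) then needs separate bookkeeping. Take a non-grand coalition whose witness has $a$ dummies and $b$ other alternatives, with $a+b\le t\le\kk-1$.
\begin{itemize}
\item The full voters fall short of their utility by $x(\kk-a-\varepsilon')/\kk>0$.
\item The vertex voters exceed theirs by at most $h'(b-h+\frac{1}{\nn})/\nn$.
\end{itemize}
The excess never beats the shortfall:
\begin{itemize}
\item If $b<h$, the excess is non-positive.
\item If $b\ge h\ge t$, the excess is at most $1/\nn$, while the shortfall exceeds $(\kk-1)/\kk>1/\nn$.
\item If $t\ge h+1$, then $x\ge\nn^2-\nn+h$, since there are only $\nn+d_2$ non-full voters. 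If $x\ge\kk$, then $\kk-a-\varepsilon'>b$. If $x<\kk$, then $t=h+1$ and $a\le 1$. In both cases the shortfall exceeds the excess.
\end{itemize}

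Finally, claim~(iii) requires $h\ge 2$, which you may assume w.l.o.g. For $h=h'=1$, a single vertex voter together with all empty voters has seat cap $1$. That coalition blocks via one of her padding alternatives even if $G$ has no edges. This caveat applies to any such padded construction.
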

	\begin{proof}
		\newcommand{\privtxdum}[2]{\ensuremath{p^{#1}_{#2}}}
		\newcommand{\privvtxset}{\ensuremath{P}}
		\CoNP-membership follows from \cref{thm:complexity-upperbounds}\eqref{upper:TU-verif}.
		To show \coNP-hardness, we reduce from \Biclique. %
		Let $I=(G=(N,E),h)$ be an instance with $|N|=\nn$ and assume without loss of generality that $\nn\ge 2h$ and $\nn \ge 2$.
		We construct an instance $I'=(\ppp,\kk,\alloc)$ of \MWGTUCverif{\SAV}.
		
		\mypara{Profile construction.} Let $L\coloneq \nn^2+1$ and $\kk\coloneq L+h$.
		\begin{compactitem}[--]
			\item For each vertex~$u_i\in U$, create a \myemph{vertex-alternative} $\alt_i$; let $\aaa_U\coloneq \{\alt_i\mid u_i\in U\}$.
			\item For each vertex~$u_i\in U$, create $\kk-\deg(u_i)=L+h-\deg(u_i)$ \myemph{private alternatives} $\privtxdum{i}{1},\ldots,\privtxdum{i}{\kk-\deg(u_i)}$; let $\privvtxset\coloneq \{\privtxdum{i}{j}\mid u_i\in U, j\in[\kk-\deg(u_i)]\}$.
			\item Create $\kk=L+h$ \myemph{dummy alternatives} $\aaa_D\coloneq\{d_1,\dots,d_{L+h}\}$.
			\item For each vertex~$u_i\in U$, create a \myemph{vertex-voter} $v_i$; let $\vvv_U\coloneq \{v_i\mid u_i\in U\}$.
			Their approvals are $\app(v_i)\coloneq \{\alt_j\mid \{u_i,u_j\}\in E\}\cup\{\privtxdum{i}{1},\ldots,\privtxdum{i}{\kk-\deg(u_i)}\}$; in particular, $\alt_i\notin \app(v_i)$ since $G$ is simple.
			\item Create $d_1\coloneq L\cdot \nn^2-(\nn-h)$ \myemph{full voters} %
			$\vvv_{\richdumset}:=\{\richdum_1,\dots,\richdum_{d_1}\}$.
			Each full voter~$\richdum_i$ approves all dummy alternatives: $\app(\richdum_i)\coloneq \aaa_D$.
			\item Create $d_2\coloneq h\cdot \nn^2-h$ \myemph{empty voters}
			$\vvv_\poordumset\coloneq\{\poordum_1,\dots,\poordum_{d_2}\}$.
			Each empty voter~$\poordum_i$ approves nothing: $\app(b_i)\coloneq \emptyset$.
		\end{compactitem}
		The total number of voters is
		$|\vvv| = |\vvv_U|+|\vvv_\richdumset|+|\vvv_\poordumset|
		= \nn + (L\nn^2-(\nn-h)) + (h\nn^2-h)
		= \kk\cdot \nn^2$.
		
		\mypara{The utility function~$\alloc$.}
		Define $\varepsilon \coloneq \frac{\nn h-1}{d_1}$ (note that $\varepsilon<1$ since $\nn^2-1 > \nn-h$ and hence $d_1 > L=\nn^2+1$).
		Set  
		$\alloc(v_i)=\frac{h-\frac{1}{\nn}}{\kk}$, $\forall v_i\in \vvv_U$,
		$\alloc(\richdum_i)=\frac{L+h-\varepsilon}{\kk}$, $\forall \richdum_i \in \vvv_\richdumset$, and 
		$\alloc(\poordum_i)=0$, $\forall \poordum_i\in \vvv_\poordumset$.
		
		\smallskip
		This completes the construction of the instance. An attentive reader may have noticed that the construction of the profile and utility function is very similar to the one in \cref{thm:TU-CV-AV}. The only differences lie in the fact that we added the private alternatives and divided the utility of each voter by $\kk$. Due to this the remainder of the proof will follow the same steps as the proof of \cref{thm:TU-CV-AV}.
		
		We can immediately observe the following from the construction:
		\begin{observation}\label{obs:TUCVAVTUCVSAV}
			Each voter approves either $0$ or $\kk$ alternatives. Therefore it holds that $\cval{\SAV}(\vvv')=\frac{\cval{\AV}(\vvv')}{\kk}$ and $\scoreTU{\SAV}(\vvv',\W')=\frac{\scoreTU{\AV}(\vvv',\W')}{\kk}$ for all $\vvv'\subseteq\vvv$ and $\W'\subseteq\aaa$.
		\end{observation}
		Next, we observe that the utility function~$\alloc$ is feasible and optimal for~$\vvv$.
		
		\begin{restatable}[]{claim}{SAVclefficiency}
				\label{cl:SAVefficiency}
				$\sum_{v\in \vvv}\alloc(v)=\cval{\SAV}(\vvv)$.
			\end{restatable}
			\begin{claimproof}{cl:SAVefficiency}
				Each dummy alternative~$d\in C_D$ is approved by all~$d_1$ full voters and by no other voters, hence
				$|\VV(d)|=d_1=L\nn^2-(\nn-h)$.
				Each vertex-alternative~$\alt_j\in \aaa_U$ is approved only by some subset of the $\nn$ vertex voters, hence
				$|\VV(\alt_j)|\le \nn < d_1$. Each private alternative~$\privtxdum{i}{j}$ is approved only by one voter, hence $|\VV(\privtxdum{i}{j})|=1$.
				Therefore a winning committee under \SAV\ is $\W^*=\aaa_D$ and by \cref{obs:TUCVAVTUCVSAV}
				\[
				\cval{\SAV}(\vvv)=\frac{\cval{\AV}(\vvv)}{\kk}=\frac{\kk\cdot d_1}{\kk}=d_1=(L\nn^2-(\nn-h)).
				\]
				On the other hand,
				\[
				\sum_{v\in \vvv}\alloc(v)
				= \sum_{v_i\in \vvv_U}\left(\frac{h-\frac{1}{\nn}}{\kk}\right) + \sum_{\richdum_i \in \vvv_\richdumset}\frac{(L+h-\varepsilon)}{\kk}
				= \frac{(h\nn-1) + d_1(L+h) - d_1\varepsilon}{\kk}
				= d_1,
				\]
				because $d_1\varepsilon = h\nn-1$ by definition of $\varepsilon$.%
			\end{claimproof}
			
			\mypara{Correctness.} By \cref{cl:SAVefficiency}, $\alloc$ is feasible for~$\vvv$,  so $\alloc\notin \TUcore{\SAV}$ if and only if there exists a blocking coalition.
			
			For the ``if'' part (biclique implies blocking coalition), assume that $I$ is a yes-instance and let $U_1$ and $U_2$ witness a balanced biclique:
			$U_1\cap U_2=\emptyset$, $|U_1|=|U_2|=h$, and all edges between $U_1$ and $U_2$ are present.
			Let $\vvv'\coloneq \{v_i\mid u_i\in U_1\} \cup \vvv_\poordumset$. Then, $|\vvv'| = h+(h\nn^2-h)=h\nn^2$, hence the coalition share is
			$\share = \lfloor \frac{h\nn^2\cdot \kk}{\kk \nn^2} \rfloor = h$.
			Let $\W' \coloneq \{\alt_j\mid u_j\in U_2\}$, so $|\W'|=h$. Every voter in~$\{v_i\mid u_i\in U_1\}$ approves every alternative in~$\W'$, and hence $\score{\SAV}(\vvv',\W')=\frac{\score{\AV}(\vvv',\W')}{\kk} = \frac{h^2}{\kk}$.
			Since empty voters have zero utility, 
			$\sum_{v_i\in \vvv'}\alloc(v_i) = \sum_{u_i \in U_1}\frac{h-1/\nn}{\kk} = \frac{h^2 - h/\nn}{\kk} < \frac{h^2}{\kk}$.
			Thus, $\vvv'$ blocks~$\alloc$ and $\alloc \notin \TUcore{\SAV}$, i.e., $I'$ is a no-instance.
			
			For the ``only if'' part (blocking coalition implies biclique), assume that $I'$ is a no-instance and let $\vvv'$ be a blocking coalition and $\W'$ a witness: $|\W'|\le \share$ and $\score{\SAV}(\vvv',\W') > \sum_{v\in \vvv'}\alloc(v)$.
			
			\begin{restatable}[]{claim}{SAVclnorichdummy}
					\label{cl:SAVno-richdummy}
					No blocking coalition contains any full voter, i.e.\ $\vvv'\cap \vvv_{\richdumset}=\emptyset$.
				\end{restatable}
				\begin{claimproof}{cl:SAVno-richdummy}
					Towards a contradiction, suppose $x\coloneq |\vvv'\cap \vvv_{\richdumset}|\ge 1$.
					Each dummy alternative has score exactly $\frac{x}{\kk}$ in $\vvv_{\richdumset}$,
					while every vertex-alternative has score at most $\frac{|\vvv'\cap \vvv_U|}{\kk}\le \frac{\nn}{\kk}$ and every private alternative has score at most $\frac{1}{\kk}$.
					Let $t\coloneq \share = \lfloor \frac{|\vvv'|}{\nn^2}\rfloor$.
					
					If $x> \nn$, then any \SAV-score-maximizing witness $\W'$ committee may be assumed to contain only dummy alternatives (as otherwise replacing one vertex-alternative or private alternative with a dummy alternative would increase the score), so
					$\score{\SAV}(\vvv',\W') \le \frac{t x}{\kk}$.
					
					Blocking would imply $\frac{t x}{\kk} > \frac{x(L+h-\varepsilon)}{\kk}$ and hence $t > L+h-\varepsilon$.
					Since $t$ is an integer, this yields $t\ge L+h=k$ and therefore $|\vvv'|\ge \nn^2k=|\vvv|$, so $\vvv'=\vvv$.
					But the grand coalition~$\vvv$ does not block~$\alloc$ (\cref{cl:SAVefficiency}).
					
					If $1 \le x \le \nn$, then the total \SAV-score of any committee of size at most $t$ is at most $\frac{t\cdot \nn}{\kk}$,
					because every alternative has \SAV-score at most $\frac{\nn}{\kk}$ in $\vvv'$.
					Yet one full voter contributes utility
					$\alloc(\richdum_i)=\frac{L+h-\varepsilon}{\kk} > \frac{\nn^2}{\kk} \ge \frac{t\cdot \nn}{\kk}$ (since $|\vvv'|\leq h\cdot\nn^2+2\nn-h$, due to $|\vvv'\cap\vvv_{\richdumset}|\leq\nn$, and therefore $t\leq h+1<\nn$).
					Hence $\vvv'$ cannot block~$\alloc$.
					
					In both cases we obtain a contradiction, so $\vvv'\cap \vvv_{\richdumset}=\emptyset$.%
				\end{claimproof}

				\begin{restatable}[]{claim}{SAVcladdemptydummy}
						\label{cl:SAVadd-emptydummy}
						If $\vvv'$ blocks~$\alloc$, then $\vvv'\cup \vvv_{\poordumset}$ also blocks~$\alloc$ (with the same witness committee).
					\end{restatable}
					\begin{claimproof}{cl:SAVadd-emptydummy}
						Empty voters approve no alternatives and have utility $0$,
						hence adding them does not change the \SAV-score nor the utility sum~$\sum_{v\in \vvv'}\alloc(v)$,
						while it can only increase the share~$\share$. 
					\end{claimproof}
					
					By Claims~\ref{cl:SAVno-richdummy} and~\ref{cl:SAVadd-emptydummy}, without loss of generality, we may assume $\vvv_\poordumset\subseteq \vvv'$ and $\vvv'\cap \vvv_{\richdumset}=\emptyset$.
					Let $h'\coloneq |\vvv'\cap \vvv_U|$ be the number of vertex-voters in $\vvv'$. We aim to show that $h'\ge h$ and that the vertex-voters in the blocking coalition and vertex-alternatives in the witness correspond to a balanced biclique of size at least~$h$ on each side.
					First,
					$|\vvv'| = |\vvv_{\poordumset}| + h' = (h\nn^2-h)+h'$.
					Since $0\le h'\le \nn < \nn^2$, it follows that if $h' < h$, then  $\lfloor \frac{|S|}{\nn^2}\rfloor= h-1$;
					otherwise  $\lfloor \frac{|S|}{\nn^2}\rfloor= h$. 
					
					If $h'<h$, then $|\W'|\le h-1$ and every alternative has \SAV-score at most $\frac{h'}{\kk}$ in $\vvv'$, hence
					$\score{\SAV}(\vvv', \W')\le \frac{(h-1)h'}{\kk}$.
					But the coalition utility is
					$\sum_{v_i\in \vvv'}\alloc(v_i)=\frac{h'\left(h-\frac{1}{\nn}\right)}{\kk}>\frac{(h-1)h'}{\kk}$,
					a contradiction.
					Therefore $h' \ge h$ and the budget forces~$|\W'|\le h$. 
					
					Now note that for every size-at-most-$h$ committee~$\W''$ we have
					$\score{\SAV}(\vvv', \W'')\le \frac{hh'}{\kk}$ since each chosen alternative has score at most $\frac{h'}{\kk}$.
					Since $\vvv'$ is blocking we get
					
					{\centering
						$\score{\SAV}(\vvv', \W') > \frac{h'(h-\frac{1}{\nn})}{\kk} = \frac{h'h-\frac{h'}{\nn}}{\kk}$.
						\par}
					
					If we multiply everything with $\kk$, then the left-hand side is an integer and the right-hand side is strictly less than $h'h$, so
					$\score{\SAV}(\vvv', \W') \ge \frac{h'h}{\kk}$.
					Thus $\score{\SAV}(\vvv', \W')=\frac{h'h}{\kk}$ and $|\W'|=h$,
					and consequently every alternative in $\W'$ has score exactly $h'$ in $\vvv'$, i.e.\ it is approved by \emph{all} $h'$ vertex voters in $\vvv'$.

					Let $U_1$ be any subset of size-$h$ of the vertices corresponding to $\vvv'\cap \vvv_U$, and let $U_2$ be the set of size-$h$ of vertices corresponding to $\W'$.
					Then every vertex in $U_1$ is adjacent to every vertex in $U_2$.
					Moreover $U_1\cap U_2=\emptyset$ because a voter~$v_i$ does not approve $\alt_i$ (no self-loop),
					so if $\alt_i\in \W'$ and $v_i\in \vvv'$, then $\alt_i$ would not be approved by all voters in $\vvv'\cap \vvv_U$,
					contradicting that $\alt_i$ has score $h'$.
					Hence $(U_1,U_2)$ witnesses that $I$ is a yes-instance of \Biclique.

					We have shown that $I$ is a yes-instance if and only if $I'$ is a no-instance.
					Therefore \MWGTUCverif{\SAV} is \coNPhh.
					Together with \coNP-containment in \cref{thm:complexity-upperbounds}\eqref{upper:TU-verif}, the problem is \coNPcc.
				\end{proof}
We next turn to \CC\ in the \TU\ model.
Unlike \AV\ and \SAV, core membership verification for \CC\ combines
\begin{inparaenum}
  \item an \NP-type witness of a winning committee and 
  \item a \coNP-type requirement that no blocking coalition exists;
\end{inparaenum}
these two constraints naturally place the problem in the class \DP.
We show that this upper bound is tight by a reduction from a \DP-hard problem.
\begin{restatable}[]{theorem}{thmTUCVCC}\label{thm:TU-CV-CC}
\MWGTUCverif{\CC} is \DPcc.
\end{restatable}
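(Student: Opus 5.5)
Membership in \DP\ is given by \cref{thm:complexity-upperbounds}\eqref{upper:TU-verif}. For \DP-hardness I would reduce from a canonical \DP-complete ``pair'' problem and combine the two gadgets already present in this section. The \NP\ part will be the optimality of the grand-coalition score, as in \cref{thm:TU-CE}: an \textsc{RX3C} instance where an exact cover exists iff $\cval{\CC}(\vvv)$ reaches a threshold. The \coNP\ part will be the absence of a blocking coalition, as in \cref{thm:TU-CV-AV}: a \Biclique-type instance whose yes-instances yield a blocking coalition. Concretely, I would take as source an \NP-complete problem $Q_1$ (\textsc{RX3C}) and an \NP-complete problem $Q_2$ (\Biclique, or a second \textsc{RX3C} instance, in the spirit of the \XTCDP\ macro the paper defines). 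The target instance is $(I_1,I_2)$ with $I_1\in Q_1$ and $I_2\notin Q_2$. It is standard that such a pair problem is \DP-complete.

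\textbf{Construction.} I would build disjoint voter/alternative blocks. Block~1 encodes $I_1$: one voter per element and one alternative per 3-set, so its \CC-optimum is $3\nn$ iff an exact cover exists. Block~2 encodes $I_2$ and is adapted to \CC. Since \CC\ counts a voter once, a biclique gadget should instead ask whether a coalition can be fully covered by few alternatives. Using a second \textsc{RX3C} instance is more natural: a coalition of the element voters of $I_2$, padded with empty voters so that its seat cap is exactly $\nn_2$, obtains \CC-score $3\nn_2$ iff $I_2$ has an exact cover. Padding voters (empty and full, as in \cref{thm:TU-CV-AV}) tune $|\vvv|$ and $\kk$ so that every coalition's seat cap is controlled. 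Then I set $\alloc$ to be $1$ on block-1 voters, $1-\varepsilon$ on block-2 element voters with small $\varepsilon$, and suitable values on padding voters. The sum of $\alloc$ must equal the maximum \CC-score precisely when $I_1$ has a cover, i.e.\ the intended optimum committee is a cover of block~1, the best $\nn_2$ alternatives of block~2, and dummies.

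\textbf{Correctness.} The argument has three parts. (i) The sum of $\alloc$ equals $\cval{\CC}(\vvv)$ iff $I_1\in Q_1$. If $I_1$ has no cover, the grand coalition blocks, which also rules out membership. (ii) If $I_2$ has a cover, the padded block-2 coalition gets $3\nn_2>3\nn_2(1-\varepsilon)$ and blocks. (iii) If $I_2$ has no cover, no coalition blocks. Part (iii) mirrors \cref{cl:no-richdummy} and \cref{cl:add-emptydummy}: full voters have too high utility to join profitably, adding empty voters is free, and block-1 voters each hold utility $1$, which is the most \CC\ can give them, so they never help. For a coalition of block-2 voters with cap $t$, each alternative covers at most $3$ of them. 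Hence its score is at most $\min(h',3t)$, and equality forces a partial exact cover. Integrality against the $\varepsilon$ slack then forces $h'=3\nn_2$ and a full exact cover.

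\textbf{Main obstacle.} The hard step is (iii) together with the interaction of seat caps across blocks. Coalitions mixing block-1, block-2 and padding voters, with caps $\lfloor |\vvv'|\kk/|\vvv|\rfloor$, must never exceed their utility sum. This must hold simultaneously with (i), so that block~2's contribution to the grand optimum is known exactly and does not depend on $I_2$. I would first try to make the block-2 optimum independent of $I_2$ by adding $\nn_2$ extra alternatives, each approved by a disjoint triple of fresh voters, so that block~2's grand contribution is fixed. I would then calibrate the padding sizes, and if needed a scaling factor $\nn^2$ as in \cref{thm:TU-CV-AV}, so that only the intended padded coalition has the right cap.
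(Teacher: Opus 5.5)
Your \DP\ skeleton (membership from \cref{thm:complexity-upperbounds}, hardness from an ``$I_1$ yes, $I_2$ no'' pair problem) is fine. The \coNP\ gadget, however, cannot work as proposed, and step (iii) is false.

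Let $B$ be the $3\nn_2$ element voters of $I_2$, each with $\alloc=1-\varepsilon$. Let $P$ be the utility-$0$ empty voters that give $B\cup P$ seat cap $\nn_2$; this needs $|P|\ge \nn_2(|\vvv|/\kk-3)$. Take any set-alternative $s_j$ of $I_2$ and its three approvers $T_j\subseteq B$. If $|\vvv|/\kk\le 3$, then $T_j$ alone has cap at least $1$. Otherwise $|P|\ge\lceil|\vvv|/\kk\rceil-3$, so $T_j$ plus that many empty voters has cap at least $1$. Either way $\{s_j\}$ gives \CC-score $3>3(1-\varepsilon)$, so $\alloc$ is blocked whether or not $I_2$ has an exact cover.

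The step that breaks is ``integrality against the $\varepsilon$ slack forces $h'=3\nn_2$''. Integrality only forces the $h'$ members to be fully covered within their cap, and a single triple is such a partial cover in every instance. Retuning $\alloc$ on $B$ does not help while padding is free. The padded-triple constraints give $\alloc(T_j)\ge 3$ for every $j$, and each voter of $B$ lies in exactly three triples, so $\alloc(B)\ge 3\nn_2$ and $B\cup P$ can never block. This is the fractional exact cover (weight $1/3$ per triple) that the paper uses as a balanced vector in the proof of \cref{thm:TU-CE}. The \AV\ reduction of \cref{thm:TU-CV-AV} escapes this only because a vertex voter's utility $h-1/\nn$ grows with the target size $h$. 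Under \CC\ each voter's score is at most $1$, so a fragment of a cover is worth as much per seat as the whole.

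Two further claims in (iii) are also unjustified. First, in \TU\ voters with utility $1$ can help a coalition by raising its seat cap, so ``they never help'' does not follow. Second, the full-voter subsidy of \cref{thm:TU-CV-AV} has no \CC\ analogue, since a full voter contributes only $1$ to the grand score. It is therefore unclear who pays for the roughly $3\nn_2$ units handed to $B$ without forming a padded blocking group themselves. The same applies to your fresh voters, which face the same padded-triple attack.

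The paper avoids a separate blocking gadget and puts both conditions into the grand coalition's value. It reduces from \textsc{RX3C-UNSAT} with $\kk=2\nn_1+\nn_2+1$ and $L=(\mm+2)/3$, where $\mm$ is the number of clauses. The construction has:
\begin{itemize}
\item two copies of the \textsc{RX3C} gadget, with every element voter replicated $L$ times, so that without an exact cover the optimum drops to at most $|\vvv|-2$ (this also defeats seat-stealing between blocks);
\item $3L$ voters per variable approving both of its literal-alternatives;
\item one voter per clause approving its literals and a common clause alternative;
\item one dummy voter approving only a dummy alternative.
\end{itemize}
With $\alloc\equiv 1$ except $0$ on the dummy voter, $\sum_{v}\alloc(v)=\cval{\CC}(\vvv)$ holds iff $\cval{\CC}(\vvv)=|\vvv|-1$. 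This happens iff $I_1$ has an exact cover and $\varphi$ is unsatisfiable, since an exact cover plus a satisfying assignment lets the grand coalition cover everybody. So the \coNP\ condition is ``no size-$\kk$ committee covers all voters'', and it is checked by the grand coalition itself.

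The remaining work is a normal-form claim. A blocking coalition must contain the dummy voter and cover all its members. Since $|\vvv|/\kk<3L$, it can be grown (closing under copies, adding whole variable blocks, swapping in the exact cover) until it needs all $\kk$ seats. Hence it is $\vvv$, which does not block.
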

\begin{proof}
\newcommand{\varalt}{\ensuremath{{\alt}}}
\newcommand{\negvaralt}{\ensuremath{{\overline{\alt}}}}
\newcommand{\clausealt}{\ensuremath{\alt_\varphi}}
\newcommand{\varvot}{\ensuremath{v^x}}
\newcommand{\clausevot}{\ensuremath{v^\varphi}}
\newcommand{\satassign}{\ensuremath{\sigma}}
\newcommand{\dummyvoter}{\ensuremath{\mathsf{v_d}}}

\DP-membership follows from \cref{thm:complexity-upperbounds}\eqref{upper:TU-verif}.
Therefore it suffices to show \DPhh-hardness. We reduce from \textsc{RX3C-UNSAT}.\footnote{This problem is \DP-complete. Membership is immediate: exact cover (\textsc{RX3C}) is in \NP\ and unsatisfiability~(\textsc{UNSAT}) is in \coNP.
  Hardness follows from \textsc{SAT-UNSAT}\cite{Papadimitriou88facets} by mapping the first formula to an equivalent \textsc{RX3C} instance (via a standard NP-completeness reduction) and keeping the second formula unchanged.}

\smallskip

\decprob{\textsc{RX3C-UNSAT}}
{An \textsc{RX3C}-instance $I_1=(X_1,\mathcal{S})$ and a 3SAT-instance $I_2=(X_2,\varphi)$ such that :
  \begin{inparaenum}
    \item $|X_1|=3\nn_1$,
    \item $\mathcal{S}=\{S_1,\ldots,S_{3\nn_1}\}$ with $S_j\subseteq X_1$ and $|S_j|=3$ for all $j\in[3\nn_1]$,
    \item each element of $X_1$ is contained in exactly three sets of $\mathcal{S}$, and
	\item $\varphi$ is a 3CNF formula containing only variables in~$X_2$.
  \end{inparaenum}}{Is it true that $I_1$ has an \emph{exact cover} (i.e., a subcollection $\mathcal{S}'\subseteq \mathcal{S}$ such that each $x\in X_1$ belongs to exactly one set of $\mathcal{S}'$) and $I_2$ has no satisfying truth assignment?}

Let $I=(I_1,I_2)$ be an instance of \textsc{RX3C-UNSAT}, where $|X_1|=3\nn_1$, $|X_2|=\nn_2$, and $|\varphi|=\mm$, and each element of $X_1$ occurs in exactly three members in $\mathcal{S}$. W.l.o.g.\ assume that $\mm+2$ is divisible by $3$ and that $\mm>1$.

\mypara{Profile construction.}
Set $\kk\coloneqq 2\nn_1+\nn_2+1$ and $L\coloneqq \frac{\mm+2}{3}$.
\begin{compactitem}[--]
  \item For each set~$S_j\in\mathcal{S}$ create a \myemph{set1-alternative}~$\setaltone_j$ and a \myemph{set2-alternative}~$\setalttwo_j$; let
$\aaa_{\mathcal{S}_1}\coloneqq\{\setaltone_1,\ldots,\setaltone_{3\nn_1}\}$ and
$\aaa_{\mathcal{S}_2}\coloneqq\{\setalttwo_1,\ldots,\setalttwo_{3\nn_1}\}$.
\item For each variable~$x_i\in X_2$ create two \myemph{literal-alternatives} $\varalt_{i}$ and $\negvaralt_{i}$; let
$\aaa_X\coloneqq\{\varalt_1,\negvaralt_{1},\ldots,\varalt_{\nn_2},\negvaralt_{\nn_2}\}$.
\item Create a clause alternative~$\clausealt$ and a dummy alternative~$\dummyalt$.

\item For each element~$x_i\in X_1$ create $L$ \myemph{element1-voters}~$\votone_{i,1},\ldots,\votone_{i,L}$ and $L$ \myemph{element2-voters} $\vottwo_{i,1},\ldots,\vottwo_{i,L}$; let $\vvv^1\coloneqq\{\votone_{1,1},\ldots,\votone_{1,L},\ldots,\votone_{3\nn_1,1},\ldots,\votone_{3\nn_1,L}\}$,
$\vvv^2\coloneqq\{\vottwo_{1,1},\ldots,\vottwo_{1,L},\ldots,\vottwo_{3\nn_1,1},\ldots,\vottwo_{3\nn_1,L}\}$.
Their approvals are $\app(\votone_{i,\ell})\coloneqq\{\setaltone_j\mid x_i\in S_j\}$ and $\app(\vottwo_{i,\ell})\coloneqq\{\setalttwo_j\mid x_i\in S_j\}$ for all $\ell\in[L]$.

\item For each variable~$x_i\in X_2$ create $3L$ \myemph{variable-voters} $\varvot_{i,1},\ldots,\varvot_{i,3L}$; let
$\vvv^X\coloneqq\{\varvot_{1,1},\ldots,\varvot_{1,3L}$, $\ldots,\varvot_{\nn_2,1}$, $\ldots,\varvot_{\nn_2,3L}\}$.
Their approvals are $\app(\varvot_{i,\ell})=\{\varalt_{i},\negvaralt_{i}\}$ for $\ell\in[3L]$.

\item For each clause $C_j\in\varphi$ create a \myemph{clause-voter}~$\clausevot_j$; let $\vvv^\varphi\coloneqq\{\clausevot_1,\ldots,\clausevot_{\mm}\}$. Their approvals are
$\app(\clausevot_j)\coloneqq\{\varalt_{i}\mid x_i\in C_j\}\cup\{\negvaralt_{i}\mid \overline{x}_i\in C_j\}\cup\{\clausealt\}$.
\item Create a dummy voter $\dummyvoter$ with $\app(\dummyvoter)\coloneqq\{\dummyalt\}$.
\end{compactitem}
Thus the total number of voters is $|\vvv| =2\cdot (3\nn_1)\cdot L + (3L)\nn_2 + \mm + 1
=(\mm+2)\cdot (2\nn_1+\nn_2)+\mm+1$.

\mypara{The utility function $\alloc$.}
Set $\alloc(v)=1$ for all $v\in\vvv\setminus\{\dummyvoter\}$ and $\alloc(\dummyvoter)=0$.

\mypara{Correctness.}
    We show that $I$ is a yes-instance of \textsc{RX3C-UNSAT}
    if and only if $\alloc\in\TUcore{\CC}$.

For the ``\textbf{if}'' part,  we prove the contra-positive. Assume $I$ is a no-instance of \textsc{RX3C-UNSAT}, meaning that $I_1$ has no exact cover or $I_2$ is satisfiable. 
We distinguish two cases.

\smallskip
\myparait{Case 1: $I_1$ has no exact cover.}
We show that $\score{\CC}(\vvv)=\cval{\CC}(\vvv)<\sum_{v\in\vvv}\alloc(v)=|\vvv|-1$, hence $\alloc$ is infeasible.

Let $\W$ be an arbitrary size-$\kk$ committee. Define
\[
\W_1\coloneqq \W\cap \aaa_{\mathcal{S}_1},\qquad
\W_2\coloneqq \W\cap \aaa_{\mathcal{S}_2},\qquad
\W_3\coloneqq \W\setminus (\aaa_{\mathcal{S}_1}\cup \aaa_{\mathcal{S}_2}).
\]
Since voters in $\vvv^1$ approve only alternatives in $\aaa_{\mathcal{S}_1}$, we have
$\scoreTU{\CC}(\vvv^1,\W)=\scoreTU{\CC}(\vvv^1,\W_1)$, and analogously
$\scoreTU{\CC}(\vvv^2,\W)=\scoreTU{\CC}(\vvv^2,\W_2)$ and
$\scoreTU{\CC}(\vvv\setminus(\vvv^1\cup\vvv^2),\W)=\scoreTU{\CC}(\vvv\setminus(\vvv^1\cup\vvv^2),\W_3)$.

We claim that if $|\W_1|\le \nn_1$, then $\scoreTU{\CC}(\vvv^1,\W_1)\le |\vvv^1|-L$.
Indeed, the sets corresponding to~$\W_1$ cover at most $3|\W_1|\le 3\nn_1$ elements of $X_1$.
Since $|X_1| = 3\nn_1$ but $I_1$ has no exact cover, some element~$x_i\in X_1$ is not covered by these sets.
Then none of the $L$ voters $\votone_{i,1},\ldots,\votone_{i,L}$ is covered by $\W_1$, and hence at least $L$ voters in $\vvv^1$ are uncovered.
Therefore $\scoreTU{\CC}(\vvv^1,\W_1)\le |\vvv^1|-L$. The same argument gives:
if $|\W_2|\le \nn_1$, then $\scoreTU{\CC}(\vvv^2,\W_2)\le |\vvv^2|-L$.

Next, note that the dummy voter~$\dummyvoter$ is covered by $\W_3$ if and only if $\dummyalt\in \W_3$, but no other voter approves~$\dummyalt$.
Moreover, the $\nn_2$ variable voter groups are disjoint in the sense that for each $i\in[\nn_2]$ the voters
$\{\varvot_{i,1},\ldots,\varvot_{i,3L}\}$ approve only $\{\varalt_i,\negvaralt_{i}\}$, and different variables have disjoint approved sets.
Thus, in order to cover all variable voters in $\vvv^X$, $\W_3$ must contain at least one literal-alternative for each variable, i.e., at least $\nn_2$ alternatives from $\aaa_X$.
In particular, if $|\W_3|\le \nn_2$, then $\W_3$ cannot contain $\dummyalt$ and one literal for each variable at the same time, so at least one voter in $\vvv\setminus(\vvv^1\cup\vvv^2)$ is uncovered. Hence
\[
\text{if } |\W_3|\le \nn_2, \text{ then }
\scoreTU{\CC}\big(\vvv\setminus(\vvv^1\cup\vvv^2),\W_3\big)\le \big|\vvv\setminus(\vvv^1\cup\vvv^2)\big|-1.
\]
Now suppose, towards a contradiction, that $\scoreTU{\CC}(\vvv,\W)\ge |\vvv|-1$.
Then the deficit of uncovered voters over all of $\vvv$ is at most $1$.
In particular, we cannot have $|\W_1|\le \nn_1$ (otherwise $\vvv^1$ alone loses at least $L\ge 1$), so $|\W_1|\ge \nn_1+1$.
Similarly, $|\W_2|\ge \nn_1+1$.
Finally, we must have $|\W_3|\ge \nn_2$ (otherwise $\vvv^X$ loses at least $3L\ge 1$), and in fact $|\W_3|\ge \nn_2$ is necessary just to cover all variable voters.
Therefore,
\[
|\W|
=|\W_1|+|\W_2|+|\W_3|
\ge (\nn_1+1)+(\nn_1+1)+\nn_2
=2\nn_1+\nn_2+2
>\kk,
\]
a contradiction. Hence no size-$\kk$ committee can cover at least~$|\vvv|-1$ voters, so $\cval{\CC}(\vvv)\le |\vvv|-2<|\vvv|-1=\sum_{v\in\vvv}\alloc(v)$ and $\alloc$ is not feasible for~$\vvv$.

\smallskip
\myparait{Case 2: $I_2$ has a satisfying truth assignment and $I_1$ has an exact cover.}
We show $\score{\CC}(\vvv)=|\vvv|=\cval{\CC}(\vvv)>\sum_{v\in\vvv}\alloc(v)=|\vvv|-1$, hence $\alloc$ is infeasible for~$\vvv$ as well. 

Let $\mathcal{S}'$ be an exact cover of $I_1$. Define
\[
\W_1\coloneqq \{\setaltone_{j}\mid S_j\in\mathcal{S}'\},
\qquad
\W_2\coloneqq \{\setalttwo_{j}\mid S_j\in\mathcal{S}'\}.
\]
Then $|\W_1|=|\W_2|=\nn_1$ and $\W_1$ (resp.\ $\W_2$) covers all voters in $\vvv^1$ (resp.\ $\vvv^2$).
Let $\satassign$ be a satisfying assignment of $\varphi$, and define
\[
\W_3\coloneqq \{\varalt_{i}\mid \satassign(x_i) = \true\}\ \cup\ \{\negvaralt_i\mid \satassign(x_i) = \false\}.
\]
Let $\W\coloneqq\W_1\cup\W_2\cup\W_3\cup\{\dummyalt\}$, then $|\W|=\nn_1+\nn_1+\nn_2+1=\kk$ and $W$ covers all voters in $\vvv$, hence $\cval{\CC}(\vvv)=|\vvv|>\sum_{v\in\vvv}\alloc(v)$.
This finishes the contra-positive and thus the ``if'' part.

For the ``\textbf{only if}'' part, assume that $I$ is a yes-instance of \textsc{RX3C-UNSAT}, i.e., $I_1$ has an exact cover and $\varphi$ is unsatisfiable.
We first show feasibility of $\alloc$, i.e., $\sum_{v\in \vvv}\alloc(v)=\cval{\CC}(\vvv)$.

Let $\mathcal{S}'$ be an exact cover of $I_1$ and define $\W_1,\W_2$ as above (so $|\W_1|=|\W_2|=\nn_1$ and they cover $\vvv^1,\vvv^2$).
Let
\[
\W_3\coloneqq \{\varalt_1,\ldots,\varalt_{\nn_2}\}\cup\{\clausealt\}.
\]
Then $|\W_3|=\nn_2+1$, and $\W_3$ covers all variable voters (each group~$\{\varvot_{i,1},\ldots,\varvot_{i,3L}\}$ is covered by $\varalt_i$) and all clause voters (every clause voter approves $\clausealt$).
Set
\[
\W^\star\coloneqq \W_1\cup \W_2\cup \W_3.
\]
Then $|\W^\star|=\nn_1+\nn_1+(\nn_2+1)=\kk$ and $\W^\star$ covers all voters except $\dummyvoter$.
Hence $\cval{\CC}(\vvv)\ge |\vvv|-1=\sum_{v\in\vvv}\alloc(v)$.

For the matching upper bound, let $\W$ be any size-$\kk$ committee, and decompose it into $\W_1,\W_2,\W_3$ as in Case~1.
If $|\W_1|<\nn_1$ or $|\W_2|<\nn_1$, then $\vvv^1$ or $\vvv^2$ loses at least $L\ge 1$ voters, hence $\scoreTU{\CC}(\vvv,\W)\le |\vvv|-1$.
So suppose $|\W_1|\ge \nn_1$ and $|\W_2|\ge \nn_1$.
Then $|\W_3|\le \kk-2\nn_1=\nn_2+1$.
If $\dummyalt\in \W_3$, then $\W_3$ contains at most $\nn_2$ other alternatives from $\aaa_X\cup\{\clausealt\}$.
To cover all $\nn_2$ variable voter groups one needs at least $\nn_2$ literal-alternatives, leaving no room for $\clausealt$.
In that situation, the literals in $\W_3$ select at most one of $\{\varalt_i,\negvaralt_{i}\}$ for each $i$ (otherwise fewer variables are covered), and thus correspond to a truth assignment of $X_2$; since $\varphi$ is unsatisfiable, at least one clause voter is not covered by these literals. Hence $\W$ fails to cover at least one voter among $\vvv^\varphi$.
If instead $\dummyalt\notin \W_3$, then $\dummyvoter$ is uncovered.
In both cases, $\W$ leaves at least one voter uncovered and thus $\scoreTU{\CC}(\vvv,\W)\le |\vvv|-1$.
Together with the lower bound from $\W^\star$, we obtain $\cval{\CC}(\vvv)=|\vvv|-1=\sum_{v\in\vvv}\alloc(v)$, so $\alloc$ is feasible.

It remains to show that $\alloc$ is not blocked.

\begin{claim}\label{clm:CCTU_normalform}
  Assume $I$ is a yes-instance. If there exists a blocking voter coalition~$\vvv'\subseteq\vvv$, then there exists a blocking voter coalition~$\hat{\vvv}$ with a witness $\hat{\W}$ such that
\begin{compactenum}[(i)]
\item\label{itm:nf-dummy} $\dummyvoter\in \hat{\vvv}$,
\item\label{itm:nf-covered} $\scoreTU{\CC}(\hat{\vvv},\hat{\W})=|\hat{\vvv}|$ (i.e., every voter in $\hat{\vvv}$ is covered by $\hat{\W}$),
\item\label{itm:nf-incl} $\vvv^X\cup \vvv^1 \cup \vvv^2\subseteq \hat{\vvv}$.
\end{compactenum}
\end{claim}

\begin{claimproof}{clm:CCTU_normalform}
  Let $\vvv'$ be blocking with witness $\W'$.

  Statement~\eqref{itm:nf-dummy}: %
  If $\dummyvoter\notin\vvv'$, then $\sum_{v\in\vvv'}\alloc(v)=|\vvv'|$, while always
  $\scoreTU{\CC}(\vvv',\W')\le |\vvv'|$, contradicting that $\vvv'$ blocks $\alloc$.
  Hence $\dummyvoter\in\vvv'$.
  
  Statement~\eqref{itm:nf-covered}: By the first statement, since $\dummyvoter\in\vvv'$, we have $\sum_{v\in\vvv'}\alloc(v)=|\vvv'|-1$. Therefore,
  \[
    |\vvv'|-1\ =\ \sum_{v\in\vvv'}\alloc(v)\ <\ \scoreTU{\CC}(\vvv',\W')\ \le\ |\vvv'|.
  \]
  As the \CC-score is integral, it follows that $\scoreTU{\CC}(\vvv',\W')=|\vvv'|$, i.e., $\W'$ covers every voter in $\vvv'$.

  Statement~\eqref{itm:nf-incl}: We now construct $(\hat{\vvv},\hat{\W})$ from $(\vvv',\W')$ by successive operations that preserve blocking and maintain full coverage.

  For the admissibility bound we use the floor variant:
a witness $\W'$ for $\vvv'$ must satisfy
\[
|\W'|\ \le\ \left\lfloor \frac{\kk|\vvv'|}{|\vvv|}\right\rfloor.
\]
We will repeatedly use that $|\vvv|=(\mm+2)\kk-1$ and $3L=\mm+2$, hence
\begin{align}
\frac{|\vvv|}{\kk} = (\mm+2)-\frac{1}{\kk} < 3L
\label{eq:floor-quota} %
\end{align}

\smallskip
\myparait{Step 1: Close $\vvv'$ under copies (without changing the witness).}
We iteratively add missing copies of already present voters:
\begin{compactitem}[--]
\item If for some $i\in[\nn_2]$ and some $\ell_1,\ell_2\in[3L]$ we have $\varvot_{i,\ell_1}\in\vvv'$ but $\varvot_{i,\ell_2}\notin\vvv'$, then add $\varvot_{i,\ell_2}$ to $\vvv'$.
\item If for some $j\in[3\nn_1]$ and some $\ell_1,\ell_2\in[L]$ we have $\votone_{j,\ell_1}\in\vvv'$ but $\votone_{j,\ell_2}\notin\vvv'$, then add $\votone_{j,\ell_2}$ to $\vvv'$.
\item If for some $j\in[3\nn_1]$ and some $\ell_1,\ell_2\in[L]$ we have $\vottwo_{j,\ell_1}\in\vvv'$ but $\vottwo_{j,\ell_2}\notin\vvv'$, then add $\vottwo_{j,\ell_2}$ to $\vvv'$.
\end{compactitem}
Each such addition preserves blocking with the same witness $\W'$:
Since each voter in~$\vvv'$ is covered by~$\W'$ (Statement~\eqref{itm:nf-covered})
and the added voter has exactly the same approval set as a voter already in~$\vvv'$,
he is covered by $\W'$, and since $\alloc$ assigns $1$ to all non-dummy voters, both $\scoreTU{\CC}(\cdot,W')$ and $\sum \alloc(\cdot)$ increase by exactly $1$.
The proportional bound $|\W'|\le \share$ remains valid because the right-hand side increases when we add voters.
After finitely many steps, we obtain a blocking coalition (still denoted $\vvv'$) such that for every $i\in[\nn_2]$ either the entire voter group 
$\{\varvot_{i,1},\ldots,\varvot_{i,3L}\}$ is contained in $\vvv'$ or it is disjoint from $\vvv'$, and analogously for each element voter group in $\vvv^1$ and $\vvv^2$.

\myparait{Step 2: Add~$\vvv^X$.}
If $\vvv^X\subseteq\vvv'$, we are done. Otherwise let $U\subseteq[\nn_2]$ be the set of indices $i$ whose entire block is missing,
so $|\vvv^X\setminus\vvv'|=|U|\cdot 3L$. Define
\[
\hat{\vvv}\ \coloneqq\ \vvv'\cup \bigcup_{i\in U}\{\varvot_{i,1},\ldots,\varvot_{i,3L}\}.
\]
Let $\hat{\W}$ be obtained from $\W'$ by, for each $i\in U$ with $\W'\cap\{\varalt_i,\negvaralt_{i}\}=\emptyset$,
adding one of $\varalt_i,\negvaralt_{i}$.
Then every newly added voter is covered by construction, and every voter in $\vvv'$ remains covered, so by the second statement, 
\[
\scoreTU{\CC}(\hat{\vvv},\hat{\W})=|\hat{\vvv}|.
\]
Moreover, $|\hat{\W}|\le |\W'|+|U|$.

It remains to check admissibility for $(\hat{\vvv},\hat{\W})$.
Since $|\hat{\vvv}|=|\vvv'|+|U|\cdot 3L$, we have
\[
\frac{\kk|\hat{\vvv}|}{|\vvv|}
=\frac{\kk|\vvv'|}{|\vvv|} + |U|\cdot \frac{\kk\cdot 3L}{|\vvv|}.
\]
By \eqref{eq:floor-quota}, $\frac{\kk\cdot 3L}{|\vvv|}>1$, hence
\[
\left\lfloor \frac{\kk|\hat{\vvv}|}{|\vvv|}\right\rfloor
\ \ge\
\left\lfloor \frac{\kk|\vvv'|}{|\vvv|}\right\rfloor + |U|.
\]
Using $|\W'|\le \share$ and $|\hat{\W}|\le |\W'|+|U|$, we conclude
$|\hat{\W}|\le \lfloor \kk|\hat{\vvv}|/|\vvv|\rfloor$.
Thus $\hat{\vvv}=\vvv'\cup\vvv^X$ is blocking.

\myparait{Step 3: Add $\vvv^1$ and $\vvv^2$ via an exact-cover exchange.}
We show the step for $\vvv^1$; the argument for $\vvv^2$ is identical.
By Step~1, we may assume that for each $j\in[3\nn_1]$ either the whole voter group~$\{\votone_{j,1},\ldots,\votone_{j,L}\}$ is contained in $\vvv'$ or disjoint from $\vvv'$.

Let $\W'$ witness that $\vvv'$ is blocking. Set $g_1$ such that $|\vvv'\cap \vvv^1|=g_1\cdot L$ (number of present element-blocks),
and let $g_2\coloneqq |\W'\cap \aaa_{\mathcal{S}_1}|$.
By Statement~\eqref{itm:nf-covered}, every voter in $\vvv'\cap\vvv^1$ is covered by $\W'$, and only alternatives in $\aaa_{\mathcal{S}_1}$ can cover them.
Each alternative in $\aaa_{\mathcal{S}_1}$ corresponds to a 3-set and hence can cover voters from at most three element-voter groups.
Therefore $g_1\le 3g_2$.

Let $\W_x\subseteq \aaa_{\mathcal{S}_1}$ be the $\nn_1$ set1-alternatives corresponding to an exact cover of $I_1$.
Define
\[
\hat{\vvv}\ \coloneqq\ \vvv'\cup \vvv^1,
\qquad
\hat{\W}\ \coloneqq\ (\W'\setminus \aaa_{\mathcal{S}_1})\cup \W_x.
\]
Then $\hat{\W}$ covers all voters in $\vvv^1$ (exact cover) and does not affect coverage of voters outside $\vvv^1$,
since no voter outside $\vvv^1$ approves alternatives in $\aaa_{\mathcal{S}_1}$.
Hence $\scoreTU{\CC}(\hat{\vvv},\hat{\W})=|\hat{\vvv}|$.

For admissibility, note that $|\hat{\W}|=|\W'|-g_2+\nn_1$ and $|\hat{\vvv}|=|\vvv'|+(3\nn_1-g_1)L$.
Using $g_1\le 3g_2$ we obtain
\[
|\hat{\vvv}|
=|\vvv'|+(3\nn_1-g_1)L
\ \ge\
|\vvv'|+3(\nn_1-g_2)L.
\]
Now apply \eqref{eq:floor-quota} with $|U|=\nn_1-g_2$:
adding $3L$ voters increases the floor budget by at least $1$, hence adding $3(\nn_1-g_2)L$ voters increases it by at least $\nn_1-g_2$.
Formally,
\[
\left\lfloor \frac{\kk|\hat{\vvv}|}{|\vvv|}\right\rfloor
\ \ge\
\left\lfloor \frac{\kk|\vvv'|}{|\vvv|}\right\rfloor + (\nn_1-g_2).
\]
Using $|\W'|\le \share$ and $|\hat{\W}|=|\W'|+(\nn_1-g_2)$, we obtain
$|\hat{\W}|\le \lfloor \kk|\hat{\vvv}|/|\vvv|\rfloor$.
Thus $\vvv'\cup \vvv^1$ is blocking.

Analogously, we can show that $\vvv'\cup \vvv^2$ is blocking with an admissible committee.

\end{claimproof}
    
Now suppose, towards a contradiction, that a blocking coalition exists.
By \cref{clm:CCTU_normalform} there exists a blocking coalition $\hat{\vvv}$ containing $\dummyvoter$ and all voters in $\vvv^X\cup\vvv^1\cup\vvv^2$, with witness $\hat{\W}$ covering all voters in $\hat{\vvv}$.
Then $\hat{\W}$ must contain at least
$\nn_1$ alternatives from $\aaa_{\mathcal{S}_1}$ (to cover $\vvv^1$),
$\nn_1$ from $\aaa_{\mathcal{S}_2}$ (to cover $\vvv^2$),
$\nn_2$ from $\aaa_X$ (to cover $\vvv^X$),
and $\dummyalt$ (to cover $\dummyvoter$).
Thus $|\hat{\W}|\ge 2\nn_1+\nn_2+1=\kk$.
But also $|\hat{\W}|\le \kk\cdot |\hat{\vvv}|/|\vvv|\le \kk$, hence $|\hat{\W}|=\kk$, which implies $|\hat{\vvv}|=|\vvv|$ and thus $\hat{\vvv}=\vvv$.
Therefore $\vvv$ would be blocking, contradicting $\cval{\CC}(\vvv)=\sum_{v\in\vvv}\alloc(v)$ shown above. Hence no blocking coalition exists and $\alloc\in\TUcore{\CC}$. %
\end{proof}

Finally, the \NPhh{ness} for \PAV\ comes from the fact that computing a winning committee under \PAV\ is \NPhh~\cite{Aziz15}. 

\begin{restatable}[]{proposition}{propTUCVPAV}\label{prop:TU-CV-PAV}
  \MWGTUCverif{\PAV} is \NPhh.
\end{restatable}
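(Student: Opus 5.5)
We reduce from \textsc{RX3C}, reusing the construction from the proof of \cref{thm:TU-CE}. Given an \textsc{RX3C}-instance $I=(X,\mathcal{S})$ with $|X|=3\nn$, we create one alternative $\alt_j$ per set $S_j$ and one voter $v_i$ per element $x_i$, with $\app(v_i)=\{\alt_j\mid x_i\in S_j\}$, and we set $\kk\coloneqq\nn$. The input utility function is the all-ones vector, $\alloc(v)=1$ for all $v\in\vvv$. We then show that $I$ is a yes-instance if and only if $\alloc\in\TUcore{\PAV}$.

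\emph{Forward direction.} Assume $I$ has an exact cover. The ``only if'' part of the proof of \cref{thm:TU-CE} already establishes two facts. First, $\cval{\PAV}(\vvv)=3\nn=\sum_v\alloc(v)$, so $\alloc$ is feasible. Second, every coalition $\vvv'$ with an admissible committee $\W'$ satisfies $\score{\PAV}(\vvv',\W')\le 3|\W'|\le 3\lfloor|\vvv'|/3\rfloor\le|\vvv'|$. The second bound holds because each alternative is approved by exactly three voters, and adding an alternative raises each approver's \PAV-utility by at most $1$. Hence no coalition blocks, and $\alloc$ lies in the core.

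\emph{Backward direction.} Assume $I$ has no exact cover. The incidence-counting argument in the proof of \cref{thm:TU-CE} then shows $\cval{\PAV}(\vvv)\le 3\nn-\tfrac12<3\nn$. Consequently $\sum_v\alloc(v)\neq\cval{\PAV}(\vvv)$, so $\alloc$ is infeasible and therefore not in the core.

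The reduction is clearly polynomial, which gives \NP-hardness. We do not need to argue about blocking coalitions in the no-case, because infeasibility alone excludes $\alloc$ from the core. The only delicate step is the strict gap $\cval{\PAV}(\vvv)<3\nn$ when no exact cover exists. For this we simply invoke the marginal-gain bound of $\tfrac12$ from \cref{thm:TU-CE}, so no new technical obstacle arises.
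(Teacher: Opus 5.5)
Your proposal is correct and follows essentially the same route as the paper. It uses the same \textsc{RX3C} construction with the all-ones utility vector, and rules out blocking with the same bound $\score{\PAV}(\vvv',\W')\le 3|\W'|\le|\vvv'|$ (which covers the grand coalition and hence feasibility). For the other direction, the paper argues directly that a size-$\nn$ committee with \PAV-score $3\nn$ must cover every voter exactly once and thus yields an exact cover; this is just the contrapositive of your strict gap $\cval{\PAV}(\vvv)<3\nn$.
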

\begin{proof}
    We reduce from \textsc{RX3C}.
    Let $I=(X,\mathcal{S})$ be an \textsc{RX3C} instance with $X=\{x_1,\ldots, x_{3\nn}\}$; the definition of \textsc{RX3C} can be found in the proof of \cref{thm:TU-CE}.
    We create an instance~$I'=(\ppp,\kk,\alloc)$ of \MWGTUCverif{\PAV} as follows.
    
    \mypara{Profile construction.} Let $\kk\coloneq\nn$.
    \begin{compactitem}[--]
      \item For each set $S_j\in\mathcal{S}$, create an alternative $\alt_j$.
      \item For each element $x_i\in X$, create a voter $v_i$. Their approvals are $\app(v_i)\coloneqq\{\alt_j\mid x_i\in S_j\}$.
	\end{compactitem}
    Thus the total number of voters is $|\vvv|=3\nn$. 
  
    \mypara{The utility function~$\alloc$.}
    $\alloc(v_i)=1$ for all $v_i\in\vvv$.
    
    \mypara{Correctness.}
    We show that $I$ is a yes-instance of \textsc{RX3C} if and only if $\alloc$ is in the \TUcore{\PAV}.

    For the ``if'' direction, suppose $\alloc\in \TUcore{\PAV}$.
    By definition, there must exist a winning committee that realizes~$\alloc$.
    Let $\W$ be such a committee, i.e., $|\W|=\kk = \nn$ and $\scoreTU{\PAV}(\vvv,\W)=\sum_{i=1}^{3\nn}\alloc(v_i)=3\nn=|\vvv|$.
    As each alternative~$\alt_j\in\aaa$ is approved by exactly three voters, it holds that each voter must approve exactly one alternative in $\W$; otherwise the score would be less than $|\vvv|$ due to the definition of the \PAV-score.
    Then, it must also hold that the sets corresponding to alternatives in $\W$ form an exact cover:
    Each element is covered by them exactly once. This concludes the ``if'' direction. 

    For the ``only if'' direction, suppose that $I$ is a yes-instance and let $\mathcal{S}'$ be an exact cover for~$I$.
    We show that $\alloc$ is realized by a size-$\kk$ committee and no coalition (including the grand coalition) should be able to block~$\alloc$.  

    For the realizability, define $\W\coloneqq\{\alt_j\mid S_j\in\mathcal{S}'\}$.
    Then, $|\W|=\nn$.
    Moreover, $\scoreTU{\PAV}(\vvv,\W)=|\vvv|$ since $\mathcal{S}'$ covers each element exactly once.

    It remains to show that no coalition is blocking. 
    Suppose for the sake of contradiction, that $\vvv'\subseteq \vvv$ is a coalition that is \TUblocking{\PAV}~$\alloc$ and $\W'$ is a witness.
    By the admissibility of~$\W'$, we infer
    \begin{align}
      |\W'| \le \share =\lfloor \frac{|\vvv'|}{3} \rfloor \le \frac{|\vvv'|}{3}.\label{eq:PV-TU-CV-admissible}
    \end{align}
    By the score, we infer $\scoreTU{\PAV}(\vvv',\W') > \sum_{v_i\in \vvv'}\alloc(v_i) = |\vvv'|$.
    Since each alternative is approved by exactly three voters, it follows that
    $3|\W'| \ge \scoreTU{\PAV}(\vvv',\W') > |\vvv'|$, a contradiction to~\eqref{eq:PV-TU-CV-admissible}.
    Therefore $\alloc\in \TUcore{\PAV}$. %
    This concludes the proof.
\end{proof}

\subsection{Complexity for the \NTU\ model}\label{sub:complexity-NTU}

We now consider the \NTU\ model, where coalitions cannot redistribute utility.
Here, blocking requires a \emph{component-wise} strict improvement for every coalition member, which makes core membership verification inherently more combinatorial: a witness must specify an admissible committee whose induced per-voter scores dominate the proposed utilities on the entire coalition.

By \cref{thm:CCNTUCE}, we immediately obtain the following polynomial result.
\begin{theorem}\label{thm:NTU-CE}
  \MWGNTUCexist{\CC} is in \PP.
\end{theorem}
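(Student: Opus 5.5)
The statement follows directly from \cref{thm:CCNTUCE}, so the plan is to spell out the trivial decision procedure that theorem yields. By \cref{thm:CCNTUCE}, for every \mw-instance $(\ppp,\kk)$ the $\NTUcore{\CC}$ is non-empty. The algorithm for \MWGNTUCexist{\CC} therefore reads the input and answers ``yes'' without further work. Its correctness is exactly the universal non-emptiness guarantee, and it clearly runs in polynomial (in fact linear) time.

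To make the argument constructive, I would also point out that \cref{alg:NTU-core-CC} produces an explicit certificate in polynomial time. It outputs a pair $(\alloc,\W)$ with $|\W|=\kk$ and $\alloc(v_i)=\score{\CC}(v_i,\W)$, so $\alloc$ is feasible for the grand coalition. The proof of \cref{thm:CCNTUCE} shows that $\W$ satisfies JR, and \cref{prop:CCJR} then places $\alloc$ in the core.

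One degenerate case needs checking: the algorithm requires $\kk\le|\aaa|$ so that a size-$\kk$ committee exists at all. This is the input convention $\kk\in[|\aaa|]$ used throughout. If the input violated it, there would be no feasible utility vector and the core would be empty, so the procedure would first check $\kk\le|\aaa|$ and answer accordingly. That check is the only step I see that needs any care; the rest is an immediate corollary.
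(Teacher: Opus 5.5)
Your proposal is correct and takes the paper's route: the paper also obtains this directly from \cref{thm:CCNTUCE}, since a core that is always non-empty makes the constant ``yes'' procedure correct. One small caveat concerns your degenerate-case aside: under the formal \cref{defi:games}, grand-coalition feasibility only requires a committee of size at most $\kk$, so for $\kk>|\aaa|$ the committee $\aaa$ itself induces a feasible utility vector that no coalition can block, and the answer would still be ``yes''; your ``empty core'' conclusion holds only under the informal exactly-$\kk$ reading, and the issue does not arise under the paper's convention $\kk\in[|\aaa|]$.
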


For the core membership problem, we show a somewhat interesting contrast to the \TU\ model. It becomes \DPcc\ for \AV\ and hence \SAV\ and \PAV\ (see \cref{obs:AVPAVequiv}), but is \NPcc\ for \CC.
We prove the \DPhh{ness} by a reduction from the following \DPhh\ problem, denoted \XTCDP.\footnote{%
  \XTCDP\ is \DPhh: Since the underlying \textsc{RX3C} (short for \textsc{Restricted Exact Cover by 3-Sets}) decision problem (asking whether $I_1$ has an exact cover) is \NP-complete, there exists a polynomial-time
  reduction~$r$ from \textsc{SAT} to \textsc{RX3C}.
  Hence $(\varphi_1,\varphi_2)\mapsto (r(\varphi_1),r(\varphi_2))$
  yields a polynomial-time reduction from the \DPcc\ problem \textsc{SAT-UNSAT}\cite{Papadimitriou88facets} to \XTCDP.}

\decprob{\XTCDP}
{Two \textsc{RX3C} instances $I_1=(X_1,\mathcal{S}_1)$ and $I_2=(X_2,\mathcal{S}_2)$ such that for a common $\nn\in\mathds{N}$ and each $z\in[2]$:
  \begin{inparaenum}
    \item $|X_z|=3\nn$,
    \item $\mathcal{S}_z=\{S^z_1,\ldots,S^z_{3\nn}\}$ with $S^z_j\subseteq X_z$ and $|S^z_j|=3$ for all $j\in[3\nn]$, and
    \item each element of $X_z$ is contained in exactly three sets of $\mathcal{S}_z$.
  \end{inparaenum}}
{Is it true that $I_1$ has an \emph{exact cover} (i.e., a subcollection $\mathcal{S}'_1\subseteq \mathcal{S}_1$ of cardinality~$\nn$ such that each $x\in X_1$
belongs to exactly one set of $\mathcal{S}'_1$) while $I_2$ does not?}

Our reduction extends a hardness reduction by Brill et al.~\cite{BrillGPSW24},
which shows that verifying whether a committee is in the core is \coNPhh.
\begin{restatable}{theorem}{thmNTUCVAVPAV}\label{thm:NTU-CV-AV-PAV}
  \MWGNTUCverif{\AV} and hence \MWGNTUCverif{\SAV} and \MWGNTUCverif{\PAV} are \DPcc.
 \end{restatable}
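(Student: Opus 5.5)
Membership in \DP\ is already given by \cref{thm:complexity-upperbounds}\eqref{upper:NTU-verif}, and by \cref{obs:AVPAVequiv} a polynomial-time reduction to \MWGNTUCverif{\AV} transfers directly to \SAV\ and \PAV: we map $\alloc$ to $g_{\vvv}(\alloc)$ resp.\ $f_{\vvv}(\alloc)$, and feasibility and blocking are preserved. So the work is \DP-hardness for \AV. We reduce from \XTCDP. Given $(I_1,I_2)$, we build a single profile consisting of two essentially independent gadgets. The first gadget is a \emph{feasibility gadget} built from $I_1$, so that the target utility vector $\alloc$ is realizable by a size-$\kk$ committee iff $I_1$ has an exact cover. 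The second is a \emph{blocking gadget} built from $I_2$, following Brill et al.~\cite{BrillGPSW24}, so that some coalition blocks iff $I_2$ has an exact cover. Then $\alloc\in\NTUcore{\AV}$ iff ($I_1$ has a cover) and ($I_2$ has none).

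For the feasibility gadget, we take one alternative per set of $\mathcal{S}_1$ and, for each element, a group of voters approving the three sets containing it. We prescribe $\alloc=1$ on all these voters and reserve exactly $\nn$ seats for this part. Realizing utility exactly $1$ for every element voter with $\nn$ sets (each approved by exactly three element groups) forces every element to be covered exactly once, i.e., an exact cover. The other direction is immediate. The remaining seats are assigned to the second gadget's committee part, which is fixed independently of $I_2$: we prescribe the utilities that Brill et al.'s committee $\W_2$ induces on their voters, and pad with private dummy alternatives/voters so that the seat split is forced. Here I would use disjoint alternative sets and heavy voter groups, so that any committee realizing $\alloc$ must use exactly $\nn$ seats on the $I_1$ part and reproduce $\W_2$'s utility pattern on the rest. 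For soundness, note that if $I_1$ has no exact cover, $\alloc$ is infeasible and hence not in the core, regardless of $I_2$.

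For the blocking gadget we must show two things. First, if $I_2$ has an exact cover, the coalition from Brill et al.\ (together with suitable voters from the first gadget or padding voters that can strictly improve or are unnecessary) blocks with an admissible committee. Second, if $I_2$ has no cover, no coalition blocks $\alloc$ at all. The second part is the main obstacle. Voters of the feasibility gadget have utility $1$ and approve alternatives that now compete in deviations, and the total voter count changes the seat caps $\lfloor |\vvv'|\kk/|\vvv|\rfloor$ in Brill et al.'s construction. My plan is to scale so that $|\vvv|/\kk$ equals the per-seat quota in their instance, by adding empty voters with utility $0$ and tuning group sizes. I would also argue that a deviating coalition containing first-gadget voters must give each of them utility $\ge 2$. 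Since each first-gadget alternative is approved by only $3$ copies-groups, a counting argument like in \cref{prop:TU-CV-PAV} shows this is too expensive under the proportional cap: each seat improves at most three element groups while each group needs two seats. Hence a blocking coalition restricted to the second gadget is still blocking, and Brill et al.'s argument yields an exact cover of $I_2$. Verifying these quota calculations carefully, especially the floor effects, is where I expect most of the effort.
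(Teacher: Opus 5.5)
Your outline is the paper's proof. It is a reduction from \XTCDP\ in which element1-voters with utility $1$ make $\alloc$ feasible iff $I_1$ has an exact cover. The Brill-type gadget for $I_2$ (element2-voters with utility $0$, plus a voter $d_2$ with utility $\nn-1$ approving the set2-alternatives and the $\nn-1$ private alternatives of the committee) is blocked by $\vvv_{X_2}\cup\{d_2\}$ iff $I_2$ has an exact cover. First-gadget voters are stripped from any blocking coalition via $2|T|\le 3|\W_1|$ together with $\kk/|\vvv|<1/3$, which is Case~B of \cref{clm:AVNTUverifXone}. For quota matching the paper adds a mirror voter $d_1$ (utility $\nn$, approving all set1-alternatives). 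This gives $\kk=2\nn$ and $|\vvv|=6\nn+2$, so each side has $3\nn+1$ voters and $\nn$ seats. A single empty voter would work equally well and would avoid the extra Case~A that $d_1$ requires.

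The one ingredient of your plan that would break the reduction is heavy copy-groups in the $I_1$ gadget. The stripping count is per voter, not per group. With $L$ voters per element you only get $|\W_1|\ge 2|T|/(3L)$, which exceeds the seats $T$ contributes to the cap (about $|T|\kk/|\vvv|$) only when $L\kk/|\vvv|<2/3$. The $I_2$ coalition needs $\kk/|\vvv|\ge \nn/(3\nn+1)$ to afford its $\nn$ seats, so together these force $L\le 2$, and no padding helps. For $L\ge 3$ the stripping step is actually false. Take any exact cover $\mathcal{S}'$ of $I_1$, and form the coalition of all element1-voters, $d_2$, and the three element2-voters of some $S^2_j$. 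Its cap is at least $\lfloor (9\nn+4)\nn/(3\nn+1)\rfloor=3\nn$. Every member strictly improves under the $2\nn$ set1-alternatives outside $\mathcal{S}'$ together with $\aaa_P\cup\{\setalttwo_j\}$, independently of $I_2$.

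Heavy groups are also unnecessary. Utility exactly $1$ for every element1-voter already forces the chosen set1-alternatives to be pairwise disjoint and covering, i.e., an exact cover of size $\nn$, so no seat-split forcing is needed. With one voter per element, your argument goes through exactly as in the paper.
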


 \begin{proof}
   \DP-membership follows from \cref{thm:complexity-upperbounds}\eqref{upper:NTU-verif}.
By \cref{obs:AVPAVequiv}, it suffices to show \DP-hardness for \MWGNTUCverif{\AV}.
As mentioned, we reduce from \XTCDP.

Let $I=(I_1,I_2)$ be an instance of \XTCDP, where for~$z\in [2]$,  $I_z=(X_z,\mathcal{S}_z)$, 
$|X_z|=3\nn$, $\mathcal{S}_z=\{S^z_1,\ldots,S^z_{3\nn}\}$, each $|S^z_j|=3$, and each element of $X_z$ occurs in exactly three sets.
We construct an instance $I'=(\ppp,\kk,\alloc)$ of \MWGNTUCverif{\AV}.

\mypara{Profile construction.} Set $\kk\coloneqq 2\nn$. 
\begin{compactenum}[(i)]
  \item For each $S^1_j\in\mathcal{S}_1$ create a \myemph{set1-alternative}~$\setaltone_j$ and let
$\aaa_{\mathcal{S}_1}\coloneqq\{\setaltone_1,\ldots,\setaltone_{3\nn}\}$.
For each $S^2_j\in\mathcal{S}_2$ create a \myemph{set2-alternative}~$\setalttwo_j$ and let
$\aaa_{\mathcal{S}_2}\coloneqq\{\setalttwo_1,\ldots,\setalttwo_{3\nn}\}$.
Create private alternatives $\aaa_P\coloneqq\{p_1,\ldots,p_{\nn-1}\}$ and one dummy alternative $\alt_d$.
Thus $\aaa=\aaa_{\mathcal{S}_1}\cup\aaa_{\mathcal{S}_2}\cup\aaa_P\cup\{\alt_d\}$.

\item For each $x^1_i\in X_1$ create an \myemph{element1-voter}~$\votone_i$ with approvals
$\app(\votone_i)\coloneqq\{\setaltone_j\mid x^1_i\in S^1_j\}$, and let $\vvv_{X_1}\coloneqq\{\votone_i\mid x^1_i\in X_1\}$.
For each $x^2_i\in X_2$ create an \myemph{element2-voter}~$\vottwo_i$ with approvals
$\app(\vottwo_i)\coloneqq\{\setalttwo_j\mid x^2_i\in S^2_j\}$, and let $\vvv_{X_2}\coloneqq\{\vottwo_i\mid x^2_i\in X_2\}$.
Add two dummy voters: $d_1$ with $\app(d_1)=\aaa_{\mathcal{S}_1}$ and $d_2$ with $\app(d_2)=\aaa_{\mathcal{S}_2}\cup\aaa_P$.
Hence $|\vvv|=|\vvv_{X_1}|+|\vvv_{X_2}|+2=6\nn+2$.
\end{compactenum}

\mypara{The utility function $\alloc$.}
Set $\alloc(\votone_i)=1$ for all $\votone_i\in\vvv_{X_1}$, $\alloc(\vottwo_i)=0$ for all $\vottwo_i\in\vvv_{X_2}$,
and $\alloc(d_1)=\nn$, $\alloc(d_2)=\nn-1$.

\mypara{Correctness.}
We show that $I$ is a yes-instance of \XTCDP\ if and only if $\alloc\in\NTUcore{\AV}$.

For the ``if'' part, assume $\alloc\in\NTUcore{\AV}$.
Then, $\alloc$ is feasible for~$\vvv$ (i.e., $\alloc\in\NTUutil{\AV}(\vvv)$),
so there exists a size-$\kk$ committee $\W$ with 
$\scoreNTU{\AV}(v,\W)=\alloc(v)$ for all $v\in\vvv$.
From $\alloc(d_1)=\nn$ and $\app(d_1)=\aaa_{\mathcal{S}_1}$ it follows that $|\W\cap\aaa_{\mathcal{S}_1}|=\nn$.
Moreover, for each element1-voter~$\votone_i$ we have $\alloc(\votone_i)=1$ and $\app(\votone_i)\subseteq\aaa_{\mathcal{S}_1}$, hence
$|\W\cap\app(\votone_i)|=1$. Therefore the $\nn$ chosen set1-alternatives cover every element of $X_1$ exactly once, so $I_1$ has an exact cover.

It remains to show that $I_2$ has no exact cover.
Suppose, for contradiction, that $I_2$ admits an exact cover
$\mathcal{S}'\subseteq\mathcal{S}_2$ of size $\nn$ and set $\W'\coloneqq\{\setalttwo_j\mid S^2_j\in\mathcal{S}'\}$, and $\vvv'=\vvv_{X_2}\cup \{d_2\}$.
We claim that $\vvv'$ is blocking with witness~$\W'$.
Clearly, $|\W'|=\nn$. Each element2-voter~$\vottwo_i$ satisfies $\scoreNTU{\AV}(\vottwo_i,\W')=1>\alloc(\vottwo_i)=0$,
and also $\scoreNTU{\AV}(d_2,\W')=\nn>\alloc(d_2)=\nn-1$.
Since $|\W'|=\nn=\kk\cdot (3\nn+1)/(6\nn+2)= \share$, it is admissible for~$\vvv_{X_2}\cup\{d_2\}$.
So, $\vvv'$ is \NTUblocking{\AV} against~$\alloc$,
contradicting $\alloc\in\NTUcore{\AV}$. Hence $I_2$ is a no-instance, and $I$ is a yes-instance of \XTCDP.

For the ``only if'' part,  assume $I$ is a yes-instance%
, i.e., $I_1$ has an exact cover and $I_2$ has none.
We show $\alloc\in\NTUcore{\AV}$ by establishing feasibility and ruling out blocking coalitions.

Let $\mathcal{S}'\subseteq\mathcal{S}_1$ be an exact cover of size~$\nn$ and define
$\W_S\coloneqq\{\setaltone_j\mid S^1_j\in\mathcal{S}'\}$ and $\W\coloneqq \W_S\cup\aaa_P\cup\{\alt_d\}$.
Then $|\W|=\nn+(\nn-1)+1=2\nn=\kk$.
By construction, $\scoreNTU{\AV}(\votone_i,\W)=1$, $\scoreNTU{\AV}(d_1,\W)=\nn$, $\scoreNTU{\AV}(\vottwo_i,\W)=0$,
and $\scoreNTU{\AV}(d_2,\W)=\nn-1$, hence $\scoreNTU{\AV}(v,\W)=\alloc(v)$ for all $v\in\vvv$.
Thus $\alloc\in\NTUutil{\AV}(\vvv)$.

Now we show that no coalition is blocking.
Suppose for contradiction that some coalition $\vvv'\subseteq\vvv$ is \NTUblocking{\AV} against~$\alloc$ with witness committee~$\W'$.

\begin{restatable}[]{claim}{clmAVNTUverifXone}\label{clm:AVNTUverifXone}
  If $\vvv'$ \NTUblocks{\AV} $\alloc$, then $\vvv'\setminus (\vvv_{X_1}\cup\{d_1\})$ \NTUblocks{\AV} $\alloc$ as well.
\end{restatable}
\begin{claimproof}{clm:AVNTUverifXone}
Let $\vvv'$ be a blocking coalition with witness committee $\W'$.
Write $\W_1\coloneqq \W'\cap \aaa_{\mathcal{S}_1}$ and $\hat{\W}\coloneqq \W'\setminus \aaa_{\mathcal{S}_1}$, and define
$\hat{\vvv}\coloneqq \vvv'\setminus(\vvv_{X_1}\cup\{d_1\})$.
Every voter in $\hat{\vvv}$ approves no set1-alternatives, hence $\scoreNTU{\AV}(v,\hat{\W})=\scoreNTU{\AV}(v,\W')$ for all $v\in\hat{\vvv}$,
so strict improvement over $\alloc$ is preserved for $\hat{\vvv}$.

It remains to show that $\hat{\vvv}\neq\emptyset$ and that $\hat{\W}$ is admissible for $\hat{\vvv}$.
We distinguish two cases.

\mypara{Case A: $d_1\in\vvv'$.}
Blocking requires $\scoreNTU{\AV}(d_1,\W') > \nn = \alloc(d_1)$, hence
$|\W_1|=|\W'\cap\aaa_{\mathcal{S}_1}|\ge \nn+1$.
Since $\W'$ is admissible for $\vvv'$, we have $|\W'|\le \share$, implying
$|\vvv'|/|\W'|\ge |\vvv|/\kk=(3\nn+1)/\nn$.
Together with $|\W'|\ge |\W_1|\ge \nn+1$, this yields $|\vvv'|>(3\nn+1)$, and thus
$\hat{\vvv}=\vvv'\setminus(\vvv_{X_1}\cup\{d_1\})\neq\emptyset$.
To see that $\hat{\W}$ is admissible for $\hat{\vvv}$, write $q\coloneqq \kk/|\vvv|=\nn/(3\nn+1)$ and note that
$|\hat{\W}|=|\W'|-|\W_1|$ and $|\hat{\vvv}|\ge |\vvv'|- (3\nn+1)$.
Using $|\W'|\le q|\vvv'|$ (since $\W'$ is admissible for~$\vvv'$) and $|\W_1|\ge \nn+1\ge q(3\nn+1)+1$, we obtain
$|\hat{\W}|\le q|\vvv'|- q(3\nn+1) -  1 \le q|\hat{\vvv}| - 1 \le \lfloor \frac{|\hat{\vvv}| \kk}{|\vvv|} \rfloor$,
so $\hat{\W}$ is admissible for~$\hat{\vvv}$.

\mypara{Case B: $d_1\notin\vvv'$.}
Let $T\coloneqq \vvv'\cap \vvv_{X_1}$.
Assume $T\neq\emptyset$ as otherwise $\hat{\vvv}=\vvv'$ and $\hat{\W}=\W'$, so the claim is immediate.

Since $\alloc(\votone_i)=1$ for all $\votone_i\in\vvv_{X_1}$ and $\vvv'$ \NTUblocks{\AV} $\alloc$, we have
$\scoreNTU{\AV}(\votone_i,\W')>\alloc(\votone_i)=1$ for every $\votone_i\in T$, and hence
$\scoreNTU{\AV}(\votone_i,\W')\ge 2$.
As $\votone_i$ approves only set1-alternatives, this implies $|\W_1\cap \app(\votone_i)|\ge 2$ for all $\votone_i\in T$.
Double counting approval incidences between $T$ and $\W_1$ yields
$2|T|\le \sum_{\votone_i\in T}|\W_1\cap \app(\votone_i)|\le 3|\W_1|$,
because each set1-alternative is approved by exactly three voters in $\vvv_{X_1}$.
Thus,
\begin{align}
  |\W_1|\ge \frac{2}{3}|T|.\label{eq:W1-T}
\end{align}
Similarly to Case A, we show that $\hat{\vvv}$ is non-empty and $\hat{\W}$ is admissible for~$\hat{\vvv}$.
Suppose towards a contradiction that $\hat{\vvv}=\emptyset$. Then $\vvv'=T\subseteq \vvv_{X_1}$. 
Hence,
\begin{align}
  |\W'|\ge |\W_1| \stackrel{\eqref{eq:W1-T}}{\ge}  \frac{2}{3}|T| = \frac{2}{3}|\vvv'|.\label{eq:W'-V'}
\end{align}
On the other hand, admissibility of $\W'$ for $\vvv'$ gives
$|\W'|\le \share = \lfloor \nn/(3\nn+1)\,|\vvv'| \rfloor < \tfrac{1}{3}|\vvv'|$,
a contradiction to \eqref{eq:W'-V'}. Therefore $\hat{\vvv}\neq\emptyset$.

Now, we consider admissibility of~$\hat{\W}$. 
Let $q\coloneqq \kk/|\vvv|=\nn/(3\nn+1)$.
From admissibility of $\W'$ for~$\vvv'$ we have $|\W'|\le q|\vvv'|$.
Moreover, $|\hat{\W}|=|\W'|-|\W_1|$ and $|\hat{\vvv}|=|\vvv'|-|T|$.
Using \eqref{eq:W1-T}, $q<\tfrac{1}{3}$, and $|T|\ge 1$, we obtain $|\W_1| > q|T|$ and thus
$|\hat{\W}| < q|\vvv'|-q|T|=q|\hat{\vvv}| = \frac{|\hat{\vvv}| \kk}{|\vvv|}$.
Since $|\hat{\W}|$ is integral, this implies that $|\hat{\W}| \le  \lfloor \frac{|\hat{\vvv}| \kk}{|\vvv|} \rfloor$, so $\hat{\W}$ is admissible for $\hat{\vvv}$.

In all cases, $\hat{\vvv}$ is nonempty and \NTUblocks{\AV} $\alloc$ witnessed by $\hat{\W}$, and
$\hat{\vvv}\cap(\vvv_{X_1}\cup\{d_1\})=\emptyset$ by construction.%
\end{claimproof}

\begin{restatable}[]{claim}{clmAVNTUverifdummytwo}\label{clm:AVNTUverifdummytwo}
If $\vvv'$ \NTUblocks{\AV} $\alloc$ and $\vvv'\cap(\vvv_{X_1}\cup\{d_1\})=\emptyset$, then $d_2\in\vvv'$.
\end{restatable}
\begin{claimproof}{clm:AVNTUverifdummytwo}
Assume $\vvv'$ \NTUblocks{\AV} $\alloc$ and $\vvv'\cap(\vvv_{X_1}\cup\{d_1\})=\emptyset$.
Suppose for contradiction that $d_2\notin\vvv'$. Then $\vvv'\subseteq\vvv_{X_2}$ and, since $\alloc(\vottwo_i)=0$,
blocking implies $\scoreNTU{\AV}(\vottwo_i,\W')\ge 1$ for all $\vottwo_i\in\vvv'$.
Each set2-alternative is approved by exactly three element2-voters, so $|\W'|\ge |\vvv'|/3$.
But admissibility implies $|\W'|\le \share=\lfloor\frac{\nn}{3\nn+1}|\vvv'|\rfloor<\frac{|\vvv'|}{3}$,
a contradiction. Hence $d_2\in\vvv'$.
\end{claimproof}

By \cref{clm:AVNTUverifXone} we may assume $\vvv'\cap(\vvv_{X_1}\cup\{d_1\})=\emptyset$, hence
$\vvv'\subseteq \vvv_{X_2}\cup\{d_2\}$; note that this implies that $|\vvv'|\le 3\nn+1$.
By \cref{clm:AVNTUverifdummytwo}, we have $d_2\in\vvv'$.

Since $\alloc(d_2)=\nn-1$, blocking implies $\scoreNTU{\AV}(d_2,\W')\ge \nn$.
As $d_2$ approves exactly $\aaa_{\mathcal{S}_2}\cup\aaa_P$, this forces $|\W'|\ge \nn$.
Moreover, admissibility gives $|\W'|\le \share\le \kk\cdot|\vvv'|/|\vvv|\le \kk\cdot(3\nn+1)/(6\nn+2)=\nn$,
so $|\W'|=\nn$ and thus $|\vvv'|=3\nn+1$, i.e., $\vvv'=\vvv_{X_2}\cup\{d_2\}$.

Now, for each $\vottwo_i\in\vvv_{X_2}$ blocking requires $\scoreNTU{\AV}(\vottwo_i,\W')\ge 1$.
Each alternative in $\aaa_{\mathcal{S}_2}$ corresponds to a 3-set, hence is approved by exactly three element2-voters.
Since $|\W'| =\nn$, $\W'$ can ``cover'' at most $3\nn$ element2-voters, and to improve all $3\nn$ element2-voters
it must cover all of them. This implies that the $\nn$ selected 3-sets cover $X_2$ and, since their total size is $3\nn$,
they are pairwise disjoint; hence they form an exact cover for $I_2$, contradicting that $I_2$ is a no-instance. %
Thus no blocking coalition exists and $\alloc\in\NTUcore{\AV}$.

This establishes \DP-hardness for \MWGNTUCverif{\AV}. The statement for \SAV\ and \PAV\ follows from \cref{obs:AVPAVequiv}.
\end{proof}

Finally, we show an arguably interesting phenomenon: verifying \CC-core membership is \NPcc.
The \NPhh{ness} follows from the fact that computing a \CC-winning committee is known to be \NPhh. 
Membership in \NP\ follows because a committee~$\W$ witnessing feasibility of $\alloc$ serves as a polynomial-size certificate; given $\W$, core membership can be checked in polynomial time by verifying JR; see \cref{prop:CCJR}.
\begin{restatable}[]{theorem}{thmNTUCVCC}\label{thm:NTU-CV-CC}
  \MWGNTUCverif{\CC} is \NPcc.
\end{restatable}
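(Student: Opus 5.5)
The argument has two parts: membership in \NP, and \NP-hardness via a reduction from \textsc{RX3C}.

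\emph{Membership in \NP.} The certificate is a size-$\kk$ committee $\W$ with $\alloc(v_i)=\scoreNTU{\CC}(v_i,\W)$ for all $v_i\in\vvv$. This equality is checked in polynomial time. We may reject immediately if $\alloc$ is not $\{0,1\}$-valued, since then no committee can realize it. Given such a $\W$, \cref{prop:CCJR} says that $\alloc\in\NTUcore{\CC}$ holds if and only if $\W$ satisfies JR. JR can be checked in polynomial time: for each alternative $\alt\in\aaa$, count the voters $v_i$ with $\alt\in\app_i$ and $\app_i\cap\W=\emptyset$. JR fails exactly when some alternative reaches a count of at least $\lceil |\vvv|/\kk\rceil$, because any cohesive coalition violating JR shares some common alternative $\alt$ and consists only of such voters. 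Hence the verifier runs in polynomial time, and the problem is in \NP.

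\emph{Hardness.} I reduce from \textsc{RX3C}, reusing the construction from \cref{thm:TU-CE} and \cref{prop:TU-CV-PAV}. There is one voter per element and one alternative per 3-set, with $\kk=\nn$, and I set $\alloc\equiv 1$.

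If there is an exact cover $\mathcal{S}'$, the committee $\W=\{\alt_j\mid S_j\in\mathcal{S}'\}$ covers every voter, so it realizes $\alloc$. No coalition can \NTU-block, because every voter already has the maximum possible \CC-utility $1$ and cannot strictly improve.

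Conversely, if $\alloc$ lies in the core, then it is feasible. So some size-$\nn$ committee covers all $3\nn$ voters. Each alternative covers exactly three voters, so the $\nn$ chosen sets must be pairwise disjoint and form an exact cover.

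I do not expect a real obstacle here, since the reduction is correct essentially by construction. The only step needing care is the JR check in the membership part: it must compare against the rounded threshold $\lceil|\vvv|/\kk\rceil$, and it must consider only voters left uncovered by $\W$.
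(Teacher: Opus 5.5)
Your proposal is correct and follows essentially the paper's proof: \NP-membership via a committee witnessing feasibility plus the JR check from \cref{prop:CCJR}, and hardness via the covering construction with $\alloc\equiv 1$. The only difference is the source problem: the paper reduces from \textsc{Set Cover} with $\kk=h$, whereas you reduce from its special case \textsc{RX3C} with $\kk=\nn$, where the counting argument (each alternative covers exactly three voters) forces the cover to be exact; either choice works.
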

\begin{proof}
 We show \NPhh{ness} by a reduction from the \NPcc\ problem \textsc{Set Cover}.

 \decprob{Set Cover}
 {A universe $U=[\nn]$, a family $\mathcal{S}=\{S_1,\ldots,S_{\mm}\}$ with $S_j\subseteq U$ for all $j\in[\mm]$, and an integer $h$.}
 {Is there a subfamily of size $h$ whose union is $U$?}

  Given an instance $I=([\nn],\mathcal{S},h)$ we construct an instance
  $I'=(\ppp,\kk,\alloc)$ of \MWGNTUCverif{\CC} as follows.
  Let $\vvv=\{v_1,\ldots,v_{\nn}\}$ and $\aaa=\{\alt_1,\ldots,\alt_{\mm}\}$, where $\alt_j$ corresponds to~$S_j$.
  For each voter $v_i\in\vvv$, set $\app(v_i) = \{\alt_j \mid i \in S_j\}$.
  Finally, set $\kk=h$ and define $\alloc(v_i)=1$ for all $v_i\in\vvv$.

  We claim that $I$ is a yes-instance if and only if $I'$ is a yes-instance.

  Assume that  $I$ is a yes-instance, and  let $S_{j_1},\ldots,S_{j_h}$ be a set cover.
  Let $\W=\{\alt_{j_1},\ldots,\alt_{j_h}\}$.
  Then for every $i\in[\nn]$ we have $\W\cap \app_i\neq\emptyset$, hence
  $\scoreNTU{\CC}(v_i,\W)=1=\alloc(v_i)$.
  Since no voter can obtain utility strictly larger than~$1$ under \CC, the profile $\alloc$ cannot be blocked.
  Thus $\alloc\in\NTUcore{\CC}$ and $I'$ is a yes-instance.

  Assume that $I'$ is a yes-instance, and let $\W$ be a committee with $|\W|=\kk=h$ such that
  $\scoreNTU{\CC}(v_i,\W)=\alloc(v_i)=1$ for all $v_i\in\vvv$.
  Equivalently, $\W\cap \app_i\neq\emptyset$ for all $i\in[\nn]$.
  Therefore the $h$ sets corresponding to the alternatives in $\W$ cover $U$, and $I$ is a yes-instance.

   For \NP-membership, we use a committee $\W$ as a certificate.
   Let $\W$ be a committee with $|\W|=\kk$. We can verify in polynomial time that
   $\alloc(v_i)=\scoreNTU{\CC}(v_i,\W)$ for all $v_i\in\vvv$.
   By \cref{prop:CCJR}, it then suffices to check whether $\W$ satisfies JR, which can be done as follows.
   Let $\vvv_0\coloneqq\{v_i\in\vvv \mid \W\cap \app_i=\emptyset\}$ be the set of unrepresented voters.
   Then $\W$ violates JR if and only if there exists an alternative $\alt\in\aaa$ such that
   $|\VV(\alt)\cap \vvv_0|\ \ge\ \lceil |\vvv|/\kk\rceil$; recall that $\VV(\alt)\coloneqq\{v_i\in\vvv \mid \alt\in\app_i\}$.
   This condition can be checked in polynomial time by scanning all alternatives and counting approvals inside $\vvv_0$.
   Hence \MWGNTUCverif{\CC}\ lies in \NP, completing the proof.
\end{proof}

Note that \MWGNTUCverif{\CC} is polynomial-time solvable when a witnessing committee~$\W$ is included in the input, since the task reduces to verifying $\alloc(v_i)=\scoreNTU{\CC}(v_i,\W)$ for all $v_i$ and checking JR for $\W$.
We conclude by discussing the broader implications and open directions.

\section{Conclusion}\label{sec:conclude} 
We introduced \emph{multi-winner voting games}, a cooperative-game framework for multi-winner approval voting in which coalitions are constrained by proportional entitlements and outcomes are rule-induced utility functions.
The framework supports both \NTU\ and \TU\ interpretations, separating instability that is inherent to the committee feasibility constraints from instability that disappears once coalitions can redistribute utility internally.

    Our results demonstrate that transferability fundamentally reshapes the stability landscape in predictable ways.
    The stark contrast---\AV\ and \SAV\ are well-behaved under \TU\ but open under \NTU, \CC\ is the reverse, \PAV\ yields no positive existence guarantee under \TU, and remains open under \NTU---reflects a structural principle identified in the introduction.
    Under \TU, feasibility ties to winning committees, making rule-specific properties actionable; under \NTU, this link is severed.
    The Bondareva--Shapley approach~\cite{bondareva1963,shapley1967} applies broadly, to any rule where the total committee score is the sum of individual voter scores and each alternative's contribution is independent of which others are selected.
    For any such \totsep\ rule, fractional committees cannot outperform integral ones,
    so the \TU-core is always non-empty. If proportional entitlements are transferable, such rules thus provide robust stability.
    If entitlements are rigid, rule choice matters less than utility structure: \CC's binary scores enable the JR\ characterization, while richer score ranges resist known techniques.

On the computational side, we studied \CE\ and \CV\ for both utility models and four rules.
Even when stable outcomes exist and can be constructed, \CV\ can be challenging because blocking quantifies over exponentially many coalitions and admissible committees.
Our complexity classifications delineate tractable cases (notably \MWGTUCexist{\AV}, \MWGTUCexist{\SAV}, and \MWGNTUCexist{\CC}) from settings where \CE\ and/or \CV\ become computationally hard.

Three concrete open problems warrant attention.
First, the open non-emptiness problem for the \NTU-core under \AV/\SAV/\PAV\ remains a central challenge.
For \TU-\AV\ and \TU-\SAV, efficiency constrains core elements to derive from winning committees, which anchors both the Bondareva–Shapley non-emptiness proof and the combinatorial construction in \cref{alg:TU-core-totsep}---the latter of which required a non-trivial surplus-distribution argument even within this restricted search space.
For \NTU-\AV, \NTU-\SAV, and \NTU-\PAV, core-stable outcomes need not arise from winning committees (cf.\ \cref{ex1}),
so the search space expands to all $\binom{m}{\kk}$ committees in the worst case. %
The JR equivalence that resolves \NTU-\CC\ offers no leverage here, as it relies on the binary structure of \CC-utilities.
Among potential approaches, Sperner's lemma and related combinatorial-topological tools are natural candidates,
as they operate in discrete settings without requiring convexification.
Second, the \TU\ model suggests connections to market-based interpretations of proportional entitlements (e.g., prices or transferable compensation) and raises the question of which notions of fairness or distributional constraints are compatible with core stability.
Third, it would be useful to develop approximation and relaxation frameworks that interpolate between \NTU\ and \TU\ (e.g., limited transfers, bounded side payments, or randomized outcomes) and to understand whether they recover existence or improve computational tractability.
Finally, extending the framework beyond approval utilities---for instance to participatory budgeting or richer preference languages---may help identify the general principles determining when proportional entitlements yield stable outcomes.

\newpage{}

\newpage
\bibliographystyle{ACM-Reference-Format}
\bibliography{bib}

\end{document}